\documentclass[11pt, lmargin=1.25in, rmargin=1.25in]{article}

\RequirePackage{amsthm,amsmath,amsfonts,amssymb}
\RequirePackage[authoryear]{natbib} 
\RequirePackage[colorlinks,citecolor=blue,urlcolor=blue,breaklinks]{hyperref}

\usepackage{bm}
\usepackage{mathrsfs}  
\usepackage{appendix}
\usepackage{commath}
\usepackage{dsfont}
\usepackage{subfig}
\usepackage{xcolor}
\usepackage{mathtools}
\usepackage{enumitem}
\usepackage{fancyhdr} 
\usepackage{geometry}
\usepackage{caption}
\usepackage{booktabs}
\usepackage{multirow}
\usepackage{lscape}

\usepackage{setspace}
\newcommand{\N}{\mathbb{N}}
\newcommand{\R}{\mathbb{R}}
\newcommand{\eps}{\varepsilon}
\renewcommand{\P}{\mathbb{P}}
\newcommand\E{\mathbb{E}}
\newcommand{\mc}{\mathcal}

\theoremstyle{plain}
\newtheorem{theorem}{Theorem}[section]
\newtheorem{lemma}[theorem]{Lemma}
\newtheorem{corollary}[theorem]{Corollary}
\newtheorem{proposition}[theorem]{Proposition}

\theoremstyle{remark}
\newtheorem{remark}{Remark}[section]
\newtheorem{example}{Example}[section]

\newtheorem{assumption}{Assumption}[section]

\begin{document}

\title{
Robust Instrumental Variables: Sharp Rates and Inference under Adversarial Contamination
}

\author{
\begin{tabular}{c}
Anders Bredahl Kock\footnote{
Kock's research was supported by the European Research Council (ERC) grant number 101124535 -- HIDI (UKRI EP/Z002222/1).  He is also a member of, and grateful for support from, i) the Aarhus Center for Econometrics (ACE), funded by the Danish National Research Foundation grant number DNRF186,  and ii) the Center for Research in Energy: Economics and Markets (CoRE).} \\ 
\footnotesize	University of Oxford \\
\footnotesize Department of Economics\\
\footnotesize	10 Manor Rd, Oxford OX1 3UQ
\\
\footnotesize	{\footnotesize	\href{mailto:anders.kock@economics.ox.ac.uk}{anders.kock@economics.ox.ac.uk}} 
\end{tabular}
\begin{tabular}{c}
David Preinerstorfer \\ 
{\footnotesize WU Vienna University of Economics and Business} \\
{\footnotesize Institute for Statistics and Mathematics} \\
{\footnotesize Welthandelsplatz 1, 1020 Vienna} \\ 
{\footnotesize	 \href{mailto:david.preinerstorfer@wu.ac.at}{david.preinerstorfer@wu.ac.at}}
\end{tabular}
}

\date{Preliminary version: July 2026}

\maketitle	
\begin{abstract}
Because 2SLS is built from sample averages, a small number of observations can have a disproportionate effect on estimates and inference. We introduce W-2SLS, a simple drop-in robustification that replaces these averages by quantile-winsorized means. We analyze W-2SLS under adversarial contamination, which permits both the identities and the reported values of the contaminated observations to depend on the realized clean sample and therefore accommodates targeted or strategic manipulation. Under finite $m$-th moments, W-2SLS attains the minimax-sharp rate~$\eta_{n}^{1-\frac1m}+n^{-1/2}$, where~$\eta_n$ is the fraction of observations that may be altered. Matching lower bounds identify the exact contamination thresholds for uniform consistency, root-$n$ estimation, and centered Gaussian inference with the same first-order law as clean-sample 2SLS. When~$\sqrt{n}\eta_{n}^{1-\frac1m}\to 0$ robustness is first-order free. We also construct feasible heteroskedasticity-robust inference and a winsorized Anderson--Rubin test valid under weak identification and adversarial contamination. Finally, even without contamination, ordinary 2SLS can have poor uniform finite-sample concentration, whereas W-2SLS admits confidence-calibrated sub-Gaussian deviation guarantees.
\end{abstract}

\section{Introduction}
Instrumental variables methods are among the central tools of empirical economics, but the usual two-stage least squares (2SLS) estimator is built from ordinary sample averages of the IV moments.  This makes 2SLS vulnerable in precisely the situations in which empirical IV designs are often used: samples with high leverage observations, heavy-tailed outcomes or instruments, or a small number of influential observations.  The practical relevance of this fragility has been documented in~\cite{young2022consistency},  (cf.~also~\cite{young2019channeling}) who re-analyzed 1{,}309 2SLS regressions published in the AER and other AEA journals. He found that deleting just two (out of sometimes thousands of) observations can turn 58\% of 0.05 significant 2SLS results insignificant. Similarly, 57\% of 0.05 insignificant results can be rendered significant by deleting just two observations or clusters. The changes can be extraordinary, with~$p$-values moving from close to zero to close to one, and vice-versa.

Furthermore, the contamination or manipulation of the data may be strategic: In economics, the ``data-generating process'' often passes through firms, schools, hospitals, banks, governments, customs agents, or respondents whose \emph{payoff depends on what they report}. Many papers have observed the prevalence of potentially data-dependent strategic manipulation. For example, \cite{angrist2017small} observed how standardized test scores in Southern Italy were manipulated and that smaller class sizes increased this manipulation; \cite{angrist2019maimonides} revealed enrolment manipulation near Maimonides' rule cut-offs used to determine class sizes in Israeli schools, implying that the instrument itself can be contaminated; \cite{camacho2011manipulation} and \cite{miller2013risk} documented strategic manipulation exactly at eligibility thresholds for welfare programs in Colombia which was particularly pronounced around local elections; \cite{duflo2013truth} showed how auditors paid by firms systematically reported the firm's emissions just below the regulatory standard, although true emissions were typically higher. Further evidence for (strategically) contaminated data can be found in \cite{jacob2003rotten}, \cite{fisman2004tax}, \cite{dafny2005hospitals}, \cite{dee2019causes}, \cite{geruso2020upcoding}, \cite{martinez2022much}, \cite{brodeur2016star}.\footnote{\cite{brodeur2016star} is an example of the (well-known) fact that researchers themselves may have an incentive to misreport or manipulate their results in response to, e.g., publication incentives.}

These examples show that the manipulation is often not random, but can depend on the clean data (e.g.~the true test scores, true income levels, or true amounts of pollution), which is \emph{unobservable} to the researcher. In addition, the manipulated data can be perfectly plausible, perhaps to disguise the manipulation, and therefore need not be a visually obvious outlier. Therefore, it is desirable to have statistical procedures that are simultaneously i) robust to a broad range of potentially strategic contaminations of the data and ii) provably precise under the weakest possible assumptions on the amount of contamination. 

In this paper we study an instrumental variable setting under \emph{adversarial contamination}. Under adversarial contamination, which is formally introduced Section~\ref{sec:AdvContam}, an adversary is allowed to inspect the clean data and change up to a fraction~$\eta_n$ of the observations, with~$n$ being the sample size.\footnote{As the adversary is a statistical device rather than a literal malicious actor, there need not be \emph{one} adversary contaminating the data. For example, there could be multiple school principals altering grades to achieve a threshold passing rate in each of their school's.} The identity of the observations that are altered and the values that these are changed to can (but need not) depend on the clean data, which only the adversary observes.\footnote{For example, the adversary may manipulate observations so that as many test scores as possible fall above a desired threshold. To decide which scores to alter, the adversary must clearly inspect the full set of true test scores implying that this contamination rule is a function of the clean data.} The adversary then returns the contaminated data to the researcher. Any feasible statistical procedure is a function of the contaminated data. The fact that adversarial contamination allows the adversary to perform data-dependent manipulation is crucial to handle the above examples of strategic manipulation and sets it apart from the classic Huber contamination model in robust statistics. Due to this flexibility, adversarial contamination has become a popular robustness framework in machine learning and mathematical statistics, cf.~e.g., \cite{lai2016agnostic}, \cite{cheng2019high}, \cite{diakonikolas2019robust}, \cite{hopkins2020robust}, \cite{LM21}, \cite{minsker2021robust}, \cite{bhatt2022minimax}, \cite{depersin2022robust}, \cite{dalalyan2022all}, \cite{minasyan2023statistically}, \cite{diakonikolas2023algorithmic}, \cite{minsker2023efficient}, \cite{oliveira2025finite}. 

To achieve robustness to adversarial contamination, we propose replacing the fragile sample averages in the classic 2SLS estimator by the winsorized means recently studied in \cite{Wins1} where the winsorization points are carefully chosen order statistics of the contaminated data. Thus, the resulting winsorized (W-2SLS) estimator is a simple robustification  of the classic one. In a setting wherein the adversary is allowed to arbitrarily alter up to a fraction~$\eta_n$ (equivalently up to a number~$q_n=\eta_nn$) of the observations, we show that the W-2SLS estimator taking the \emph{contaminated} data as input is suitably close to the classic 2SLS estimator taking the \emph{clean} data as input, the distance depending on~$\eta_n$ and the number of moments~$m$ that the clean data possesses. Based on this, we show that W-2SLS 
\begin{itemize}
	\item is consistent if~$\eta_n\to 0$. We show that no (sequence of) estimator(s) can be uniformly consistent if~$\eta_n\not\to 0$.
	\item converges at rate~$\sqrt{n}$ if~$\sqrt{n}\eta_n^{1-\frac1m}$ is bounded. We show that no estimator can uniformly converge at rate~$\sqrt{n}$ if $\sqrt{n}\eta_n^{1-\frac1m}$ is unbounded. 
	\item is asymptotically normal with a mean of zero and the same covariance matrix as the infeasible 2SLS estimator based on the unobserved clean data if~$\sqrt{n}\eta_n^{1-\frac1m}\to 0$. We show that no estimator can satisfy this if~$\sqrt{n}\eta_n^{1-\frac1m}\not\to 0$.
	\end{itemize}
	Thus, we establish consistency,~$\sqrt{n}$-convergence, and standard Gaussian inference under the weakest possible conditions on the allowed fraction of contaminated observations. That is, we characterize exactly the boundary below which one can conduct inference under contamination \emph{as if} one had access to the clean data, but above which this is not possible. Put differently, we show when and how one gets robustness for free. 
	
	Special care is needed in constructing a  positive semi-definite consistent heteroskedasticity-robust estimator of the asymptotic covariance matrix (which is built from population moments of the clean data) based on the observed contaminated data. Specifically, one cannot directly use standard methods based on the empirical error terms and sample-average based constructions are clearly ruled out in the presence of contamination. We also provide a robustifed version of the Anderson-Rubin (AR) test that replaces sample averages by suitably winsorized averages. This winsorized AR-test is shown to be valid under no/weak identification, adversarial contamination, and heteroskedasticity if~$\sqrt{n}\eta_n^{1-\frac1m}\to 0$.
	
	Finally, we study the finite sample (fixed~$n$) deviation properties of the 2SLS and W-2SLS estimators in a setting of bounded second moments and strong identification. We show that even absent contamination, the 2SLS estimator must have a uniform deviation radius that can be very large by establishing lower bounds on it. This means that any interval of half-length~$t$ centered at the 2SLS estimator that contains the true parameter with probability at least~$1-\delta$, must have $t=t_\delta$ growing as rapidly as $\sqrt{\delta^{-1}}$ in~$\delta$ as~$\delta\to 0$. This fact is driven by the poor concentration properties of sample averages already exhibited in~\cite{catoni2012challenging}. Our finite-sample study provides a theoretical foundation for the observation in~\cite{young2022consistency}  that ``2SLS may be prized for its asymptotic consistency, but in finite samples it often allows for very little inference.'' On the other hand, we show that the uniform deviation radius of the W-2SLS grows much slower, $\sqrt{\log(\delta^{-1})}$, in~$\delta$ as~$\delta\to 0$ than the lower bound on that of the 2SLS estimator. Thus, intervals centered at W-2SLS containing the true parameter with probability at least~$1-\delta$ can be much shorter than those based on classic 2SLS. 
	
	\subsection{Related literature}
Classical robust IV and GMM work develops resistant, bounded-influence, or high-resilience procedures; see \cite{krasker1985resistant}, \cite{krasker1986two}, \cite{krishnakumar1997robust}, \cite{wagenvoort2002b}, \cite{kim2007two}, and \cite{zhelonkin2012robustness}. \cite{ronchetti2001robust} quantify local asymptotic distortions of the level and power of GMM tests, \cite{vcivzek2016generalized} develops a high-breakdown generalized method of trimmed moments, and \cite{kitamura2013robustness} construct an asymptotically minimax-robust and model-efficient estimator under shrinking neighborhoods of moment-restriction models. \cite{cohen2013natural} is closest in terms of  estimator construction: it robustifies ordinary IV using multivariate location and scatter $S$-estimates and establishes high resilience, bounded influence, consistency, and asymptotic normality. \cite{jiao2019simple} studies residual-trimmed 2SLS and compares its efficiency with ordinary 2SLS.

These papers use different notions of robustness. Local-neighborhood analyses are distributional, so they do not provide guarantees uniform over contamination rules that inspect the realized clean sample before choosing the corrupted observations and their replacement values. High-breakdown analyses allow worst-case replacement when defining the breakdown point, but the cited consistency and asymptotic-normality results are established under clean sampling. None of the papers provides estimation-error or coverage guarantees in the \emph{presence of data-dependent} contamination.

\cite{klooster2024outlier} and \cite{klooster2024resistant} construct bounded-influence AR, K, and CLR procedures. Their asymptotic distributions are established absent contamination and they study local contamination through influence functions and simulations. \cite{solvsten2020robust} studies a distinct many-instrument problem, combining LIML with winsorization of structural residuals and minimizing worst-case asymptotic variance over a contaminated-normal neighborhood.

\cite{rohatgi2022robust} and \cite{forneron2023occasionally} are closest in allowing adversarial contamination and providing quantitative contaminated-sample guarantees. For linear IV, \cite{rohatgi2022robust} obtain an estimation error bound under strong identification, covariance, hypercontractivity, eighth-moment, and conditional-noise restrictions. Their stated guarantee does not achieve a $n^{-1/2}$ rate. The implementation of their algorithm requires valid supplied identification, scale, parameter-radius, and strictly positive contamination-budget bounds. They do not cover Gaussian or studentized inference, matching lower bounds, or weak-identification-robust inference. Imposing a polynomial envelope on outlying moment contributions, \cite{forneron2023occasionally} obtains finite-sample bounds and oracle-equivalent asymptotic normality for adversarially contaminated GMM, including IV, while allowing dependent observations.

Our IV focus yields a sharper characterization covering both estimation and inference. Under finite $m$-th moments, W-2SLS is a closed-form, drop-in modification of ordinary 2SLS attaining an estimation-error rate of $n^{-1/2}+\eta_n^{1-1/m}$ under adversarial contamination with unrestricted replacement values. Our matching lower bounds identify the exact contamination thresholds for uniform consistency, uniform root-$n$ consistency, and centered Gaussian inference with the same first-order law as clean-sample 2SLS. We also provide feasible heteroskedasticity-robust inference, develop a winsorized Anderson-Rubin test valid under weak identification and adversarial contamination, establish non-existence of honest confidence sets adaptive to an unknown contamination fraction, and show a finite-sample concentration advantage of W-2SLS over 2SLS even absent contamination. Thus, the paper characterizes not only how to robustify 2SLS, but the exact statistical price of sample-adaptive contamination for estimation and inference. 	

\section{Adversarial contamination and winsorized 2SLS}\label{sec:AdvContam}
 Consider a setting with a clean real outcome~$y_i$, a~$K\times 1$ vector of potentially endogenous clean regressors~$x_i$ and an~$L\times 1$ vector of clean instruments~$z_i$ for~$i=1,\hdots,n$ and~$L\geq K$.\footnote{All random variables are defined on a measurable space $(\Omega,\mathcal F)$. The probability measure on this space is denoted by $\P$, and expectations with respect to $\P$ are denoted by $\E$.} Throughout,~$K$ and~$L$ are fixed natural numbers and~$n\to\infty$ in all asymptotic statements. We are interested in estimation and inference on~$\beta\in\R^K$ in 
\begin{equation}\label{eq:multivariate}
y_i=x_i'\beta +u_i,\qquad i=1,\hdots,n.	
\end{equation}  
Writing $V_i=(y_i,x_i', z_i')'\in\R^{1+K+L}$,  an adversary inspects the clean sample~$V=(V_1,\hdots,V_n)$ and returns a contaminated sample~$\tilde{V}_i=(\tilde{y}_i,\tilde{x}_i', \tilde{z}_i')'\in\R^{1+K+L},\ i=1,\hdots,n$ to the researcher. We stress that the researcher only observes the contaminated sample~$\tilde{V}=(\tilde{V}_1,\hdots,\tilde{V}_n)$, but not clean~$V$. Thus any (feasible) statistical procedure takes the contaminated sample~$\tilde{V}$ as input. The \emph{identity} of the corrupted observations 
\begin{equation*}
\mathcal{O}=\mathcal{O}(V):=\cbr[1]{i\in\cbr[0]{1,\hdots,n}:\tilde{V}_i\neq V_i},
\end{equation*}
as well as their \emph{values}, i.e.,~the value of~$\cbr[0]{\tilde{V}_i}_{i\in \mathcal{O}}$, can (but need not) depend on the uncontaminated~$V$ and any other variables (e.g.,~external randomization) that the adversary but not necessarily the researcher observes.\footnote{The adversary is also allowed to contaminate the error terms~$u_i$ for~$i=1,\hdots,n$. However, since any statistical procedure can be a function of only the \emph{observed}~$\tilde{V}_i=(\tilde{y}_i,\tilde{x}_i, \tilde{z}_i)$, this is of no relevance.} In particular,~$\mathcal{O}$ can be a random subset of~$\cbr[0]{1,\hdots,n}$. We stress that the adversary is a device for defining a worst-case guarantee and need not represent a literal malicious actor. This \emph{adversarial contamination} imposes no restrictions on \emph{how} the observations are contaminated and is thus well suited to economic data, where contamination need not be random: influential observations may reflect selective reporting, strategic mismeasurement, coding choices, survey manipulation, or institutional features that affect precisely those units with high leverage. In contrast to the classical Huber model, which treats contamination as arising from a fixed mixture mechanism, adversarial contamination allows both the identities and the values of the contaminated observations to depend on the clean sample~$V$.

Although no assumptions are imposed on \emph{how} observations are contaminated, we restrict \emph{how many} observation can be contaminated. To be precise, we assume that at most~$\eta_n n$ of~$V_1,\hdots,V_n$ are changed, that is
\begin{equation}\label{eq:contamfrac}
|\mathcal{O}(V)|\leq \eta_n n,
\end{equation}
where~$\eta_n \in [0, 1]$ is non-random.  We stress that $\eta_n$ is a researcher-specified \emph{robustness budget}, not the unknown realized fraction~$|\mc{O}|/n$ of altered observations. Equivalently, the robustness budget may be specified as an integer $q_n$: setting~$\eta_n=q_n/n$ protects against arbitrary alterations of at most $q_n$ observations. 
  \begin{example}[Test score manipulation]
  	Suppose that $y_i$ is  the uncontaminated test score of individual~$i$, $c$ is the passing threshold, and an
institution wishes to attain a pass rate of at least $\rho\in(0,1)$. Such incentives can generate manipulation concentrated just below consequential
score thresholds; see \cite{dee2019causes} and, in an IV setting, \cite{angrist2017small}. Let $q_n=\lfloor\eta_n n\rfloor$  and (for~$x_+:=\max(x,0)$ for~$x\in\R$)
\begin{equation*}
m(V):=\min\cbr[4]{
q_n,\del[3]{\lceil\rho n\rceil-
\sum_{i=1}^n\{y_i\geq c\}}_+
}.	
\end{equation*}
Let $O(V)$ contain the $m(V)$ observations with $y_i<c$ closest to $c$,
with ties broken by the index, and define
\begin{align*}
\widetilde V_i=
\begin{cases}
(c,x_i',z_i')',& i\in O(V),\\
V_i,& i\notin O(V).	
\end{cases}
\end{align*}

Thus, whether manipulation occurs, how many scores are changed, and which scores
are selected depend on the clean sample $V$. Moreover,
$|O(V)|=m(V)\leq q_n\leq\eta_n n$, so \eqref{eq:contamfrac} is satisfied.
The example illustrates that adversarially contaminated observations need not be conventional outliers: the manipulated scores are placed exactly at, or just above, an institutionally relevant threshold~$c$.
\end{example}

It is tempting to require~$\eta_n$ in~\eqref{eq:contamfrac} large in order to be robust against many contaminated observations. However, Theorem~\ref{thm:NoAdapt} below shows that confidence sets for~$\beta$ with uniform coverage guarantees must have a diameter that depends on the largest fraction of contaminated observations~$\eta_n$ that one wishes the confidence sets to maintain coverage guarantees under. Thus, \emph{any} procedure must explicitly or implicitly take a stance on~$\eta_n$ --- the usual clean-data validity guarantees for 2SLS or the Anderson-Rubin test correspond to the benchmark~$\eta_n=0$. 

The conditions we impose on~$\eta_n$ below depend on the statistical objective: consistency of an estimator,~$\sqrt{n}$-convergence, or first-order Gaussian inference with the same law as under clean sampling. Section~\ref	{sec:Opt} proves that for each of these, the assumptions that we impose on~$\eta_n$ are the weakest possible by providing matching lower bounds. Finally, we stress that, under the conditions of Theorem \ref{thm:limdist}, protection against any fixed number~$q_n$ of contaminated observations is asymptotically \emph{free} at first order, which is already relevant in light of~\cite{young2022consistency}.
 
\subsection{The winsorized 2SLS estimator} 
The classic 2SLS estimator of~$\beta$ in~\eqref{eq:multivariate} is (cf., e.g.,~Chapter 5 in \cite{wooldridge2010econometric}) 
\begin{align}\label{eq:2SLSM}
	\hat{\beta}_{n,\text{2SLS}}
	=
	\hat{\beta}_{n,\text{2SLS}}(V)
	=
	\del[3]{\frac{X'Z}{n}\sbr[2]{\frac{Z'Z}{n}}^{-1}\frac{Z'X}{n}}^{-1}\frac{X'Z}{n}\sbr[2]{\frac{Z'Z}{n}}^{-1}\frac{Z'y}{n},
\end{align}
where~$y=(y_1,\hdots,y_n)'$,~$X=(x_1,\hdots, x_n)'$, and~$Z=(z_1,\hdots,z_n)'$ (grant all expressions are well-defined). Observe that $\hat{\beta}_{n,\text{2SLS}}$ has been defined as a function of the \emph{uncontaminated} data~$V$. Since it is based on plain sums, it is highly non-robust as these sums can be manipulated to take any value by changing just \emph{one} of the summands. We now introduce the winsorized 2SLS estimator and provide conditions under which it is close to the infeasible~$\hat{\beta}_{n,\text{2SLS}}$.

Our winsorized 2SLS estimator simply replaces the non-robust sample averages entering the definition of the 2SLS estimator in~\eqref{eq:2SLSM} by quantile-winsorized means. To make this precise, for real numbers~$s_1,\hdots,s_n$, denote by~$s_1^*\leq \hdots\leq s_n^*$ their non-decreasing rearrangement. Let~$-\infty<\alpha\leq\beta<\infty$ and define~$\phi_{\alpha,\beta}:\R\to [\alpha,\beta]$ as
\begin{equation}\label{eq:winsor}
\phi_{\alpha,\beta}(s)
:=
\begin{cases}
\alpha\qquad \text{if }s<\alpha\\
s\qquad \text{if }s\in[\alpha,\beta]\\
\beta\qquad \text{if }s>\beta.
\end{cases}
\end{equation} 
Now, for~$\eps_n\in(0,1/2]$, define the quantile-winsorized mean~$\mu_{\eps_n}:\R^n\to [\alpha,\beta]$ as
\begin{equation}\label{eq:winsfunc}
\mu_{\eps_n}(s_1,\hdots,s_n)
=
\frac{1}{n}\sum_{i=1}^{n}
\phi_{s_{(\lceil \eps_n n\rceil)}^*,s_{(\lfloor(1-\eps_n)n\rfloor)+1}^*}(s_i).	
\end{equation}
Thus,~$\mu_{\eps_n}(s_1,\hdots,s_n)$ is the ordinary arithmetic mean after replacing observations below the empirical $\eps_n$-quantile by that quantile and observations above the empirical $(1-\eps_n)$-quantile by that quantile. Given the contaminated sample~$\tilde{V}$, let~$W_{ZX}$,~$W_{ZZ}$, and~$W_{Zy}$ be the~$L\times K$,~$L\times L$ and~$L\times 1$ matrices with entries
\begin{align*}
	W_{ZX,l,k}&=\mu_{\eps_n}\del[1]{\tilde{z}_{1,l}\tilde{x}_{1,k},\hdots,\tilde{z}_{n,l}\tilde{x}_{n,k}},\qquad
	W_{ZZ,l,j}=\mu_{\eps_n}\del[1]{\tilde{z}_{1,l}\tilde{z}_{1,j},\hdots,\tilde{z}_{n,l}\tilde{z}_{n,j}},\\
	W_{Zy,l}&=\mu_{\eps_n}\del[1]{\tilde{z}_{1,l}\tilde{y}_{1},\hdots,\tilde{z}_{n,l}\tilde{y}_{n}},\qquad\text{for}\qquad l,j=1,\hdots,L\text{ and }k=1,\hdots,K
\end{align*}
 The winsorized 2SLS estimator now replaces the non-robust~$Z'X/n$,~$Z'Z/n$ and~$Z'y/n$ entering construction of~$\hat{\beta}_{n,\text{2SLS}}$ by~$W_{ZX}$,~$W_{ZZ}$, and~$W_{Zy}$, respectively, that is:\footnote{The proof of Proposition~\ref{prop:IVClose} establishes that~$W_{ZZ}$ and~$W_{ZX}'[W_{ZZ}]^{-1}W_{ZX}$ are invertible for~$n$ sufficiently large. We set~$\hat{\beta}_{n,\text{W-2SLS}}=0_K$ when either of $W_{ZZ}$ and~$W_{ZX}'[W_{ZZ}]^{-1}W_{ZX}$ is not invertible.}
 \begin{equation}\label{eq:W2SLSM}
	\hat{\beta}_{n,\text{W-2SLS}}
	=
	\hat{\beta}_{n,\text{W-2SLS}}(\tilde{V})
	=
	\del[1]{W_{ZX}'[W_{ZZ}]^{-1}W_{ZX}}^{-1}W_{ZX}'[W_{ZZ}]^{-1}W_{Zy}.
\end{equation}
$\hat{\beta}_{n,\text{W-2SLS}}$ has exactly the same structure as~$\hat{\beta}_{n,\text{2SLS}}$, but uses the robust winsorized~$\mu_{\eps_n}$ instead of sample averages and takes the contaminated~$\tilde{V}$ as input instead of the clean~$V$.
  
When allowing up to~$\eta_n n$ observations to be contaminated, it is clear that even~$\hat{\beta}_{n,\text{W-2SLS}}$ can perform poorly unless at least the smallest and largest~$\eta_n n$ observations are winsorized in~\eqref{eq:winsfunc}. Thus, one must choose~$\eps_n> \eta_n$, implying in particular that~$\eta_n\leq1/2$ must hold. It is also clear that winsorizing substantially more than~$\eta_nn$ observations is wasteful and we shall establish that~$\hat{\beta}_{n,\text{W-2SLS}}$ with
\begin{equation}\label{eq:epsfam}
\eps_n=
\eps_n(\eta_n)
:=
1.01\cdot \eta_n +\lambda\cdot \frac{\log(n)}{n}
,\qquad \text{for fixed }\lambda\in (0,\infty)
\end{equation} 
has several optimality properties. Since~$\eta_n\to 0$ in all asymptotic statements, it holds that~$\eps_n\to 0$ implying that~$\eps_n\in(0,1/2]$ eventually. 
\begin{assumption}\label{ass:sequence}
$L\geq K$ and the clean data satisfies:
	\begin{enumerate}
		\item $\cbr[0]{x_i,z_i,u_i}_{i\in\N}$ is i.i.d.~satisfying~\eqref{eq:multivariate};
		\item $\E (|z_{1,l}z_{1,j}|^m)<\infty$,~$\E(|z_{1,l}x_{1,k}|^m)<\infty$, and $\E (|u_1z_{1,l}|^m)<\infty$ for~$l,j=1,\hdots,L$,~$k=1,\hdots,K$, and some~$m>1$;		
		\item $\text{rank}[\E(z_1z_1')]=L$ and~$\text{rank}[\E(z_1x_1')]=K$.
	\end{enumerate}	
\end{assumption}
 Assumption~\ref{ass:sequence} imposes classic textbook assumption used to establish positive results on the clean-data 2SLS estimator, cf., e.g.,~\cite{wooldridge2010econometric}. In Theorem~\ref{thm:estim} and onwards, we also impose~$\E(z_{1}u_1)=0$, but Proposition \ref{prop:IVClose} does not need it.

\subsection{When is robustness for free?}\label{sec:WinsMain}
We first show that the winsorized 2SLS estimator~$\hat{\beta}_{n,\text{W-2SLS}}(\tilde{V})$ based on the \emph{contaminated} data is close to the infeasible ordinary 2SLS estimator~$\hat{\beta}_{n,\text{2SLS}}(V)$ based on the \emph{clean} data. Let~$||\cdot||$ denote the Euclidean norm on~$\R^K$.
\begin{proposition}\label{prop:IVClose}
	Suppose that Assumption~\ref{ass:sequence} is satisfied and let~$\eps_n$ be chosen as in~\eqref{eq:epsfam}. Then, if~\eqref{eq:contamfrac} is satisfied with~$\eta_n\to 0$ as~$n\to\infty$, it holds that
	\begin{equation*}
		\enVert[1]{\hat{\beta}_{n,\text{W-2SLS}}-\hat{\beta}_{n,\text{2SLS}}}
		=
		O_\P\del[3]{\eta_n^{1-\frac{1}{m}}+\sbr[2]{\frac{\log(n)}{n}}^{1-\frac{1}{m}}}
		=
		o_\P(1).
	\end{equation*}
		\end{proposition}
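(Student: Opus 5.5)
The plan is to reduce the statement to a scalar result: a winsorized mean computed from a contaminated sample is close to the clean sample average. I would then propagate this through the smooth map that defines 2SLS. I take the scalar result to be the main finding of \cite{Wins1}. If it is not directly available in this form, it has to be proved, and that proof is the core of the argument.

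\textbf{Scalar step.} Let $w_1,\hdots,w_n$ be i.i.d.\ with $\E|w_1|^m<\infty$, $m>1$. Let $\tilde w_i$ differ from $w_i$ for at most $\eta_n n$ indices, where the set of altered indices and their values may depend on the whole clean sample. With $\eps_n$ as in \eqref{eq:epsfam}, the target bound is
\begin{equation*}
\Big|\mu_{\eps_n}(\tilde w_1,\hdots,\tilde w_n)-\frac1n\sum_{i=1}^n w_i\Big| = O_\P\big(\eps_n^{1-\frac1m}\big).
\end{equation*}
Since $\eps_n^{1-1/m}\lesssim \eta_n^{1-1/m}+[\log(n)/n]^{1-1/m}$, this is the rate in the statement.

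The proof of the scalar step goes as follows.
\begin{enumerate}
\item Because $\eps_n n\geq 1.01\eta_n n+\lambda\log n$, the fraction $\eps_n$ exceeds the contamination fraction. Hence the contaminated order statistics used as winsorization points are sandwiched between clean order statistics with ranks roughly $(\eps_n-\eta_n)n$ and $(\eps_n+\eta_n)n$, that is, within a constant multiple of $\eps_n n$.
\item The extra $\lambda\log n$ term makes $\eps_n n\to\infty$. A Chernoff or binomial bound then shows that, with probability tending to one, these clean order statistics lie between population quantiles $Q(c\eps_n)$ and $Q(1-c\eps_n)$ for constants $c$.
\item By Markov's inequality with $m$ moments, $|Q(c\eps_n)|,|Q(1-c\eps_n)|\lesssim \eps_n^{-1/m}$.
\item Every contaminated point is clipped to the winsorization interval. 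So each corrupted index changes the average by at most $O(\eps_n^{-1/m})/n$, plus the size of its clean value once clipped. Summed over $\eta_n n$ indices this gives $O(\eta_n\eps_n^{-1/m})\leq O(\eps_n^{1-1/m})$.
\item Winsorizing the clean points changes the average by $n^{-1}\sum_i |w_i|\mathbf 1\{|w_i|\gtrsim Q\text{-level}\}$ restricted to the top $O(\eps_n n)$ order statistics. Bounding the sum of the largest $k$ order statistics via the $m$-th moment, using H\"older: $\frac1n\sum_{\text{top }k}|w_i|\leq (k/n)^{1-1/m}(\frac1n\sum|w_i|^m)^{1/m}$, gives $O_\P(\eps_n^{1-1/m})$.
\end{enumerate}
The H\"older bound in the last item is robust to data dependence because it holds deterministically for any set of size $k$. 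I expect this step, namely handling an adversarial choice of indices uniformly, to be the main obstacle; the H\"older trick is the tool I would try first.

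\textbf{Matrix step.} Apply the scalar step entrywise to the products $z_{il}x_{ik}$, $z_{il}z_{ij}$ and $z_{il}y_i$. Each has $m$ moments by Assumption~\ref{ass:sequence}. For $z_{il}y_i=z_{il}x_i'\beta+z_{il}u_i$ this follows from Minkowski's inequality. Since $K$ and $L$ are fixed, this yields
\begin{equation*}
\|W_{ZX}-Z'X/n\|+\|W_{ZZ}-Z'Z/n\|+\|W_{Zy}-Z'y/n\|=O_\P(r_n),\qquad r_n=\eta_n^{1-\frac1m}+[\log(n)/n]^{1-\frac1m}.
\end{equation*}
By the weak law of large numbers, $Z'Z/n\to \E z_1z_1'$ and $Z'X/n\to\E z_1x_1'$ in probability. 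Also $Z'y/n=O_\P(1)$. These limits have full rank $L$ and $K$ respectively.

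\textbf{Conclusion.} Since $r_n\to0$, with probability tending to one both $W_{ZZ}$ and $W_{ZX}'W_{ZZ}^{-1}W_{ZX}$ are invertible, with inverses bounded in probability. On that event the map $(A,B,c)\mapsto (A'B^{-1}A)^{-1}A'B^{-1}c$ is locally Lipschitz near the limits. Equivalently, one can use the identity $M_1^{-1}-M_2^{-1}=M_1^{-1}(M_2-M_1)M_2^{-1}$ repeatedly. This gives $\|\hat\beta_{n,\text{W-2SLS}}-\hat\beta_{n,\text{2SLS}}\|=O_\P(r_n)=o_\P(1)$. On the complementary event, where the estimator is set to $0_K$, the probability is $o(1)$, so it does not affect the conclusion.
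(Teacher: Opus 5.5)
Your proposal is correct and has the same architecture as the paper's proof. It uses a scalar comparison of the contaminated winsorized mean with the clean sample average, applied entrywise, followed by a mean value inequality for $(A,B,c)\mapsto(A'B^{-1}A)^{-1}A'B^{-1}c$ on a compact convex neighbourhood of the population moments. The paper proves the scalar comparison itself as Lemma~\ref{lem:wins_to_arithm_mean}. Your hedge about \cite{Wins1} is warranted: its Theorem~2.1, as used in this paper, compares with the population mean, and for $m>2$ routing through the population mean would lose the stated rate. Your steps 1--4 are exactly the paper's treatment of the term $n^{-1}\sum_i[\phi_{\hat\alpha,\hat\beta}(\tilde w_i)-\phi_{\hat\alpha,\hat\beta}(w_i)]$.

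The only real difference is step 5.
\begin{itemize}
\item The paper uses $\hat\alpha\le Q_{c_2\eps_n}(w_1)$ and $\hat\beta\ge Q_{1-c_2\eps_n}(w_1)$ to dominate $n^{-1}\sum_i|\phi_{\hat\alpha,\hat\beta}(w_i)-w_i|$ by an i.i.d.\ sum with non-random thresholds. It bounds that sum's expectation by H\"older at the population level and then applies Markov.
\item You instead count the clipped clean points, of which there are deterministically at most $2(\eps_n+\eta_n)n$, and apply H\"older to the sample.
\end{itemize}

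Your version is valid after one fix. The bound $|\phi_{\hat\alpha,\hat\beta}(w_i)-w_i|\le|w_i|$ can fail, e.g.\ when $\hat\beta<0<w_i$. Use $|\phi_{\hat\alpha,\hat\beta}(w_i)-w_i|\le|w_i-\mu|+\max(|\hat\alpha-\mu|,|\hat\beta-\mu|)$ instead. The extra piece then contributes $O(\eps_n)\cdot O_\P(\eps_n^{-1/m})$.

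Your counting idea actually buys more than you use. At least $(\eps_n-\eta_n)n\ge(0.01/1.01)\eps_n n$ clean observations lie beyond each winsorization point, and at least $(1-\eps_n-\eta_n)n$ lie on the other side. This gives, deterministically, $\max(|\hat\alpha-\mu|,|\hat\beta-\mu|)\lesssim\eps_n^{-1/m}(n^{-1}\sum_i|w_i-\mu|^m)^{1/m}$. The whole scalar bound then holds deterministically up to the factor $(n^{-1}\sum_i|w_i-\mu|^m)^{1/m}=O_\P(1)$, so your steps 2--3 become unnecessary. The paper's route, in exchange, reuses the quantile machinery of \cite{Wins1} that it needs for its population-level results anyway.
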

Proposition~\ref{prop:IVClose} shows that the difference between~$\hat{\beta}_{n,\text{W-2SLS}}$ and~$\hat{\beta}_{n,\text{2SLS}}$ converges to zero in probability when~$\eta_n\to 0$. Because
\begin{equation*}
\enVert[1]{\hat{\beta}_{n,\text{W-2SLS}}-\beta}
\leq
\enVert[1]{\hat{\beta}_{n,\text{W-2SLS}}-\hat{\beta}_{n,\text{2SLS}}}+\enVert[1]{\hat{\beta}_{n,\text{2SLS}}-\beta},	\end{equation*} 
and
\begin{equation*}
\enVert[1]{\hat{\beta}_{n,\text{2SLS}}-\beta}
\leq
\enVert[1]{\hat{\beta}_{n,\text{2SLS}}-\hat{\beta}_{n,\text{W-2SLS}}}+\enVert[1]{\hat{\beta}_{n,\text{W-2SLS}}-\beta},	\end{equation*}
it follows that~$\hat{\beta}_{n,\text{W-2SLS}}(\tilde{V})$ is consistent in the presence of a moderate amount of contamination ($\eta_n\to 0$) \emph{if and only if}~$\hat{\beta}_{n,\text{2SLS}}(V)$ is consistent on the clean sample. Thus, in the sense of consistency, one obtains robustness of the winsorized 2SLS estimator for free if~$\eta_n\to 0$. 

The following theorem also imposes~$\E(z_{1}u_1)=0$ and shows that not only the probability limit, but also the rate of convergence of~$\hat{\beta}_{n,\text{W-2SLS}}$ to~$\beta$ is the same $\sqrt{n}$-rate as that of the infeasible~$\hat{\beta}_{n,\text{2SLS}}$ if~$\eta_n^{1-\frac1m}\lesssim n^{-1/2}$.\footnote{For non-negative real sequences~$(a_n)_{n\in\N}$ and~$(b_n)_{n\in\N}$,~$a_n\lesssim b_n$ means that there exists a~$C>0$ such that~$a_n\leq Cb_n$ for~$n$ suffiently large.} 
\begin{theorem}\label{thm:estim}
Suppose that Assumption~\ref{ass:sequence} is satisfied for some~$m>2$ and let~$\eps_n$ be chosen as in~\eqref{eq:epsfam}. Suppose that~$\E(z_{1}u_1)=0$. Then, if~\eqref{eq:contamfrac} is satisfied with~$\eta_n\to 0$ as~$n\to\infty$, it holds that
\begin{equation*}
\enVert[1]{\hat{\beta}_{n,\text{W-2SLS}}-\beta}
=
O_\P\del[2]{\eta_n^{1-\frac{1}{m}}+n^{-1/2}}
=
o_\P(1).	
\end{equation*}	
\end{theorem}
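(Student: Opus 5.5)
The plan is to combine Proposition~\ref{prop:IVClose} with the standard $\sqrt{n}$-consistency of the infeasible clean-sample 2SLS estimator. By the triangle inequality displayed after Proposition~\ref{prop:IVClose},
\begin{equation*}
\enVert[1]{\hat{\beta}_{n,\text{W-2SLS}}-\beta}
\leq
\enVert[1]{\hat{\beta}_{n,\text{W-2SLS}}-\hat{\beta}_{n,\text{2SLS}}}+\enVert[1]{\hat{\beta}_{n,\text{2SLS}}-\beta}.
\end{equation*}
It therefore suffices to bound each term separately.

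For the second term, I will use the textbook argument. By the weak law of large numbers under Assumption~\ref{ass:sequence}, $Z'Z/n\to\E(z_1z_1')$ and $Z'X/n\to\E(z_1x_1')$ in probability. The rank conditions then ensure that the relevant inverses exist with probability tending to one and that the matrix factors in~\eqref{eq:2SLSM} are $O_\P(1)$. On that event,~\eqref{eq:multivariate} gives
\begin{equation*}
\hat{\beta}_{n,\text{2SLS}}-\beta=\del[2]{\tfrac{X'Z}{n}[\tfrac{Z'Z}{n}]^{-1}\tfrac{Z'X}{n}}^{-1}\tfrac{X'Z}{n}[\tfrac{Z'Z}{n}]^{-1}\tfrac{Z'u}{n}.
\end{equation*}
Since $\E(z_1u_1)=0$ and $m>2$ gives $\E|u_1z_{1,l}|^2<\infty$, Chebyshev's inequality (or the CLT) yields $Z'u/n=O_\P(n^{-1/2})$. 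Hence $\enVert[1]{\hat{\beta}_{n,\text{2SLS}}-\beta}=O_\P(n^{-1/2})$.

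For the first term, Proposition~\ref{prop:IVClose} gives $O_\P(\eta_n^{1-1/m}+[\log(n)/n]^{1-1/m})$. The only remaining step is to absorb the logarithmic term into $n^{-1/2}$. Because $m>2$, we have $1-1/m>1/2$, so
\begin{equation*}
[\log(n)/n]^{1-1/m}=n^{-1/2}\cdot(\log n)^{1-1/m}n^{-(1/2-1/m)}\to 0\cdot n^{-1/2},
\end{equation*}
that is, $[\log(n)/n]^{1-1/m}=o(n^{-1/2})$. Combining the two bounds gives $O_\P(\eta_n^{1-1/m}+n^{-1/2})$, and this is $o_\P(1)$ because $\eta_n\to0$.

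There is no serious obstacle, since the heavy lifting is done in Proposition~\ref{prop:IVClose}. The one point requiring care is the convention in the footnote, which sets $\hat{\beta}_{n,\text{W-2SLS}}=0_K$ on non-invertibility events, together with the analogous issue for the clean 2SLS estimator. I handle this by noting that these events have probability tending to zero, which is enough for $O_\P$ statements. The strict inequality $m>2$ is exactly what makes the $\log$ term negligible relative to $n^{-1/2}$, and it also supplies the finite second moments needed for the clean-sample rate.
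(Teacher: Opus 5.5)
Your proposal is correct and follows the paper's proof essentially verbatim. The paper also uses the triangle inequality, Proposition~\ref{prop:IVClose} for the gap between winsorized and clean 2SLS, the standard $O_\P(n^{-1/2})$ rate for clean 2SLS, and $m>2$ to absorb $[\log(n)/n]^{1-1/m}$ into $n^{-1/2}$. You give more detail than the paper on the clean-sample rate (the law of large numbers, Chebyshev's inequality for $Z'u/n$, and the non-invertibility events), which the paper simply asserts.
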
 
Theorem~\ref{thm:estim} shows that absent contamination, i.e.~when~$\eta_n=0$, $\hat{\beta}_{n,\text{W-2SLS}}$ has the same~$\sqrt{n}$-convergence rate as the infeasible~$\hat{\beta}_{n,\text{2SLS}}$. More importantly, this convergence rate is achieved as long as~$\eta_n^{1-\frac1m} \lesssim n^{-1/2}$. Thus, below this threshold one obtains robustness for free in the rate-of-convergence sense. If~$\eta_n^{1-\frac1m} \gtrsim n^{-1/2}$, then~$\hat{\beta}_{n,\text{W-2SLS}}$ is still consistent at rate~$\eta_n^{1-\frac1m}$ if~$\eta_n\to 0$. Theorem~\ref{thm:contamLB} below provides the corresponding uniform lower bound showing that \emph{no} estimator can have a uniform convergence rate with a better dependence on~$\eta_n$ than~$\eta_n^{1-\frac1m}$. Since the usual parametric~$n^{-1/2}$ rate is already unimprovable in the uncontaminated scalar location model, the rate~$\eta_n^{1-\frac{1}{m}}+n^{-1/2}$ achieved by~$\hat{\beta}_{n,\text{W-2SLS}}$ is minimax sharp. In particular, no estimator can be uniformly consistent if~$\limsup_{n\to\infty}\eta_n>0$ and the condition~$\eta_n\to 0$ in Theorem~\ref{thm:estim} is therefore necessary for the existence of a uniformly consistent estimator.        

Proposition~\ref{prop:IVClose} with~$m>2$ and~$\sqrt{n}\eta_n^{1-\frac1m}\to 0$ also implies the asymptotic equivalence
\begin{equation*}
\enVert[1]{\sqrt{n}(\hat{\beta}_{n,\text{W-2SLS}}-\hat{\beta}_{n,\text{2SLS}})}=o_\P(1).	
\end{equation*}
Hence, $\sqrt{n}\del[1]{\hat{\beta}_{n,\text{W-2SLS}}-\beta}$ and~$\sqrt{n}\del[1]{\hat{\beta}_{n,\text{2SLS}}-\beta}$ have the same limiting distribution whenever the 2SLS estimator based on the clean data has one. 

\begin{theorem}\label{thm:limdist}
Suppose that Assumption~\ref{ass:sequence} is satisfied for some~$m>2$ and let~$\eps_n$ be chosen as in~\eqref{eq:epsfam}. Suppose that $\E(z_{1}u_1)=0$ and~$\text{rank}\sbr[1]{\E(u_1^2z_1z_1')}=L$. Then, if~\eqref{eq:contamfrac} is satisfied with~$\sqrt{n}\eta_n^{1-\frac1m}\to 0$ as~$n\to\infty$, 
$\sqrt{n}\del[1]{\hat{\beta}_{n,\text{W-2SLS}}-\beta}$ and~$\sqrt{n}\del[1]{\hat{\beta}_{n,\text{2SLS}}-\beta}$ both converge in distribution to
		\begin{equation*}
			\mathsf{N}_K\del[1]{0,\Omega},\quad\text{where}\quad \Omega=A\E(u_1^2z_1z_1')A',
		\end{equation*}
		for~$A=\del[1]{\E(z_1x_1')'\sbr[0]{\E(z_1z_1')}^{-1}\E(z_1x_1')}^{-1}\E(z_1x_1')'\sbr[0]{\E(z_1z_1')}^{-1}$.
	\end{theorem}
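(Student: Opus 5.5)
The plan is to reduce Theorem~\ref{thm:limdist} to a standard central limit theorem for the clean-sample 2SLS estimator and then transfer the limit to W-2SLS using Proposition~\ref{prop:IVClose}.

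\emph{Step 1: asymptotic normality of clean 2SLS.} Substituting $y=X\beta+u$ into \eqref{eq:2SLSM} gives
\begin{equation*}
\sqrt{n}\del[1]{\hat{\beta}_{n,\text{2SLS}}-\beta}
=
\hat A_n\,\frac{Z'u}{\sqrt{n}},
\qquad
\hat A_n:=\del[3]{\frac{X'Z}{n}\sbr[2]{\frac{Z'Z}{n}}^{-1}\frac{Z'X}{n}}^{-1}\frac{X'Z}{n}\sbr[2]{\frac{Z'Z}{n}}^{-1}.
\end{equation*}
This identity holds on the event that the inverses exist.
\begin{itemize}
\item Since $m>2>1$, Assumption~\ref{ass:sequence} gives integrability of $z_1z_1'$ and $z_1x_1'$. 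The weak law of large numbers then yields $Z'Z/n\to\E(z_1z_1')$ and $Z'X/n\to\E(z_1x_1')$ in probability.
\item By the rank conditions, $\E(z_1z_1')$ is invertible, and $\E(z_1x_1')'[\E(z_1z_1')]^{-1}\E(z_1x_1')$ is positive definite. Hence the event on which the inverses exist has probability tending to one.
\item The continuous mapping theorem then gives $\hat A_n\to A$ in probability.
\item Since $m>2$, we have $\E|u_1z_{1,l}|^2<\infty$, and $\E(z_1u_1)=0$. The multivariate Lindeberg--L\'evy CLT therefore gives $Z'u/\sqrt{n}\to\mathsf N_L(0,\E(u_1^2z_1z_1'))$.
\end{itemize}
Slutsky's lemma then yields $\sqrt{n}(\hat\beta_{n,\text{2SLS}}-\beta)\to\mathsf N_K(0,\Omega)$. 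The full-rank assumption on $\E(u_1^2z_1z_1')$ is not needed for this convergence; it only ensures $\Omega$ is nondegenerate, because $A$ has rank $K$.

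\emph{Step 2: transfer to W-2SLS.} Proposition~\ref{prop:IVClose} gives
\begin{equation*}
\sqrt{n}\enVert[1]{\hat\beta_{n,\text{W-2SLS}}-\hat\beta_{n,\text{2SLS}}}
=
O_\P\del[2]{\sqrt{n}\eta_n^{1-\frac1m}+\sqrt{n}\sbr[1]{\tfrac{\log n}{n}}^{1-\frac1m}}.
\end{equation*}
\begin{itemize}
\item The first term tends to zero by assumption.
\item For the second term, note that $m>2$ implies $1-\frac1m>\frac12$. Hence $\sqrt{n}(\log n/n)^{1-1/m}=(\log n)^{1-1/m}n^{\frac12-(1-\frac1m)}\to0$.
\item Also, $\sqrt{n}\eta_n^{1-1/m}\to0$ implies $\eta_n\to0$, so the hypotheses of Proposition~\ref{prop:IVClose} hold.
\end{itemize}
The difference is therefore $o_\P(1)$ after scaling by $\sqrt n$. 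Since
\begin{equation*}
\sqrt{n}(\hat\beta_{n,\text{W-2SLS}}-\beta)=\sqrt{n}(\hat\beta_{n,\text{2SLS}}-\beta)+o_\P(1),
\end{equation*}
Slutsky's lemma gives the same Gaussian limit for W-2SLS.

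\emph{Main obstacle.} The only delicate point is the rate bookkeeping in Step 2: the $\log(n)/n$ term coming from the choice of $\eps_n$ in \eqref{eq:epsfam} must be negligible at the $\sqrt n$ scale. This is exactly where the assumption $m>2$ is used. Everything else in the argument is standard.
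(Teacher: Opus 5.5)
Your proposal is correct and follows the paper's proof essentially verbatim. The paper likewise takes the clean-sample 2SLS central limit theorem as standard, where you merely spell out the law of large numbers, CLT and Slutsky details, and then transfers the limit to W-2SLS via Proposition~\ref{prop:IVClose}, using $m>2$ to get $\log(n)^{1-1/m}/n^{1/2-1/m}\to 0$ and $\sqrt{n}\eta_n^{1-1/m}\to 0$ to make the $\sqrt{n}$-scaled difference $o_\P(1)$.
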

Theorem~\ref{thm:limdist} shows when robustness is for free in the first-order asymptotic sense: If~$\sqrt{n}\eta_n^{1-\frac1m}\to 0$ then $\hat{\beta}_{n,\text{W-2SLS}}$ has exactly the same asymptotic distribution as $\hat{\beta}_{n,\text{2SLS}}$ based on the \emph{clean} sample. Hence the price of robustness is zero in this regime: the confidence sets have the same asymptotic size as clean 2SLS confidence sets, while W-2SLS remains consistent under an arbitrary but vanishing fraction of observations, allowing the number of such observations to grow with the sample size. 

By Theorem~\ref{thm:contamLB} below, the condition~$\sqrt{n}\eta_n^{1-\frac1m}\to 0$ in Theorem~\ref{thm:limdist} is necessary for the $\sqrt{n}$-scaled error of \emph{any} estimator to converge uniformly in distribution to a mean-zero non-degenerate normal distribution. That is,~$\sqrt{n}\eta_n^{1-\frac1m}\to 0$ is the weakest possible condition on~$\eta_n$ for conducting inference as if no contamination were present.

Recall that \cite{young2022consistency} showed that deleting $q_n=2$ observations can overturn a large share of nominally (in)significant 2SLS results. This is exactly the regime where $\eta_n=q_n/n$ is small, yet empirically important. If~$\eta_n=q/n$ for some fixed~$q\in\cbr[0]{0}\cup \N$, then~$\sqrt{n}\eta_n^{1-\frac1m}=q^{1-\frac1m}\cdot n^{-\frac12+\frac1m}\to 0$, implying that the conditions of Theorem~\ref{thm:limdist} are satisfied for all~$m>2$.

\subsection{Estimating~$\Omega$, hypothesis testing, and confidence intervals}
To conduct feasible inference based on the Gaussian approximation to the distribution of~$\sqrt{n}\del[1]{\hat{\beta}_{n,\text{W-2SLS}}-\beta}$ in Theorem~\ref{thm:limdist}, a consistent estimator of the covariance matrix~$\Omega$ is needed. We now construct a consistent positive semi-definite estimator of~$\Omega$ taking the contaminated data as input.

 In the presence of contamination, it is clear that estimators based on plain sample averages are ruled out. Note also that already for the purpose of estimating the variance~$\sigma^2=\E(u_1^2)$ of the~$u_i$ \emph{absent} contamination, the usual consistency argument for~$n^{-1}\sum_{i=1}^n\hat{u}_i^2$ where~$\hat{u}_i=y_i-x_i'\hat{\beta}_{n,\text{2SLS}}$ relies heavily on the linearity of the arithmetic mean: Specifically, using that~$\hat{u}_i=u_i-x_i'(\hat{\beta}_{n,\text{2SLS}}-\beta)$ such that 
\begin{equation*}
\frac{1}{n}\sum_{i=1}^n\hat{u}_i^2
=
\frac{1}{n}\sum_{i=1}^n u_i^2+\frac{1}{n}\sum_{i=1}^n[x_i'(\hat{\beta}_{n,\text{2SLS}}-\beta)]^2-\frac{2}{n}\sum_{i=1}^nx_i'u_i(\hat{\beta}_{n,\text{2SLS}}-\beta),	
\end{equation*}
implying that consistency of~$n^{-1}\sum_{i=1}^n\hat{u}_i^2$ follows from the first average on the right-hand side converging to~$\sigma^2$ and the two last ones to zero. However, writing~$\hat{\tilde{u}}_i=\tilde{y}_{i}-\tilde{x}_i'\hat{\beta}_{n,\text{W-2SLS}}$ for the contaminated residuals based on the robust estimator~$\hat{\beta}_{n,\text{W-2SLS}}$ and letting~$\tilde{u}_i:=\tilde{y}_{i}-\tilde{x}_i'\beta$, the non-linearity of the winsorization function~$\phi_{\alpha,\beta}$ defined in~\eqref{eq:winsor} implies that the previous decomposition does not apply to~$\mu_{\eps_n}(\hat{\tilde{u}}_1^2,\hdots,\hat{\tilde{u}}_n^2)$, making this candidate estimator difficult to analyze.
 The same difficulty remains when estimating the component~$\Omega_S:=\E(u_1^2z_1z_1')$ of the covariance matrix~$\Omega=A\Omega_SA'$ instead of ``just''~$\sigma^2$. 

Our solution is to avoid winsorizing the generated residuals~$\hat{\tilde{u}}_i$. To be precise, for~$b$ being a~$K\times 1$ vector and~$\otimes$ denoting the Kronecker product, let
\begin{equation*}
	\mathfrak{p}(b)=(1,-b'),\quad \mathfrak{P}(b)=I_L\otimes \mathfrak{p}(b),\quad\tilde{w}_i=(\tilde{y}_i,\tilde{x}_i')',\quad\text{and}\quad \tilde{h}_i=\tilde{z}_i\otimes \tilde{w}_i,\quad i=1,\hdots,n.
\end{equation*}   
Similarly,~$w_i=(y_i,x_i')'$ and~$h_i=z_i\otimes w_i$ such that 
(by the mixed product property of Kronecker products)
\begin{equation*}
	\mathfrak{P}(\beta)\tilde{h}_i=\tilde{z}_i\otimes \mathfrak{p}(\beta)\tilde{w}_i=\tilde{z}_i\otimes\tilde{u}_i=\tilde{z}_i\tilde{u}_i\qquad\text{and}\qquad 	\mathfrak{P}(\beta)h_i=z_i\otimes \mathfrak{p}(\beta)w_i=z_i\otimes u_i= z_iu_i,
\end{equation*}
implying that
\begin{equation*}
	\Omega_S=\E(u_1^2z_1z_1')=\E\del[1]{\mathfrak{P}(\beta)h_1h_1'\mathfrak{P}(\beta)'}=\mathfrak{P}(\beta)\E(h_1h_1')\mathfrak{P}(\beta)'.
\end{equation*}
Thus, since~$\hat{\beta}_{n,\text{W-2SLS}}$ converges to~$\beta$ in probability under the conditions of Theorem~\ref{thm:estim} and~$b\mapsto \mathfrak{P}(b)$ is continuous, it suffices to exhibit a consistent and positive semidefinite estimator of~$\Xi=\E(h_1h_1')$ based on the \emph{contaminated}~$\tilde{h}_i$ in order to estimate~$\Omega_S$ consistently (and in a positive semi-definite fashion). It would be natural to estimate~$\Xi$ entry-by-entry via~$\mu_{\eps_n}(\tilde h_{1,j}\tilde h_{1,k},\hdots,\tilde h_{n,j}\tilde h_{n,k})$ for each pair~$1\leq j,k\leq L(K+1)$, cf.~\eqref{eq:winsfunc}. However, although consistent for~$\Xi$ by Theorem 2.1 in~\cite{Wins1}, this estimator need not be positive semi-definite at every sample size~$n$. To solve this issue, we propose the following positive semidefinite estimator of~$\Xi$:

Given~$s=(s_1,\hdots,s_n)\in\R^n$ and~$\eps_n\in(0,1/2]$, let~$\phi_i:\R^n\to \R$ be defined as $\phi_i(s)=\phi_{s_{(\lceil \eps_n n\rceil)}^*,s_{(\lfloor(1-\eps_n)n\rfloor)+1}^*}(s_i)$ and let~$\hat{\Xi}$ be the~$L(K+1)\times L(K+1)$ matrix with entries
\begin{equation}\label{eq:Xiestim}
	\hat{\Xi}_{j,k}=\frac{1}{n}\sum_{i=1}^n \phi_i(\tilde{h}_{1,j},\hdots,\tilde{h}_{n,j})\cdot \phi_i(\tilde{h}_{1,k},\hdots,\tilde{h}_{n,k}),\qquad 1\leq j,k\leq L(K+1) 
\end{equation}
Note that~$\hat{\Xi}$ is symmetric and positive semi-definite by virtue of being a Gram matrix. Lemma~\ref{lem:productwins} in the appendix yields the consistency of~$\hat{\Xi}$ by showing that each of its entries is close to the consistent entry-by-entry estimator $\mu_{\eps_n}(\tilde h_{1,j}\tilde h_{1,k},\hdots,\tilde h_{n,j}\tilde h_{n,k})$. Now define the plug-in estimators of~$\Omega_S$ and~$A$ as
\begin{equation*}
	\hat{\Omega}_S
	=
	\mathfrak{P}(\hat{\beta}_{n,\text{W-2SLS}})\hat{\Xi}\mathfrak{P}(\hat{\beta}_{n,\text{W-2SLS}})' \qquad\text{and}\qquad \hat{A}=\del[1]{W_{ZX}'[W_{ZZ}]^{-1}W_{ZX}}^{-1}W_{ZX}'[W_{ZZ}]^{-1},\end{equation*}
and finally define the positive semi-definite estimator of~$\Omega$ 
\begin{equation*}
\hat{\Omega}
=
\hat{A}\hat{\Omega}_S\hat{A}'.\footnote{On the event that the displayed inverses do not exist, one may define~$\hat{A}$ and~$\hat{\Omega}$ arbitrarily. By the proof of Theorem~\ref{thm:FeasibleInference} this event has probability converging to zero.}	
\end{equation*}
\begin{theorem}\label{thm:FeasibleInference}
Suppose that Assumption~\ref{ass:sequence} is satisfied for some~$m>2$ and let~$\eps_n$ be chosen as in~\eqref{eq:epsfam}. Suppose that $\E(z_{1}u_1)=0$ and~$\text{rank}\sbr[1]{\E(u_1^2z_1z_1')}=L$. Then, if~\eqref{eq:contamfrac} is satisfied with~$\eta_n\to 0$, it holds that~$\hat{\Omega}$ converges to~$\Omega$ in probability. Therefore,~$\hat{\Omega}^{-1/2}$ is well-defined with probability tending to one. 

If, in addition~$\sqrt{n}\eta_n^{1-\frac1m}\to 0$, then
\begin{equation}\label{eq:feasible}
	\hat{\Omega}^{-1/2}\sqrt{n}\del[1]{\hat{\beta}_{n,\text{W-2SLS}}-\beta}\qquad\text{converges in distribution to}\qquad \mathsf{N}_K(0,I_K).
\end{equation}
\end{theorem}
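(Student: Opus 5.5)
The argument has two parts: consistency of $\hat{\Omega}$, and then the studentized limit, which follows by Slutsky.

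\textbf{Consistency of $\hat A$.} The first step is to show $\hat A\to A$ in probability. The proof of Proposition~\ref{prop:IVClose} (or Theorem~2.1 of \cite{Wins1}, applied entrywise) gives $W_{ZX}\to\E(z_1x_1')$ and $W_{ZZ}\to\E(z_1z_1')$ in probability. This uses finite $m$-th moments with $m>1$, $\eta_n\to0$, and the choice of $\eps_n$ in \eqref{eq:epsfam}. The rank conditions in Assumption~\ref{ass:sequence} make $\E(z_1z_1')$ and $\E(z_1x_1')'[\E(z_1z_1')]^{-1}\E(z_1x_1')$ invertible. Hence, with probability tending to one, the inverses in $\hat A$ exist. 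The continuous mapping theorem then gives $\hat A\to A$ in probability.

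\textbf{Consistency of $\hat\Omega_S$.} Next I need $\hat\Xi\to\Xi=\E(h_1h_1')$. The entries of $h_1h_1'$ are products $z_{1,l}w_{1,a}z_{1,l'}w_{1,b}$. These are not directly among the moments in Assumption~\ref{ass:sequence}. However, with $y_1=x_1'\beta+u_1$, each entry is a linear combination of terms of the form $(z_{1,l}x_{1,k})(z_{1,l'}x_{1,k'})$, $(z_{1,l}u_1)(z_{1,l'}x_{1,k})$, and so on. Cauchy--Schwarz with $m>2$ shows these terms have finite $m/2>1$ moments.

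So I apply Theorem~2.1 of \cite{Wins1} to the scalar sequences $\tilde h_{i,j}\tilde h_{i,k}$, with moment index $m/2>1$ and contamination fraction $\eta_n\to0$. This gives $\mu_{\eps_n}(\tilde h_{1,j}\tilde h_{1,k},\dots,\tilde h_{n,j}\tilde h_{n,k})\to\Xi_{j,k}$ in probability. Lemma~\ref{lem:productwins} then says that $\hat\Xi_{j,k}$ differs from this entrywise estimator by $o_\P(1)$. Together these give $\hat\Xi\to\Xi$.

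Theorem~\ref{thm:estim} gives $\hat\beta_{n,\text{W-2SLS}}\to\beta$. Since $b\mapsto\mathfrak P(b)$ is continuous, $\hat\Omega_S\to\mathfrak P(\beta)\Xi\mathfrak P(\beta)'=\Omega_S$. Therefore $\hat\Omega=\hat A\hat\Omega_S\hat A'\to A\Omega_SA'=\Omega$.

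\textbf{Positive definiteness of $\Omega$.} To make $\hat\Omega^{-1/2}$ well-defined with probability tending to one, $\Omega$ must be positive definite. The matrix $\Omega_S$ has full rank $L$ by assumption, and $A$ is $K\times L$ of rank $K$ because $A\E(z_1x_1')=I_K$. Hence $\Omega=A\Omega_SA'$ is positive definite. The smallest eigenvalue of $\hat\Omega$ converges to that of $\Omega$, which is positive. The map $M\mapsto M^{-1/2}$ is continuous on positive definite matrices, so $\hat\Omega^{-1/2}\to\Omega^{-1/2}$ in probability.

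\textbf{Studentized limit.} Under $\sqrt n\eta_n^{1-1/m}\to0$, Theorem~\ref{thm:limdist} gives $\sqrt n(\hat\beta_{n,\text{W-2SLS}}-\beta)\to\mathsf N_K(0,\Omega)$ in distribution. Slutsky's lemma then yields \eqref{eq:feasible}.

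\textbf{Main obstacle.} The delicate step is the consistency of $\hat\Xi$. It requires verifying the moment conditions for the fourth-order products from Assumption~\ref{ass:sequence} with $m>2$, and the contaminated entries $\tilde h_{i,j}\tilde h_{i,k}$ must correspond to at most $\eta_n n$ altered observations. The latter holds because a product is changed only when $\tilde V_i\neq V_i$. The remaining gap is between the product of winsorized values and the winsorized product; this is exactly what Lemma~\ref{lem:productwins} controls, and I take it as given.
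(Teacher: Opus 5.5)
Your proposal is correct and follows essentially the same route as the paper. That route is: consistency of $\hat A$ via entrywise winsorized-mean consistency plus continuous mapping; consistency of $\hat\Xi$ by combining Lemma~\ref{lem:productwins} (to compare $\hat\Xi$ with the entrywise estimator built from $\mu_{\eps_n}(\tilde h_{1,j}\tilde h_{1,k},\dots,\tilde h_{n,j}\tilde h_{n,k})$) with Theorem~2.1 of \cite{Wins1} at moment index $m/2>1$; continuity of $\mathfrak P$ together with $\hat\beta_{n,\text{W-2SLS}}\to\beta$; and finally Slutsky with Theorem~\ref{thm:limdist}. Your explicit argument that $\Omega$ is positive definite, via $A\E(z_1x_1')=I_K$, spells out a step the paper only asserts.
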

Theorem~\ref{thm:FeasibleInference} justifies conducting standard asymptotic inference based on a Gaussian approximation even in the presence of contamination when~$\sqrt{n}\eta_n^{1-\frac1m}\to 0$. The limiting distribution of~$\hat{\Omega}^{-1/2}\sqrt{n}\del[1]{\hat{\beta}_{n,\text{W-2SLS}}-\beta}$ is the same~$\mathsf{N}_K(0,I_K)$ as that of the infeasible 2SLS estimator premultiplied by the standard estimator of~$\Omega^{-1/2}$.

%
\begin{remark}
	Fix~$k\in\cbr[0]{1,\hdots,K}$ and consider testing~$\mathsf{H}_0:\beta_k=\beta_k^0$ against~$\mathsf{H}_1:\beta_k\neq \beta_k^0$.~\eqref{eq:feasible} of Theorem~\ref{thm:FeasibleInference} implies that under~$\mathsf{H_0}$,
	\begin{equation*}
	t_k:=\frac{\sqrt{n}(\hat{\beta}_{n,\text{W-2SLS},k}-\beta_k^0)}{\sqrt{\hat{\Omega}_{k,k}}}	\qquad \text{converges in distribution to}\qquad \mathsf{N}_1(0,1),
	\end{equation*}
	so that~$t$-tests can be conducted using Gaussian critical values and the robust~$t$-statistic~$t_k$. Letting~$z_{1-\delta}:=\Phi^{-1}(1-\delta)$ for~$\delta\in(0,1)$ and~$\Phi$ be the cdf of the standard normal the (sequence of) test(s)~$\mathds{1}\cbr[0]{|t_k|>z_{1-\delta/2}}$ has asymptotic size~$\delta$ as the standard tests based on the infeasible 2SLS estimator using the \emph{uncontaminated} data. Thus, robustness is free in the first-order asymptotic sense. Similarly, the (sequence of) confidence set(s)
	\begin{equation*}
		\hat{C}_{k,1-\delta}:=\sbr[4]{\hat{\beta}_{n,\text{W-2SLS},k}-z_{1-\delta/2}\sqrt{\frac{\hat{\Omega}_{k,k}}{n}},\hat{\beta}_{n,\text{W-2SLS},k}+z_{1-\delta/2}\sqrt{\frac{\hat{\Omega}_{k,k}}{n}}},\qquad k=1,\hdots,K,
	\end{equation*}
	satisfies that~$\P\del[1]{\beta_k\in \hat{C}_{k,1-\delta}}\to 1-\delta$ as~$n\to\infty$ for every~$k=1,\hdots,K$ without being asymptotically longer than the usual confidence intervals at the~$\sqrt{n}$-scale, 
	
	Joint inference on a subset of entries of~$\beta$ can also be conducted in a standard fashion based on Theorem~\ref{thm:FeasibleInference} in the presence of adversarial contamination without asymptotic efficiency loss compared to inference based on the standard 2SLS estimator implemented on the \emph{uncontaminated} data.\qed
\end{remark}

\section{Optimality of the dependence on~$\eta_n$}\label{sec:Opt}
Section \ref{sec:WinsMain} imposed different conditions on~$\eta_n$ for different statistical objectives. We now show that these conditions are the weakest possible in a minimax sense. The condition~$\eta_n\to 0$ is necessary for uniform consistency;  $\sqrt{n}\eta_n^{1-1/m}=O(1)$ is the threshold for uniform~$\sqrt{n}$-consistency; and~$\sqrt{n}\eta_n^{1-1/m}\to 0$ is necessary for centered Gaussian inference with the same first-order law as in the uncontaminated model. We then show that honest confidence sets cannot adapt to the unknown contamination fraction; even absent contamination the diameter must depend on the largest fraction~$\eta_n$ of contaminated observations that one is willing to entertain.
 
Specifically, consider the just identified case of~$L=K=1$. For any distribution~$Q$ of the uncontaminated~$R_i=(y_i,x_i,z_i,u_i)$ on~$\R^4$ and~$R=(R_1,\hdots,R_n)$, let~$P_Q$ be a measure on the underlying measurable space~$(\Omega,\mc{F})$ such that~$P_Q\circ R^{-1}=Q^{\otimes n}$.\footnote{\label{fn:prodspace}This requires that the underlying probability space~$(\Omega,\mc{F})$ is rich enough to support such~$P_Q$, but is without loss of generality for the purpose of the lower bounds in Theorems~\ref{thm:contamLB} and~\ref{thm:NoAdapt} where, as usual, one can let~$(\Omega,\mc{F},P_Q)=((\R^{4})^n,\mc{B}(\R^{4})^{\otimes n},Q^{\otimes n})$ and let the~$R_i$ be the~$i$th canonical coordinate projections (onto~$\R^4$). This also applies to the constructions in Section~\ref{sec:FiniteDev}.} That is, the~$R_i$ are i.i.d.~with common distribution~$Q$.
Let~$m\geq 2$ and denote by~$\mc{Q}_m$ the family of distribution on~$\R^{4}$, each element~$Q$ of which satisfies for some~$\beta(Q)\in[0,1]$ that
\begin{align}\label{eq:LBFam}
&P_Q(y_1=\beta(Q)x_1+u_1, x_1=1,z_1=1)=1,\quad E_{P_Q}|z_1u_1|^m=E_{P_Q}|u_1|^m\leq 1.5\cdot2^m,\notag\\
&E_{P_Q}(z_1u_1)=E_{P_Q}(u_1)=0,\quad \text{and}\quad E_{P_Q}(z_1^2u_1^2)=E_{P_Q}(u_1^2)\in[1,4]. 	
\end{align}
Letting~$x_i=z_i=1$ with probability one is convenient since it reduces the model~\eqref{eq:multivariate} to a location (sub)model for the~$y_i$, yet  satisfies the relevant conditions of Theorem~\ref{thm:estim}--\ref{thm:limdist}. 

Finally, it suffices to consider a specific contamination scheme changing at most~$\eta_n n$ of the observations. Specifically, we utilize an adversary who sets as many $y_i$ equaling~$1+p_n^{-1/m}$ to one as the contamination budget allows, where~$p_n=\eta_n/2$. The adversary leaves all other variables untouched, and does not use external randomization. To be precise, write~$N_i(R)
 ={\sum_{j=1}^i\mathds{1}_{\cbr[0]{y_j=1+p_n^{-1/m}}}}$. Define the measurable (contamination) mapping~$\mathsf{C}:(\R^4)^n\to(\R^3)^n$ by~$\tilde{V}:=(\tilde{V}_1,\hdots,\tilde{V}_n)=\mathsf{C}(R)$, where
\begin{equation}\label{eq:Contam}
\tilde{V}_i=(\tilde{y}_i,\tilde{x_i},\tilde{z_i})=\del[1]{y_i-p_n^{-1/m}\mathds{1}_{\cbr[0]{y_i=1+p_n^{-1/m}}}\mathds{1}_{\cbr[0]{N_i(R)\leq \eta_n n}},x_i,z_i},\qquad i=1,\hdots,n,
\end{equation} 
and~$\mathsf{C}$ drops the~$u_i$ since no statistical procedure can depend on these unobserved quantities. Because~$\mathds{1}_{\cbr[0]{N_i(R)\leq \eta_n n}}=0$ if~$N_i(R)>\eta_nn$, we note that~\eqref{eq:contamfrac} is satisfied by construction. The specific form of the contamination~$\mathsf{C}$ allows for the construction of~$Q_0,Q_1\in\mc{Q}_m$ for which~$\beta(Q_1)-\beta(Q_0)$ is of order~$\eta_n^{1-\frac1m}$ in Lemma~\ref{lem:distconstruct} but so that the total variation distance between~$Q_0^{\otimes n} \circ \mathsf{C}^{-1}$ and~$Q_1^{\otimes n} \circ \mathsf{C}^{-1}$ is suitably small (Lemma~\ref{lem:TVDist}).

\begin{theorem}\label{thm:contamLB}
Let~$L=K=1$. Fix~$n\in\N$,~$\eta_n\in(0,1]$,~$m>2$, and let
the contaminated sample~$\tilde{V}=(\tilde{V}_1,\hdots,\tilde{V}_n)$ be given by~\eqref{eq:Contam}. Then, for any measurable estimator~$T_n:\R^{3n}\to \R$ of~$\beta$ taking the contaminated sample as input, it holds that
\begin{align}\label{eq:LB}
	\sup_{Q\in\mc{Q}_m}P_Q\del[2]{\envert[1]{T_n(\tilde{V})-\beta(Q)}\geq 0.25\cdot\eta_n^{1-\frac{1}{m}}}
	\geq 
	0.5\cdot\sbr[2]{1-\del[1]{e/4}^{\eta_nn/2}}.
\end{align}
%
Furthermore, if~$\limsup_{n\to\infty}\sqrt{n}\eta_n^{1-\frac1m}>0$ then it holds for any functional~$\sigma:\mc{Q}_m\to(0,\infty)$ and~$\Phi$ being the cdf of the standard normal distribution that
\begin{equation}\label{eq:NoUnifConv}
	\limsup_{n\to\infty}\sup_{Q\in\mc{Q}_m}\sup_{t\in\R}\envert[2]{P_Q\del[1]{\sqrt{n}[T_n(\tilde{V})-\beta(Q)]\leq t}-\Phi(t/\sigma(Q))}>0
\end{equation}

\end{theorem}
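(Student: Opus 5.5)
The plan is Le Cam's two-point method, applied to the \emph{contaminated} laws $Q^{\otimes n}\circ\mathsf{C}^{-1}$. The pair of hypotheses from Lemma~\ref{lem:distconstruct} has to be chosen so that the adversary~\eqref{eq:Contam} makes the two clean laws indistinguishable after contamination.

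\textbf{The construction.} Put $p_n=\eta_n/2$ and take $x_1=z_1=1$, so only the law of $y_1$ matters. Let $Q_0\in\mc{Q}_m$ give $y_1$ a law $F_0$ that places no mass at $1+p_n^{-1/m}$ and at least mass $p_n$ at the point $1$. To make the normal-approximation part easier, I would take $Q_0$ fixed, not depending on $n$, e.g.\ with an atom of mass $1/2$ at $1$. Let $Q_1$ give $y_1$ the law
\[
F_1=(1-p_n)G+p_n\delta_{1+p_n^{-1/m}},\qquad G\ \text{chosen so that}\ (1-p_n)G+p_n\delta_1=F_0 .
\]
Then replacing the atom $1+p_n^{-1/m}$ by $1$ maps $F_1$ to $F_0$, and
\[
\beta(Q_1)-\beta(Q_0)=E_{F_1}y_1-E_{F_0}y_1=p_n\cdot p_n^{-1/m}=p_n^{1-1/m}=2^{-(1-1/m)}\eta_n^{1-1/m}\geq \tfrac12\eta_n^{1-1/m}.
\]
The routine checks I would still have to do are:
\begin{itemize}
\item both $\beta(Q_j)$ lie in $[0,1]$;
\item the centred $u$ has variance in $[1,4]$;
\item $E|u_1|^m\le 1.5\cdot 2^m$, which holds because the atom contributes roughly $p_n\cdot p_n^{-1}=1$ to the $m$-th moment.
\end{itemize}
I take these to be the content of Lemma~\ref{lem:distconstruct}.

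\textbf{Bounding the total variation.} I would couple the two samples. Draw $B_i\sim$ Bernoulli$(p_n)$ independently. On $\{B_i=1\}$, let $y_i$ equal $1+p_n^{-1/m}$ under $Q_1$ and equal $1$ under $Q_0$. On $\{B_i=0\}$, draw $y_i$ from $G$ under both hypotheses. On the event $\{\sum_iB_i\le\eta_n n\}$ the adversary resets every atom to $1$, so the two contaminated samples coincide. Hence
\[
\mathrm{TV}\le P\Big(\mathrm{Bin}(n,p_n)>2np_n\Big)\le (e/4)^{np_n}=(e/4)^{\eta_n n/2}
\]
by the Chernoff bound with multiplicative factor $2$; this is Lemma~\ref{lem:TVDist}. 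Since the separation is at least $2\cdot 0.25\,\eta_n^{1-1/m}$, the events $\{|T_n-\beta(Q_j)|<0.25\,\eta_n^{1-1/m}\}$ for $j=0,1$ are disjoint. The standard argument then gives
\[
\max_j P_{Q_j}\big(|T_n-\beta(Q_j)|\ge 0.25\,\eta_n^{1-1/m}\big)\ge \tfrac12(1-\mathrm{TV}),
\]
which is~\eqref{eq:LB}.

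\textbf{The normal-approximation claim~\eqref{eq:NoUnifConv}.} Argue by contradiction along a subsequence with $\sqrt n\,\eta_n^{1-1/m}\ge c>0$. Along it $\eta_n n\ge c\,n^{1/2}\eta_n^{1/m}\ge c\sqrt n\to\infty$ (as $\eta_n\le 1$), so $\mathrm{TV}\to0$. Suppose the supremum in~\eqref{eq:NoUnifConv} tends to zero. Evaluating at $t=0$, where $\Phi(0)=1/2$ for every $\sigma$, gives
\[
P_{Q_0}(T_n\le\beta(Q_0))\to\tfrac12\qquad\text{and}\qquad P_{Q_1}(T_n\le\beta(Q_1))\to\tfrac12 .
\]
Transfer the second statement to $Q_0$ using $\mathrm{TV}\to0$, which gives $P_{Q_0}(T_n\le\beta(Q_1))\to 1/2$. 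Now evaluate the approximation under $Q_0$ at $t=\sqrt n\,\delta_n$, where $\delta_n=\beta(Q_1)-\beta(Q_0)$ and $\sqrt n\,\delta_n\ge c/2$. This gives
\[
P_{Q_0}(T_n\le\beta(Q_1))\approx\Phi\big(\sqrt n\,\delta_n/\sigma(Q_0)\big)\ge\Phi\big(c/(2\sigma(Q_0))\big)>\tfrac12 ,
\]
a contradiction.

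\textbf{Main obstacle.} This last step requires $\sigma(Q_0)$ not to drift to infinity. That is exactly why $Q_0$ must be a single distribution independent of $n$, with all the $n$-dependence placed in $G$ and $Q_1$. Verifying that this fixed $Q_0$ is compatible with the moment constraints of $\mc{Q}_m$ uniformly in $n$ is the delicate point. In particular, $G=(F_0-p_n\delta_1)/(1-p_n)$ must remain a probability measure and keep the variance of $u$ in $[1,4]$ for all small $p_n$.
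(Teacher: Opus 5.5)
Your proposal is correct up to one misstated inequality, and it is the paper's proof.

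\textbf{Same construction and argument.} With $F_0=\tfrac12\delta_{-1}+\tfrac12\delta_{1}$, your pair is exactly the fixed $Q_0$ and the $n$-dependent $Q_{1,n}$ of Lemma~\ref{lem:distconstruct}. Your coupling bound restates Lemma~\ref{lem:TVDist}; the paper instead compares $\mathsf{C}$ with the map $\mathsf{U}$ that resets \emph{all} atoms, using $Q_0^{\otimes n}\circ\mathsf{C}^{-1}=Q_1^{\otimes n}\circ\mathsf{U}^{-1}$. Your contradiction for~\eqref{eq:NoUnifConv} is also the paper's. You evaluate at $t=\sqrt n\,\delta_n$ under $Q_0$, while the paper evaluates at $t=0$ under $Q_1$.

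\textbf{The point you flag as delicate is routine.} Since $p_n\le 1/2$, $G$ is a probability measure. The variance of $u_1$ under $Q_{1,n}$ equals $1+2p_n^{1-1/m}+(1-p_n)p_n^{1-2/m}$, which lies in $[1,4]$.

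\textbf{The step that fails as written.} You claim $\eta_n n\ge c\,n^{1/2}\eta_n^{1/m}\ge c\sqrt n$ ``as $\eta_n\le 1$''. But $\eta_n\le 1$ gives $\eta_n^{1/m}\le 1$, so the second inequality points the wrong way. Your derivation also never uses $m>2$, and that is exactly where the hypothesis is needed. For $m=2$, the condition $\sqrt n\,\eta_n^{1/2}\ge c$ only yields $n\eta_n\ge c^2$, and then the total-variation bound $(e/4)^{\eta_n n/2}$ need not vanish.

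\textbf{The fix.} From $\sqrt n\,\eta_n^{1-1/m}\ge c$ you get $\eta_n\ge (c/\sqrt n)^{m/(m-1)}$. Hence
\begin{equation*}
n\eta_n\ge c^{m/(m-1)}\,n^{(m-2)/(2(m-1))},
\end{equation*}
which tends to infinity because $m>2$. With this correction your argument is complete.
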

Equation\eqref{eq:LB} of Theorem~\ref{thm:contamLB} shows that no estimator can have a better dependence on the contamination fraction~$\eta_n$ uniformly over~$\mc{Q}_m$ than the~$\eta_n^{1-\frac{1}{m}}$ achieved by~$\hat{\beta}_{n,\text{W-2SLS}}$ in Theorem~\ref{thm:estim}. Thus, a better dependence on~$\eta_n$ \emph{must} come at the price of imposing restrictions on the allowed contamination strategies that the adversary can employ. In particular, one must rule out strategies that are a function of the clean data~$V_1,\hdots,V_n$ (e.g.~selective reporting or survey manipulation) as exploited in the proof of Theorem~\ref{thm:contamLB}. 

Recall that Theorem~\ref{thm:estim} implies that~$\hat{\beta}_{n,\text{W-2SLS}}$ is~$\sqrt{n}$-consistent if~$\limsup_{n\to\infty}\sqrt{n}\eta_n^{1-\frac1m}<\infty$, whereas \eqref{eq:LB} of Theorem~\ref{thm:contamLB} shows that \emph{no} estimator can be uniformly $\sqrt{n}$-consistent over~$\mc{Q}_m$ in case $\limsup_{n\to\infty}\sqrt{n}\eta_n^{1-\frac1m}=\infty$. Thus,~$\hat{\beta}_{n,\text{W-2SLS}}$ is $\sqrt{n}$-consistent under the weakest possible condition on~$\eta_n$.

Finally,~\eqref{eq:NoUnifConv} of Theorem \ref{thm:contamLB} implies that the condition~$\sqrt{n}\eta_n^{1-\frac1m}\to 0$ in Theorem~\ref{thm:limdist} is the weakest possible one on~$\eta_n$ if one wishes to conduct standard inference based on a normal approximation as if no contamination were present. Specifically, if $\limsup_{n\to\infty}\sqrt{n}\eta_n^{1-\frac1m}>0$, then for any estimator~$T_n$ there exists a subsequence along which~$\sqrt{n}[T_n(\tilde{V})-\beta(Q)]$ does not converge in distribution to a normal with a mean of zero under some~$Q_n\in\mc{Q}_m$.

\subsection{Non-existence of honest adaptive confidence sets}
The following theorem establishes that any confidence set that covers  uniformly over~$\mc{Q}_m$ must have a diameter --- \emph{even absent contamination} --- that depends on the largest possible contamination fraction~$\eta_n$ that one is willing to entertain. In this sense, any procedure must explicitly or implicitly specify the largest fraction of contaminated observations that it wishes to be robust against. The choice of~$\eps_n=\eps_n(\eta_n)$ in \eqref{eq:epsfam} makes this choice explicit for the purpose of implementing the winsorized 2SLS estimator.
\begin{theorem}\label{thm:NoAdapt}
Let~$L=K=1$. Fix~$n\in\N$,~$\eta_n\in(0,1]$,~$m>2$, and~$\delta\in(0,0.5)$. Let~$C_n:(\R^3)^n\to 2^\R$ be a non-empty confidence set such that~$\cbr[0]{v\in(\R^3)^n:\beta\in C_n(v)}$ is measurable for all~$\beta\in\R$,~$V=(V_1,\hdots, V_n)$ be the uncontaminated data, and the contaminated data~$\tilde{V}=(\tilde{V}_1,\hdots,\tilde{V}_n)$ be given by~\eqref{eq:Contam}. If
\begin{equation*}
	\inf_{Q\in\mc{Q}_m}P_Q\del[1]{\beta(Q)\in C_n(V)}\geq 1-\delta\qquad \text{and}\qquad \inf_{Q\in\mc{Q}_m}P_Q\del[1]{\beta(Q)\in C_n(\tilde{V})}\geq 1-\delta,
\end{equation*}	 
then the diameter~$|C_n|:=\sup\cbr[0]{|x-y|:x,y\in C_n}$ of~$C_n$ satisfies, for~$(\R^3)^n\ni v\mapsto |C_n(v)|$ measurable,
\begin{equation*}
	\sup_{Q\in\mc{Q}_m}P_{Q}\del[1]{|C_n(V)|\geq 0.5\cdot \eta_n^{1-1/m}}\geq 1-2\delta - \del[1]{e/4}^{\eta_nn/2}.
\end{equation*}
\end{theorem}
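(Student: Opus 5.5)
We will reuse the two-point construction that underlies Theorem~\ref{thm:contamLB}. Lemma~\ref{lem:distconstruct} supplies $Q_0,Q_1\in\mc{Q}_m$ with $|\beta(Q_1)-\beta(Q_0)|>0.5\cdot\eta_n^{1-1/m}$; we may need to check the constant there or slightly enlarge the separation. Lemma~\ref{lem:TVDist} controls the total variation distance between the laws of the contaminated samples:
\begin{equation*}
d_{TV}\del[1]{Q_0^{\otimes n}\circ\mathsf{C}^{-1},Q_1^{\otimes n}\circ\mathsf{C}^{-1}}\leq (e/4)^{\eta_nn/2}.
\end{equation*}
The construction is designed so that one of the two distributions, say $Q_0$, puts no mass at $1+p_n^{-1/m}$. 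Under $Q_0$ the adversary in \eqref{eq:Contam} then changes nothing, so $\tilde V=V$ holds $P_{Q_0}$-almost surely. Under $Q_1$ the adversary moves the atom at $1+p_n^{-1/m}$ back to the bulk whenever the budget suffices. The exponentially small total variation bound reflects the probability that the binomial count of such atoms exceeds $\eta_nn=2p_nn$, which is a Chernoff bound.

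Given these facts, the argument runs as follows. Consider the event
\begin{equation*}
E:=\cbr[0]{v:\beta(Q_0)\in C_n(v)\text{ and }\beta(Q_1)\in C_n(v)}.
\end{equation*}
On $E$ the diameter satisfies $|C_n(v)|\geq|\beta(Q_1)-\beta(Q_0)|\geq 0.5\cdot\eta_n^{1-1/m}$. We bound $P_{Q_0}(V\in E)$ from below using two inputs:
\begin{itemize}
\item The coverage hypothesis on clean data gives $P_{Q_0}(\beta(Q_0)\in C_n(V))\geq 1-\delta$.
\item Since $V=\tilde V$ a.s.\ under $Q_0$, we have $P_{Q_0}(\beta(Q_1)\in C_n(V))=P_{Q_0}(\beta(Q_1)\in C_n(\tilde V))$. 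By the total variation bound and measurability of $\cbr[0]{v:\beta\in C_n(v)}$, this is at least $P_{Q_1}(\beta(Q_1)\in C_n(\tilde V))-(e/4)^{\eta_nn/2}$, which the contaminated-coverage hypothesis bounds below by $1-\delta-(e/4)^{\eta_nn/2}$.
\end{itemize}
A union bound then yields $P_{Q_0}(V\in E)\geq 1-2\delta-(e/4)^{\eta_nn/2}$. Hence $P_{Q_0}(|C_n(V)|\geq0.5\cdot\eta_n^{1-1/m})$ is bounded below by the same quantity, and the supremum over $\mc{Q}_m$ is at least that.

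The main obstacle is verifying the structural facts about the construction rather than the chain of inequalities above. Specifically, we must confirm that the lemmas indeed deliver:
\begin{itemize}
\item one distribution under which contamination is vacuous, so that $\tilde V=V$ a.s.;
\item the exact separation constant $0.5$ rather than the $0.25$ used in \eqref{eq:LB}, since for estimators one loses a factor $2$ by splitting the distance between two hypotheses, while here the full distance enters.
\end{itemize}
If the roles are reversed, with the atom present under $Q_0$, we simply swap the labels. If the clean-coverage event is needed under the other measure, we use instead that $\tilde V$ and $V$ coincide on the high-probability event where the atom count is at most $\eta_nn$, which costs the same $(e/4)^{\eta_nn/2}$ term. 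Measurability of $|C_n(\cdot)|$ is assumed in the statement, so no further issue arises there.
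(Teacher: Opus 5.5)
Your proposal is correct and follows the paper's proof essentially step for step. It uses the same pair $Q_0,Q_1$ from Lemma~\ref{lem:distconstruct}, the fact that $\mathsf{C}$ merely drops the $u_i$ under $Q_0$ (so $\tilde V=V$ almost surely), the transfer of contaminated coverage from $Q_1$ to $Q_0$ via Lemma~\ref{lem:TVDist}, and a union bound with clean coverage under $Q_0$. The separation you flagged checks out, since $\beta(Q_1)-\beta(Q_0)=(\eta_n/2)^{1-1/m}\geq 0.5\,\eta_n^{1-1/m}$.
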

Theorem~\ref{thm:NoAdapt} shows that any confidence set~$C_n$ that has uniform coverage of at least~$1-\delta$ absent and present contamination must have a diameter at least of order~$\eta_n^{1-\frac1m}$. Thus, even though no contamination is present, any confidence set with coverage~$1-\delta$ \emph{if} there had been contamination must have a diameter depending on~$\eta_n$. This finding is similar in spirit to the one for constructing honest confidence intervals in a location model under Huber contamination in~\cite{luo2026adaptive}. Theorem~\ref{thm:NoAdapt} does not rule out the existence of point estimators that adapt to the unknown actual contamination fraction~$|\mc{O}|/n$.
 
\begin{remark}[Impossibility of estimating the actual contamination fraction] The same construction as used in the proofs of Theorems~\ref{thm:contamLB} and~\ref{thm:NoAdapt} can be used to show that is not possible to estimate the actual contamination fraction~$\rho(V):=|\mc{O}(V)|/n$ in~\eqref{eq:contamfrac} at precision~$o(\eta_n)$. Thus, one cannot hope to replace~$\eta_n$ in the implementation of~$\hat{\beta}_{n\text{W-2SLS}}$ (cf.,~\eqref{eq:epsfam}) by an estimator of~$\rho(V)$. 
	
\end{remark}
\begin{remark}[Huber contamination]
Consider estimating the mean~$\mu$ of a distribution~$P$ on~$\R$ satisfying~$\int |x-\mu|^m P(dx)\leq \sigma_m^m$ for some~$\sigma_m\in(0,\infty)$. When observing draws from the Huber contamination model~$M=(1-\eta)P+\eta \mathsf{L}$ for some contaminating distribution~$\mathsf{L}$ and~$\eta\in(0,1)$, the minimax rate is~$\eta^{1-\frac1m}+n^{-1/2}$, cf.~\cite{chen2016general}. Thus, by Theorems~\ref{thm:estim} and~\eqref{eq:LB} of Theorem~\ref{thm:contamLB}, there is no loss in terms of the minimax rates of convergence from allowing the more flexible data-dependent adversarial contamination instead of contamination from a fixed~$\mathsf{L}$ in a Huber contamination model.	
\end{remark}

\section{Inference robust to contamination and weak identification}
The positive results on estimation and inference in Theorems~\ref{thm:estim}--\ref{thm:FeasibleInference} imposed that the instruments are strong in the sense of~$\text{rank}[\E(z_1x_1')]=K$. We now show that without any assumptions on the smallest singular value of~$\E(z_1x_1')$, a natural winsorized version of the classic Anderson-Rubin test (cf.~\cite{anderson1949estimation}) is simultaneously robust to adversarial contamination \emph{and} weak instruments. Consider testing whether the ``structural'' parameter~$\beta$ in~\eqref{eq:multivariate} equals a candidate~$\beta_0$, that is testing
\begin{equation}\label{eq:ARhyp}
	\mathsf{H}_0:\beta=\beta_0\qquad\text{against}\qquad \mathsf{H}_1:\beta\neq \beta_0.
\end{equation}
Writing~$g_i(b)=z_i(y_i-x_i'b)$ for~$i=1,\hdots,n$ and~$b$ being~$K\times 1$, the classic Anderson-Rubin test based on the clean data utilises that~$\beta$ satisfies~$\E g_1(\beta)=\E(z_1(y_1-x_1'\beta))=\E(z_1u_1)=0$. If~$\beta_0$ satisfies~$\E g_1(\beta_0)=0$, the covariance matrix of $g_1(\beta_0)$ is~$\Sigma(\beta_0)=\E[ g_1(\beta_0)g_1(\beta_0)']$. Let~$\hat{\Sigma}_{\text{AR}}(\beta_0)=n^{-1}\sum_{i=1}^ng_i(\beta_0)g_i(\beta_0)'$ be its usual heteroskedasticity-robust sample-analogue and~$\bar{g}_n(\beta_0)=n^{-1}\sum_{i=1}^ng_i(\beta_0)$. The resulting heteroskedasticity-robust Anderson-Rubin statistic for~$\mathsf{H}_0:\beta=\beta_0$ is
\begin{equation}\label{eq:cleanARstat}
	\mathsf{AR}_n(\beta_0)
	=
	n\bar{g}_n(\beta_0)'\hat{\Sigma}_{\text{AR}}^{-1}(\beta_0)\bar{g}_n(\beta_0)
	=
	n^{-1}(y-X\beta_0)'Z\hat{\Sigma}_{\text{AR}}^{-1}(\beta_0)Z'(y-X\beta_0).
\end{equation}
Under~$\E g_1(\beta_0)=0$ and standard regularity conditions, $\mathsf{AR}_n(\beta_0)$  converges in distribution to a~$\chi^2(L)$ without imposing any assumptions on~$\text{rank}[\E(z_1x_1')]$. 

Although robust to weak instruments \emph{absent} contamination, $\mathsf{AR}_n(\beta_0)$ is extremely fragile to adversarial contamination. The following winsorized AR-test robustifies~$\mathsf{AR}_n(\beta_0)$ to adversarial contamination \emph{for free} in the same fashion as~$\hat{\beta}_{n,\text{W-2SLS}}$ robustified~$\hat{\beta}_{n,\text{2SLS}}$ for free in Section~\ref{sec:WinsMain} by replacing the same averages involved in~$\bar{g}_n(\beta_0)$ and~$\hat{\Sigma}_n$ by suitably winsorized means. To be precise, let~$\tilde{g}_i(b)=\tilde{z}_i(\tilde{y}_i-\tilde{x}_i'b)$ for~$i=1,\hdots,n$ and let~$\bar{\tilde{g}}_{n}(b)$ be the~$L\times 1$ vector with entries
\begin{equation*}
 \bar{\tilde{g}}_{n,l}(b)=\mu_{\eps_n}\del[1]{\tilde{g}_{1,l}(b),\hdots,\tilde{g}_{n,l}(b)},\qquad l=1,\hdots,L.
\end{equation*}
Furthermore, recalling the notation introduced just prior to~\eqref{eq:Xiestim}, let~$\hat{\Sigma}_{\text{W-AR}}(b)$ be the~$L\times L$ positive semi-definite matrix with entries
\begin{equation*}
\hat{\Sigma}_{\text{W-AR},l,j}(b)
=
\frac{1}{n}\sum_{i=1}^n\phi_i\del[1]{\tilde{g}_{1,l}(b),\hdots, \tilde{g}_{n,l}(b)}	\cdot \phi_i\del[1]{\tilde{g}_{1,j}(b),\hdots, \tilde{g}_{n,j}(b)},\qquad 1\leq l,j\leq L.
\end{equation*}
We now define the following winsorized AR-statistic, which is simultaneously robust to i) adversarial contamination, ii) weak identification, and iii) heteroskedasticity:
\begin{equation*}
	\mathsf{AR}_{n,\text{W}}(\beta_0)
	=
	n\bar{\tilde{g}}_n(\beta_0)'\hat{\Sigma}_{\text{W-AR}}^{-1}(\beta_0)\bar{\tilde{g}}_n(\beta_0).\footnote{On the event that~$\hat{\Sigma}_{\text{W-AR}}(b)$ is not invertible, set~$\mathsf{AR}_{n,\text{W}}(b)=\infty$. This event has probability converging to zero under the assumptions of Theorem~\ref{thm:ARrobust}, but a fixed convention is crucial for the inversion in \eqref{eq:ARConf}.}
\end{equation*}
Because Assumption~\ref{ass:sequence} imposes the strong instrument assumption~$\text{rank}[\E(z_1x_1')]=K$, which is precisely the condition this section removes, Theorem~\ref{thm:ARrobust} is stated under primitive assumptions on the moment vector~$g_1(\beta_0)$ itself.
\begin{theorem}\label{thm:ARrobust}
Fix~$\beta_0\in\R^K$. Suppose that~$\cbr[0]{y_i,x_i,z_i}_{i\in\N}$ is i.i.d.~and~$\E|g_{1,l}(\beta_0)|^m<\infty$ for ~$l=1,\hdots,L$ and some~$m>2$. Furthermore, suppose that $\E(g_1(\beta_0))=0$ and $\text{rank}\sbr[0]{\Sigma(\beta_0)}=L$. If $\eps_n$ is chosen as in~\eqref{eq:epsfam} and \eqref{eq:contamfrac} is satisfied with~$\sqrt{n}\eta_n^{1-\frac1m}\to 0$ as~$n\to\infty$, then
	\begin{equation}\label{eq:ARconv}
	\mathsf{AR}_{n,\text{W}}(\beta_0)\qquad \text{converges in distribution to }\chi^2(L).	
	\end{equation}
\end{theorem}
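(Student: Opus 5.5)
The plan is to reduce the winsorized statistic to the infeasible clean-data statistic $\mathsf{AR}_n(\beta_0)$ in~\eqref{eq:cleanARstat}, and then apply the multivariate CLT together with Slutsky's lemma. Fix $\beta_0$ and write $g_i=g_i(\beta_0)$, $\tilde g_i=\tilde g_i(\beta_0)$. Each coordinate $\tilde g_{i,l}$ is a measurable function of $\tilde V_i$ alone. Hence the index set on which $(\tilde g_{i,l})_{i}$ differs from the clean sequence $(g_{i,l})_i$ is contained in $\mathcal O(V)$, whose cardinality is at most $\eta_n n$ by~\eqref{eq:contamfrac}. Consequently, for each $l$ the sequence $\tilde g_{1,l},\hdots,\tilde g_{n,l}$ is an adversarially contaminated version of the i.i.d.\ clean sample $g_{1,l},\hdots,g_{n,l}$. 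This clean sample has finite $m$-th moment and mean zero.

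\textbf{Step 1 (means).} I will invoke the univariate bound for the quantile-winsorized mean from \cite{Wins1}, namely the same ingredient that drives Proposition~\ref{prop:IVClose}. It gives, for $\eps_n$ as in~\eqref{eq:epsfam} and for an i.i.d.\ sequence with finite $m$-th moment,
\begin{equation*}
\envert[3]{\mu_{\eps_n}(\tilde g_{1,l},\hdots,\tilde g_{n,l})-\frac1n\sum_{i=1}^n g_{i,l}}=O_\P\del[2]{\eta_n^{1-\frac1m}+\sbr[1]{\log(n)/n}^{1-\frac1m}}.
\end{equation*}
Since $m>2$, we have $\sqrt n[\log(n)/n]^{1-1/m}=(\log n)^{1-1/m}n^{1/m-1/2}\to0$. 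Together with $\sqrt n\eta_n^{1-1/m}\to0$, this yields $\sqrt n\,\bar{\tilde g}_n(\beta_0)=\sqrt n\,\bar g_n(\beta_0)+o_\P(1)$. The multivariate Lindeberg--L\'evy CLT then gives $\sqrt n\,\bar g_n(\beta_0)\to \mathsf N_L(0,\Sigma(\beta_0))$, since second moments are finite because $m>2$.

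\textbf{Step 2 (covariance).} I must show $\hat\Sigma_{\text{W-AR}}(\beta_0)\to\Sigma(\beta_0)$ in probability. Here I use Lemma~\ref{lem:productwins}, exactly as it was used for $\hat\Xi$ in Theorem~\ref{thm:FeasibleInference}. Applied to the vectors $(\tilde g_{i,l})_i$ and $(\tilde g_{i,j})_i$, it shows that each entry $\frac1n\sum_i\phi_i(\tilde g_{\cdot,l})\phi_i(\tilde g_{\cdot,j})$ is within $o_\P(1)$ of $\mu_{\eps_n}(\tilde g_{1,l}\tilde g_{1,j},\hdots,\tilde g_{n,l}\tilde g_{n,j})$. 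The products $\tilde g_{i,l}\tilde g_{i,j}$ are again a contamination of at most $\eta_n n$ entries of the i.i.d.\ products $g_{i,l}g_{i,j}$. These products have finite $m/2>1$ moments by Cauchy--Schwarz, so Theorem~2.1 of \cite{Wins1} gives consistency of the winsorized mean for $\E(g_{1,l}g_{1,j})=\Sigma_{l,j}(\beta_0)$ whenever $\eta_n\to0$. Only consistency is needed here, not a rate.

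\textbf{Step 3 (conclusion).} Since $\text{rank}\,\Sigma(\beta_0)=L$, matrix inversion is continuous at $\Sigma(\beta_0)$. Hence $\hat\Sigma_{\text{W-AR}}(\beta_0)$ is invertible with probability tending to one, so the convention $\mathsf{AR}_{n,\text W}=\infty$ is active only on an event of vanishing probability, and $\hat\Sigma^{-1}_{\text{W-AR}}(\beta_0)\to\Sigma(\beta_0)^{-1}$ in probability. By Slutsky and the continuous mapping theorem,
\begin{equation*}
\mathsf{AR}_{n,\text W}(\beta_0)=\xi_n'\Sigma(\beta_0)^{-1}\xi_n+o_\P(1),
\end{equation*}
where $\xi_n\to \mathsf N_L(0,\Sigma(\beta_0))$ in distribution, and this limit is $\chi^2(L)$.

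The main obstacle is Step~2. I need Lemma~\ref{lem:productwins} in the generality of arbitrary coordinates with only $m>2$ moments; in Theorem~\ref{thm:FeasibleInference} it was presumably applied to the $\tilde h_i$ under Assumption~\ref{ass:sequence}. I will check that its hypotheses concern only the $m$-th moments of the individual coordinates and the contamination count, so that it transfers verbatim to the $g_{i,l}$. If it instead needs stronger moments on the products, I would argue directly. The clipping levels are order statistics that are $O_\P$ of the quantiles of $|g_{1,l}|$ at level $\eps_n$, and the products of clipped values differ from clipped products only on at most $2\eps_n n+\eta_n n$ indices. The contribution of these indices is controlled by truncated second moments, which tend to zero by uniform integrability.
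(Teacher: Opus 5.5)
Your proposal is correct and follows the paper's proof essentially step for step. You compare $\sqrt n(\bar{\tilde g}_n(\beta_0)-\bar g_n(\beta_0))=o_\P(1)$ via the winsorized-versus-arithmetic-mean bound (Lemma~\ref{lem:wins_to_arithm_mean} with $\sqrt n\eps_n^{1-1/m}\to0$) and apply the CLT, then obtain $\hat\Sigma_{\text{W-AR}}(\beta_0)\to\Sigma(\beta_0)$ from Lemma~\ref{lem:productwins} together with Theorem~2.1 of \cite{Wins1} at moment order $m/2>1$, and finish with Slutsky. Your worry in the last paragraph is unfounded: Lemma~\ref{lem:productwins} requires only $\max_j\E|S_{1,j}|^m<\infty$ for some $m\ge2$ and the contamination count, so it applies verbatim to $(g_{i,l}(\beta_0),g_{i,j}(\beta_0))$ and the fallback argument is not needed.
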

Theorem~\ref{thm:ARrobust} shows that the robustified Anderson-Rubin statistic~$\mathsf{AR}_{n,\text{W}}(\beta_0)$ --- taking the \emph{contaminated} data as input --- has the same limiting distribution as the classic heteroskedasticity-robust AR statistic~$\mathsf{AR}_{n}(\beta_0)$ in~\eqref{eq:cleanARstat} taking the \emph{clean} data as input. Thus, in this sense robustness can again be obtained for free also without imposing any assumptions of the strength of the instruments. We stress that Theorem~\ref{thm:ARrobust} establishes the asymptotic distribution of~$\mathsf{AR}_{n,\text{W}}(\beta_0)$ in the presence of contamination and thus goes beyond the clean-model asymptotic result for the robust AR statistic studied in \cite{klooster2024outlier}.

By Theorem~\ref{thm:ARrobust}, for any~$\delta\in(0,1)$, a confidence set with asymptotic coverage~$(1-\delta)$ can be constructed by standard test inversion as
\begin{equation}\label{eq:ARConf}
	\hat{C}_{\text{W-AR},1-\delta}
	=
	\cbr[1]{b\in\R^K: \mathsf{AR}_{n,\text{W}}(b)\leq \mathsf{k}_{1-\delta}(L)}, 
\end{equation}
where~$\mathsf{k}_{1-\delta}(L)$ denotes the~$(1-\delta)$-percentile of the~$\chi^2(L)$-distribution.

\section{A uniform finite-sample concentration gap: 2SLS vs.~W-2SLS }\label{sec:FiniteDev}
The previous sections have demonstrated through an \emph{asymptotic} lens how the winsorized 2SLS estimator robustifies the classic 2SLS estimator for free against adversarial contamination. We next show for fixed~$n$ that even absent contamination, amounting to~$\eta_n=0$, the deviations of~$\hat{\beta}_{n,\text{2SLS}}$ from the true~$\beta$ can be much larger than those of~$\hat{\beta}_{n,\text{W-2SLS}}$. Put differently, not only does the robustness of~$\hat{\beta}_{n,\text{W-2SLS}}$ come for free, it can also concentrate stronger around~$\beta$ than~$\hat{\beta}_{n,\text{2SLS}}$ even when the data is clean.  For the purpose of this finite-sample deviation study, we consider the just-identified case of~$L=K=1$ so that~$\hat{\beta}_{n,\text{2SLS}}$ in~\eqref{eq:2SLSM} reduces to
\begin{equation*}
	\hat{\beta}_{n,2\text{SLS}}
	=
	\frac{n^{-1}\sum_{i=1}^nz_iy_i}{n^{-1}\sum_{i=1}^nz_ix_i}
	=
	\beta+\frac{n^{-1}\sum_{i=1}^nz_iu_i}{n^{-1}\sum_{i=1}^nz_ix_i},
\end{equation*}
provided the denominator is non-zero (and set~$\hat{\beta}_{n,2\text{SLS}}=0$, otherwise) . If
\begin{enumerate}
	\item $\cbr[0]{x_i, z_i, u_i}_{i=1}^\infty\in\R^3$ are i.i.d.~satisfying $y_i=x_i\beta+u_i$ and
	\item $\E|z_1x_1|<\infty$,~$0<\E(z_1^2u_1^2)<\infty$, $\E(z_1u_1)=0$, and~$\pi:=\E(z_1x_1)\neq 0$,
\end{enumerate}
it follows by standard arguments that
\begin{equation*}
	\sqrt{n}\del[1]{\hat{\beta}_{n,2\text{SLS}}-\beta}
\qquad\text{converges in distribution to}\qquad
	\mathsf{N}\del[2]{0,\frac{\E(z_1^2u_1^2)}{\pi^2}},\qquad \text{as }n\to\infty.
\end{equation*}
Therefore, for any~$\delta\in(0,1)$ it holds that
\begin{equation*}
	\P\del[3]{|\hat{\beta}_{n,2\text{SLS}}-\beta|> \frac{\Phi^{-1}(1-\delta/2)}{\sqrt{n}}\cdot \frac{\sqrt{\E(z_1^2u_1^2)}}{|\pi|}}\to \delta,
\end{equation*}
where~$\Phi(\cdot)$ is the cdf of~$\mathsf{N}(0,1)$. Thus, since~$\Phi^{-1}(1-\delta/2)\leq \sqrt{2\log(2/\delta)}$,
\begin{equation}\label{eq:asymguar}
		\lim_{n\to\infty}\P\del[3]{|\hat{\beta}_{n,2\text{SLS}}-\beta|> \sqrt{\frac{2\log(2/\delta)}{n}}\cdot\frac{\sqrt{\E(z_1^2u_1^2)}}{|\pi|}}\leq  \delta.
\end{equation}
Since the ``failure probability''~$\delta$ only enters via~$\sqrt{\log(2/\delta)}$, it is tempting to conclude on the basis of these \emph{pointwise} asymptotic considerations that even for very small~$\delta$,~$\hat{\beta}_{n,2\text{SLS}}$ is close to~$\beta$ with probability at least~$1-\delta$. However, this is only true in the above pointwise sense as we shall now detail. To do so, like in Section \ref{sec:Opt}, for any distribution~$Q$ of~$R_i=(y_i,x_i,z_i,u_i)$ on~$\R^4$, let~$P_Q$ be a measure on the underlying measurable space~$(\Omega,\mc{F})$ such that~$P_Q\circ \del[1]{R_1,\hdots,R_n}^{-1}=Q^{\otimes n}$,  cf.~also Footnote~\ref{fn:prodspace}. For any estimator~$T_n$ of~$\beta$, any family of distributions~$\mc{Q}$ on~$\R^4$, and~$\delta\in(0,1)$, define the \emph{uniform deviation radius} (of~$T_n$ over~$\mc{Q}$ at failure probability $\delta$)
\begin{equation*}
	\tau_n(T_n,\mathcal{Q},\delta):=\inf_{t\geq 0}\cbr[3]{\sup_{Q\in\mc{Q}}P_Q(|T_n-\beta(Q)|> t)\leq \delta},
\end{equation*}
where, as usual, we set~$\inf \emptyset=\infty$. For given~$n$ and failure probability~$\delta$,~$\tau_n(T_n,\mathcal{Q},\delta)$ is the worst-case upper~$(1-\delta)$-quantile of the absolute estimation error, or equivalently the smallest deterministic half-length~$t$ of an interval~$[T_n-t,T_n+t]$ containing~$\beta(Q)$ with probability at least~$1-\delta$ uniformly over~$\mathfrak{Q}$. Clearly, a smaller~$\tau_n(T_n,\mathcal{Q},\delta)$ is desirable since it implies stronger concentration of~$T_n$ around~$\beta(Q)$ uniformly over~$\mc{Q}$.

We now show that the winsorized 2SLS estimator can have a much smaller uniform deviation radius than the plain 2SLS estimator. Fix~$\beta\in\R$,~$\underline{\pi}>0$,~$K>0$, and denote by~$\overline{\mathfrak{Q}}(\beta,\underline{\pi},K)$ those distributions~$Q$ of~$R_i=(y_i,x_i,z_i,u_i)$ on~$\R^4$ satisfying
\begin{align*}
	&P_Q\del[0]{y_1=\beta x_1+u_1}=1,\quad E_{P_Q}(z_1^2x_1^2)\leq K,\quad E_{P_Q}(z_1^2u_1^2)\leq K,\\
	&E_{P_Q}(z_1u_1)=0,\quad
	 |E_{P_Q}(z_1x_1)|\geq \underline{\pi}. 
\end{align*}
We stress that this is a setting of strong instruments since~$\underline{\pi}>0$ is fixed. Furthermore, the relevant second moments are uniformly bounded by~$K$. Thus, the poor concentration of the 2SLS estimator exhibited in Theorem~\ref{lem:Bound1} below is \emph{not} an artefact of weak instruments or diverging moments. Indeed,~$\hat{\beta}_{n,\text{W-2SLS}}$ will be seen to exhibit much stronger concentration around~$\beta$ uniformly over~$\overline{\mathfrak{Q}}(\beta,\underline{\pi},K)$. For the purpose of the lower bounds on the uniform concentration radius of~$\hat{\beta}_{n,\text{2SLS}}$ it even suffices to consider the sub-family of distributions~$\mathfrak{Q}$ on~$\R^4$, each~$Q$ element of which satisfies:
\begin{align*}
	&P_Q\del[1]{y_1=x_1+u_1}=1,\quad E_{P_Q}(z_1^2x_1^2)= 2,\quad E_{P_Q}(z_1^2u_1^2)= 2,\\
	&E_{P_Q}(z_1u_1)=0,\quad
	 E_{P_Q}(z_1x_1)=1,\quad E_{P_Q}(u_1)=0, \quad E_{P_Q}(x_1u_1)=1. 
\end{align*}
Note that for~$\beta=1$,~$\underline{\pi}\leq 1$, and~$K\geq 2$, it holds that~$\mathfrak{Q}\subseteq\overline{\mathfrak{Q}}(\beta,\underline{\pi},K)$. We stress that for the definition of~$\mathfrak{Q}$ the specific values of the expectations, choice of~$\beta=1$, and~$\pi=\E(z_1x_1)=1$ are for concreteness only and other choices would simply change the multiplicative constant in Theorem~\ref{lem:Bound1} below.\footnote{Indeed, a more general construction is provided in Lemma \ref{lem:dist1} in the appendix.}

\begin{theorem}\label{lem:Bound1}
Fix~$n\geq 2$ and $\delta\in\del[1]{0,1/4}$. Then,
\begin{equation*}
	\tau_n(\hat{\beta}_{n,\text{2SLS}},\mathfrak{Q},\delta)\geq \frac{1}{2}\del[2]{\frac{\delta^{-1}}{n}}^{1/2}.
\end{equation*}
\end{theorem}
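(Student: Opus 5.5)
The plan is a Catoni-type construction. I will exhibit, for given $n\geq 2$ and $\delta\in(0,1/4)$, a single $Q\in\mathfrak{Q}$ under which the numerator $n^{-1}\sum_i z_iu_i$ of $\hat\beta_{n,\text{2SLS}}-1$ has a rare, heavy component. With probability larger than $\delta$ this component pushes the estimator at least $\tfrac12(\delta n)^{-1/2}$ away from $\beta=1$. The second-moment constraints in $\mathfrak{Q}$ then pin the size of the jump to order $(np)^{-1/2}$, where $p\asymp\delta/n$ is the probability of the rare event.

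\textbf{Construction.} I take $z_1=1$ almost surely, so that $\hat\beta_{n,\text{2SLS}}-1=\bar u_n/\bar x_n$. I decompose $u_1=s_1+v_1$ and set $x_1=1+s_1$, where $s_1$ and $v_1$ are independent and mean zero. Here $s_1$ is bounded (e.g.\ Rademacher) with $\E s_1^2=1$, and $v_1=\pm a$ with probability $p/2$ each and $v_1=0$ otherwise, with $a^2p=1$. Then $\E(z_1x_1)=1$, $\E(z_1^2x_1^2)=2$, $\E(z_1u_1)=\E u_1=0$, $\E(z_1^2u_1^2)=2$ and $\E(x_1u_1)=\E s_1^2=1$, so $Q\in\mathfrak{Q}$. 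If the constants do not match exactly, I would instead use the more general family of Lemma~\ref{lem:dist1} in the appendix, which presumably allows tuning these moments.

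\textbf{Deviation event.} On the event $E$ that exactly one $v_i$ is nonzero, $\bar u_n=\bar s_n\pm a/n$ and $\bar x_n=1+\bar s_n\in[0,2]$. Hence $|\hat\beta_{n,\text{2SLS}}-1|\geq |\bar s_n\pm a/n|/2$ whenever $\bar x_n\neq 0$. If $\bar x_n=0$, then $\hat\beta_{n,\text{2SLS}}=0$ by convention, and the deviation equals $1$, which is at least as large. The sign of the lone $v_i$ is independent of $\bar s_n$ and symmetric, so with conditional probability at least $1/2$ it agrees with the sign of $\bar s_n$, and then $|\bar s_n\pm a/n|\geq a/n$. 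Therefore
\[
P_Q\big(|\hat\beta_{n,\text{2SLS}}-1|\geq a/(2n)\big)\geq \tfrac12\, np(1-p)^{n-1}.
\]
Writing $np=x$, we get $a/n=(nx)^{-1/2}$, so $a/(2n)\geq t:=\tfrac12(\delta n)^{-1/2}$ whenever $x\leq\delta$.

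\textbf{Main obstacle: constants.} With this crude version, the probability lower bound $\tfrac12 x(1-x/n)^{n-1}$ falls short of $\delta$ once $x\leq \delta$. The lossy factors are the $\tfrac12$ from the sign-matching step and the $\tfrac12$ from the denominator bound $\bar x_n\leq 2$. I would recover them as follows.
\begin{enumerate}
\item Make $s_1$ tiny in magnitude but still with $\E(x_1u_1)=1$. To do so, move the correlation into the rare component: set $x_1=1+c\,v_1+w_1$ with $w_1$ bounded and independent, choosing $c$ and the law of $w_1$ to satisfy $\E x_1u_1=1$ and $\E x_1^2=2$. On $E$, the denominator is then $1+O(a/n)+\bar w_n$, and its effect can be controlled directly.
\item Replace the sign-matching argument by the observation that $|\bar s_n|\leq a/(2n)$ already forces $|\bar u_n|\geq a/(2n)$. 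A Chebyshev bound gives $|\bar s_n|\leq a/(2n)$ with probability close to one, because $\mathrm{Var}(\bar s_n)=1/n\ll a^2/n^2\asymp 1/(n\delta)$ when $\delta<1/4$.
\item Count the event ``at least one nonzero $v_i$'' rather than ``exactly one'' where possible.
\end{enumerate}
The delicate part of the proof is checking that $\delta<1/4$ leaves enough room to pick $p$ so that $a/(2n)\cdot(\text{denominator factor})>t$ while $P_Q(\cdot)>\delta$. Once a single such $Q$ is exhibited, $\sup_{Q\in\mathfrak{Q}}P_Q(|\hat\beta_{n,\text{2SLS}}-\beta(Q)|>t')>\delta$ holds for all $t'<t$, which gives $\tau_n\geq t$.
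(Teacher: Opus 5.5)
Your distribution does lie in $\mathfrak{Q}$, and your overall plan is the paper's: a rare spike of size $a$ in $u_1$ with $a^2p$ fixed, $np$ tuned to order $\delta$, then monotonicity to pass from $\ge$ to $>$. But the proposal does not prove the stated bound. The quantitative step you defer is the whole difficulty, and the repairs you list do not work.

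Your crude bound needs a small correction on $\{\bar x_n=0\}$, where the deviation is $1<a/(2n)$ once $a/n>2$. More importantly, it exceeds $\delta$ only if $np>2\delta$. That leaves $a/(2n)<\frac{1}{2\sqrt2}(n\delta)^{-1/2}$. Moreover, $\max_p\frac12np(1-p)^{n-1}=\frac12(1-1/n)^{n-1}<\frac14$ for $n\ge3$ (it tends to $1/(2e)$), so the crude bound gives nothing for $\delta$ near $1/4$.

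\textbf{Repair 1.} Independence of $w_1$ gives $\E(x_1u_1)=c\,\E v_1^2$ with $\E v_1^2\le\E u_1^2=2$, so $c\ge 1/2$. The spike then enters the denominator at the same scale as the numerator. On the spike event, $(\bar s_n\pm a/n)/(1+\bar w_n\pm ca/n)\to 1/c\le 2$ as $a/n\to\infty$. That is exactly the regime of small $n\delta$, where $t=\frac12(n\delta)^{-1/2}$ is large, so the denominator is not controllable there.

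\textbf{Repair 2.} The spike event has probability at most $np$, so you need $np>\delta$ and hence $a/(2n)<t$. The bound $|\bar u_n|\ge a/(2n)$ therefore discards exactly the factor you are trying to recover. Also, Chebyshev only gives $P(|\bar s_n|>a/(2n))\le 4n/a^2=4np$, which is vacuous for $\delta$ near $1/4$.

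\textbf{Repair 3.} Counting ``at least one'' spike fails because spikes of opposite sign cancel.

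The missing idea is that the component of $x_1$ supplying $\E(z_1^2x_1^2)=2$ and $\E(x_1u_1)=1$ must itself be rare, on a finer scale than the spike. Then it is absent from the entire sample with probability close to one. A bounded nondegenerate $s_1$ never disappears from $\bar x_n$ and $\bar u_n$, and that is what costs the constants.

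The paper (Lemma~\ref{lem:dist1}) takes $z_1$ Rademacher, $x_1=z_1+V$, $u_1=z_1C+V$, with independent $V=\pm n$ with probability $1/(2n^2)$ each and $C=\pm an$ with probability $1/(2n^2a^2)$ each; $z_1\equiv1$ would work as well. Consider $\mathcal{G}_n=\{\sum_iz_iV_i=0\}\supseteq\{V_1=\cdots=V_n=0\}$. It has probability $p_n\ge(1-n^{-2})^n\ge1-1/n$ and is independent of the $C_i$. On $\mathcal{G}_n$ the denominator is exactly $1$ and $\hat\beta_{n,\text{2SLS}}-1=n^{-1}\sum_iC_i$, which equals $\pm a$ whenever exactly one $C_i\neq0$.

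Hence there is no sign or denominator loss: $P(|\hat\beta_{n,\text{2SLS}}-1|\ge a)\ge f(s):=ns(1-s)^{n-1}p_n$ with $s=(na)^{-2}$, and $\max f=(1-1/n)^{n-1}p_n\ge(1-1/n)^n\ge\frac14>\delta$. Solve $f(s^*)=\delta$ on $[0,1/n)$, and bound $s^*$ by the smaller root of the quadratic obtained from $(1-s)^{n-1}\ge1-(n-1)s$. This gives $a^*=(n^2s^*)^{-1/2}\ge\frac12(\delta n)^{-1/2}$. Strict monotonicity of $f$ on $[0,1/n]$ then yields probability $>\delta$ at every radius below $a^*$.
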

The dependence of~$\tau_n(\hat{\beta}_{n,\text{2SLS}},\mathfrak{Q},\delta)$ on~$n$ and~$\delta$ in Theorem \ref{lem:Bound1} is sharp up to universal constants since Lemma \ref{lem:2sls-upper-bound} in the appendix and $\mathfrak{Q}\subseteq \overline{\mathfrak{Q}}(1,1,2)$ imply that for~$n\delta\geq 16$
\begin{equation*}
\tau_n(\hat{\beta}_{n,\text{2SLS}},\mathfrak{Q},\delta)
\leq 	
4\del[2]{\frac{\delta^{-1}}{n}}^{1/2}.
\end{equation*}
It follows that for~$\delta=\delta_n$
\begin{equation}\label{eq:IFF}
\tau_n(\hat{\beta}_{n,\text{2SLS}},\mathfrak{Q},\delta_n)\to 0\qquad \Longleftrightarrow\qquad \delta_nn\to \infty.	
\end{equation}

We next show that replacing the empirical IV moments by
confidence-calibrated winsorized means yields a fundamentally
smaller uniform deviation radius. To provide a clean comparison with the results for~$\hat{\beta}_{n,\text{2SLS}}$ in Theorem~\ref{lem:Bound1}, which fixes the failure probability~$\delta$, we consider the following variation on~$\eps_n$ in~\eqref{eq:epsfam} when implementing the winsorized 2SLS estimator for a given~$\delta\in(0,1)$. Set
\begin{equation}\label{eq:epsfamDevAna}
\eps:=\frac{\log(12/\delta)}{n}.
\end{equation} 
The winsorized 2SLS estimator in~\eqref{eq:W2SLSM} then reduces to\footnote{We set $\hat{\beta}_{n,\text{W-2SLS}}^{(\delta)}=0$ on the event~$\mu_{\eps}\del[0]{z_1x_1,\hdots,z_nx_n}=0$, which is outside the high-probability event in Theorem~\ref{thm:W-2SLS}} the confidence calibrated
\begin{equation*}
	\hat{\beta}_{n,\text{W-2SLS}}^{(\delta)}
	=
	\frac{\mu_{\eps}\del[1]{z_1y_1,\hdots,z_ny_n}}{\mu_{\eps}\del[1]{z_1x_1,\hdots,z_nx_n}}.
\end{equation*}

Observe that since~$\eta_n=0$ throughout this section, choosing~$\delta=12n^{-\lambda}$ implies that $\hat{\beta}_{n,\text{W-2SLS}}^{(12n^{-\lambda})}$ reduces to~$\hat{\beta}_{n,\text{W-2SLS}}$ in~\eqref{eq:W2SLSM} because~$\eps$ in~\eqref{eq:epsfamDevAna} equals~$\eps_n$ in~\eqref{eq:epsfam}. 
\begin{theorem}\label{thm:W-2SLS}
Fix~$\beta\in\R$,~$\underline{\pi}>0$,~$K>0$,~$n\in\N$, and~$\delta\in(0,1)$. Suppose that
\begin{equation}\label{eq:epscondDev}
\del[1]{3+\sqrt{5}}\frac{\log(12/\delta)}{n}<1\qquad\text{and}\qquad \mathfrak{B}\sqrt{K\frac{\log(12/\delta)}{n}}\leq \frac{\underline{\pi}}{2}
\end{equation}
where~$\mathfrak{B}$ is the positive constant given in~\eqref{eq:Bconstant} of the appendix. If~$\eps$ is chosen as in~\eqref{eq:epsfamDevAna} and~\eqref{eq:epscondDev} is satisfied, then for a positive constant~$C$ depending only on~$\beta,\underline{\pi}$ and~$K$, it holds that
\begin{equation}\label{eqn:m2bound}
	\tau_n\del[1]{\hat{\beta}_{n,\text{W-2SLS}}^{(\delta)},\overline{\mathfrak{Q}}(\beta,\underline{\pi},K),\delta}
	\leq 
	C \cdot \sqrt{\frac{\log(12/\delta)}{n}}.
\end{equation}
\end{theorem}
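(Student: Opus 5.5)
The plan is to reduce the ratio estimator to two scalar winsorized means and control each with a sub-Gaussian deviation bound for the quantile-winsorized mean under only a second-moment assumption. The bound we need is of the Catoni/median-of-means type: for i.i.d.\ $s_1,\hdots,s_n$ with mean $\mu$ and variance at most $\sigma^2$, and $\eps=\log(12/\delta)/n$ satisfying the first condition in \eqref{eq:epscondDev}, with probability at least $1-\delta/2$,
\begin{equation*}
|\mu_{\eps}(s_1,\hdots,s_n)-\mu|\leq \mathfrak{B}'\sigma\sqrt{\frac{\log(12/\delta)}{n}}
\end{equation*}
for a universal constant $\mathfrak{B}'$. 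This is the $\eta_n=0$, $m=2$ case of the deviation results for winsorized means in \cite{Wins1}. Presumably $\mathfrak{B}$ in \eqref{eq:Bconstant} is exactly this constant, and the condition $(3+\sqrt5)\eps<1$ is what guarantees that the relevant order statistics are well defined and that the quantile-localization step works.

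If this bound must be proved rather than cited, the steps are as follows.
\begin{enumerate}
\item Show via Chernoff/binomial tail bounds that, with probability at least $1-\delta/6$, the empirical quantiles $s^*_{(\lceil\eps n\rceil)}$ and $s^*_{(\lfloor(1-\eps)n\rfloor+1)}$ lie between the population quantiles at levels of order $\eps/2$ and $2\eps$. By Chebyshev these are at distance at most $\sigma/\sqrt{\eps}$ from $\mu$.
\item On that event, sandwich $\mu_\eps$ between averages of $\phi_{a,b}(s_i)$ for deterministic truncation levels $a,b$.
\item Control these averages by Bernstein's inequality: the variance is at most $\sigma^2$ and the range is at most of order $\sigma/\sqrt{\eps}$, so the deviation is $\sigma\sqrt{\eps}+\sigma\sqrt{\eps}$.
\item Bound the truncation bias $|\E\phi_{a,b}(s_1)-\mu|$ by $\E|s_1-\mu|\mathds{1}\{|s_1-\mu|>c\sigma/\sqrt\eps\}\leq \sigma^2\sqrt\eps/(c\sigma)$, which is again of order $\sigma\sqrt\eps$.
\end{enumerate}
The bias step is the most delicate part, because the truncation levels are random and monotonicity of $\phi$ in its endpoints must be used carefully.

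With the lemma in hand, apply it to $s_i=z_ix_i$, where the variance is at most $E(z_1^2x_1^2)\leq K$, and to $s_i=z_iu_i$, where the variance is at most $K$ and the mean is $0$. This gives, on an event of probability at least $1-\delta$,
\begin{equation*}
|\mu_\eps(z x)-\pi|\leq \mathfrak{B}\sqrt{K\eps}\leq\underline\pi/2,
\qquad
|\mu_\eps(zu)|\leq\mathfrak{B}\sqrt{K\eps}.
\end{equation*}
Here the second part of \eqref{eq:epscondDev} ensures that the denominator satisfies $|\mu_\eps(zx)|\geq\underline\pi/2$.

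The only wrinkle is that $\mu_\eps$ is nonlinear, so $\mu_\eps(zy)\neq\beta\mu_\eps(zx)+\mu_\eps(zu)$. We therefore do not decompose. Instead we apply the lemma directly to $s_i=z_iy_i=\beta z_ix_i+z_iu_i$, whose mean is $\beta\pi$ and whose variance is at most $2\beta^2K+2K$. Then
\begin{equation*}
|\hat\beta^{(\delta)}-\beta|=\frac{|\mu_\eps(zy)-\beta\mu_\eps(zx)|}{|\mu_\eps(zx)|}\leq\frac{2}{\underline\pi}\Big(|\mu_\eps(zy)-\beta\pi|+|\beta||\mu_\eps(zx)-\pi|\Big)\leq C\sqrt{\frac{\log(12/\delta)}{n}}.
\end{equation*}

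Splitting the failure probability as $\delta/2+\delta/2$ requires the lemma at level $\delta/2$. This is absorbed by the constant $12$ in $\log(12/\delta)$: each two-sided application costs two quantile events and two Bernstein events, each at level $\delta/12$. The bounds are uniform over $\overline{\mathfrak Q}(\beta,\underline\pi,K)$ because they depend on $Q$ only through $K$, $\underline\pi$ and $\beta$, which gives \eqref{eqn:m2bound}.
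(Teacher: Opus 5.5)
Your proposal is correct and matches the paper's proof: the paper likewise invokes Theorem~2.1 of \cite{Wins1} with $\eta=0$ and $m=2$ at level $\delta/2$ (which is where $\log(12/\delta)$ comes from), applies it directly to $z_iy_i$ and $z_ix_i$ instead of decomposing the nonlinear $\mu_\eps$, and concludes with the same triangle inequality and denominator bound $|\mu_\eps(zx)|\geq\underline{\pi}/2$; your preliminary application to $z_iu_i$ is superfluous and should be dropped, since a third application would exceed the failure budget $\delta$. One caveat concerns only your optional self-contained sketch: step~4 as written presumes the truncation levels lie at distance $\gtrsim\sigma/\sqrt{\eps}$ from $\mu$, which quantile-based levels need not satisfy (for Gaussian data they are only of order $\sigma\sqrt{\log(1/\eps)}$ away), so the bias should instead be bounded via $\E(s_1-b)_+\leq\E|s_1-\mu|\mathds{1}\{s_1>b\}+(\mu-b)_+\P(s_1>b)\lesssim\sigma\sqrt{\eps}$ using $\P(s_1>b)\lesssim\eps$ and Cauchy--Schwarz, and in step~1 the localization constants cannot be $1/2$ and $2$, because $\eps n=\log(12/\delta)$ fixes the Chernoff exponent --- they must be the Lambert-$W$ constants $c_1\approx 0.16$ and $c_2\approx 3.15$ of \eqref{eq:c1}--\eqref{eq:c2}, with $(3+\sqrt{5})\eps<1$ being what keeps $c_2\eps<1$.
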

Since~$\mathfrak{Q}\subseteq \overline{\mathfrak{Q}}(1,1,2)$ Theorems~\ref{lem:Bound1} and~\ref{thm:W-2SLS} compare $\hat{\beta}_{n,\text{2SLS}}$ and~$\hat{\beta}_{n,\text{W-2SLS}}^{(\delta)}$ on the same strongly identified, bounded second moments class. Although the lower and upper bounds on the uniform deviation radii of $\hat{\beta}_{n,\text{2SLS}}$ and $\hat{\beta}_{n,\text{W-2SLS}}^{(\delta)}$ have the same~$n^{-1/2}$-dependence on~$n$ for fixed~$\delta$, they depend very differently on~$\delta$: 
\begin{align*}
	&\tau_n(\hat{\beta}_{n,\text{2SLS}},\mathfrak{Q},\delta_n)\to 0\qquad \Longleftrightarrow\qquad \delta_nn\to \infty,\\
	&\tau_n\del[1]{\hat{\beta}_{n,\text{W-2SLS}}^{(\delta_n)},\mathfrak{Q},\delta_n}\to 0\qquad \text{if}\qquad \log(1/\delta_n)/n\to 0.
\end{align*}
 This contrast in the concentration properties of~$\hat{\beta}_{n,\text{2SLS}}$ and~${\hat{\beta}_{n,\text{W-2SLS}}^{(\delta)}}$ is particularly striking for small failure probabilities~$\delta$. Studying failure probabilities~$\delta=\delta_n$ that converge to zero as~$n\to\infty$ can be informative since it is reasonable to demand a lower failure probability~$\delta$ as the sample size increases. 
  
%
\begin{corollary}\label{cor:conc}
Under the assumptions of Theorems~\ref{lem:Bound1} and~\ref{thm:W-2SLS} it holds that
\begin{equation}\label{eq:smallratio}
	\frac{\tau_n\del[1]{\hat{\beta}_{n,\text{W-2SLS}}^{(\delta)},\mathfrak{Q},\delta}}{\tau_n(\hat{\beta}_{n,\text{2SLS}},\mathfrak{Q},\delta)}
	\leq
	2C\cdot \sqrt{\delta\log(12/\delta)}
\end{equation}
Furthermore, for every~$\lambda\in(0,\infty)$ and~$\delta_n=12n^{-\lambda}$ it holds for~$n$ sufficiently large that
\begin{equation}\label{eq:PoorConc}
\tau_{n,\text{2SLS}}:=\tau_n(\hat{\beta}_{n,\text{2SLS}},\mathfrak{Q},12n^{-\lambda})\geq \frac{1}{2\sqrt{12}}\cdot n^{(\lambda-1)/2}	
\end{equation}
and there exists a constant~$C$  such that
\begin{equation}\label{eq:GoodConc}
	\tau_{n,\text{W-2SLS}}:=\tau_n\del[1]{\hat{\beta}_{n,\text{W-2SLS}}^{(12n^{-\lambda})},\mathfrak{Q},12n^{-\lambda}}\leq C \sqrt{\frac{\lambda\log(n)}{n}}.
\end{equation}
\end{corollary}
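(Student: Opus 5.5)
The corollary follows by combining the lower bound of Theorem~\ref{lem:Bound1} with the upper bound of Theorem~\ref{thm:W-2SLS}; no new probabilistic argument is needed. For \eqref{eq:smallratio}, we first use $\mathfrak{Q}\subseteq\overline{\mathfrak{Q}}(1,1,2)$. A supremum over a smaller family is smaller, so $\tau_n(T_n,\mathcal{Q},\delta)$ is monotone in $\mathcal{Q}$. Hence Theorem~\ref{thm:W-2SLS}, applied with $\beta=1$, $\underline{\pi}=1$, $K=2$, gives
\[
\tau_n\bigl(\hat{\beta}_{n,\text{W-2SLS}}^{(\delta)},\mathfrak{Q},\delta\bigr)\leq C\sqrt{\log(12/\delta)/n},
\]
with $C$ depending only on these fixed parameters. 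Dividing by the lower bound $\tfrac12(\delta n)^{-1/2}$ from Theorem~\ref{lem:Bound1}, which is valid for $n\geq2$ and $\delta\in(0,1/4)$, makes the factors of $n$ cancel. The ratio is then at most $2C\sqrt{\delta\log(12/\delta)}$. The standing assumptions of both theorems ensure that the numerator bound is finite and the denominator is positive.

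For \eqref{eq:PoorConc}, put $\delta_n=12n^{-\lambda}$. For $n$ large, $\delta_n<1/4$, namely once $n^{\lambda}>48$. Theorem~\ref{lem:Bound1} then yields
\[
\tau_{n,\text{2SLS}}\geq\tfrac12\bigl(n^{\lambda}/(12n)\bigr)^{1/2}=\tfrac{1}{2\sqrt{12}}\,n^{(\lambda-1)/2}.
\]

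For \eqref{eq:GoodConc}, note that $\log(12/\delta_n)=\lambda\log n$. We must therefore check that \eqref{eq:epscondDev} holds for large $n$ with $\beta=1$, $\underline{\pi}=1$, $K=2$. Both conditions reduce to $\lambda\log(n)/n$ being smaller than a fixed constant. The first requires $(3+\sqrt5)\lambda\log(n)/n<1$, and the second requires $\mathfrak{B}\sqrt{2\lambda\log(n)/n}\leq 1/2$. Both hold eventually because $\log(n)/n\to0$. Theorem~\ref{thm:W-2SLS}, again combined with monotonicity in the family $\mathcal{Q}$, then gives $\tau_{n,\text{W-2SLS}}\leq C\sqrt{\lambda\log(n)/n}$.

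The only steps needing care are bookkeeping. The first is the monotonicity of $\tau_n$ in the family, which is immediate from its definition as an infimum over $t$ of a condition on a supremum over $Q$. The second is verifying the eventual validity of \eqref{eq:epscondDev} and of $\delta_n<1/4$. None of these is a genuine obstacle.
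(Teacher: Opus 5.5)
Your proposal is correct and follows the paper's proof. You obtain \eqref{eq:smallratio} by dividing the upper bound of Theorem~\ref{thm:W-2SLS}, transferred to $\mathfrak{Q}$ via $\mathfrak{Q}\subseteq\overline{\mathfrak{Q}}(1,1,2)$, by the lower bound of Theorem~\ref{lem:Bound1}, and you get \eqref{eq:PoorConc}--\eqref{eq:GoodConc} by checking that $\delta_n=12n^{-\lambda}$ eventually lies in $(0,1/4)$ and satisfies \eqref{eq:epscondDev}, with the monotonicity of $\tau_n$ in the family and the arithmetic spelled out where the paper leaves them implicit.
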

Recall that $\delta=12n^{-\lambda}$ implies that~$\hat{\beta}_{n,\text{W-2SLS}}^{(12n^{-\lambda})}$ equals $\hat{\beta}_{n,\text{W-2SLS}}$ in~\eqref{eq:W2SLSM} and that the assumptions of Theorems~\ref{lem:Bound1} and~\ref{thm:W-2SLS} are eventually satisfied. Together with~\eqref{eq:IFF},~Corollary \ref{cor:conc} therefore reveals that the ratio of uniform deviation radii satisfies
\begin{equation*}
	\tau_{n,\text{W-2SLS}}/\tau_{n,\text{2SLS}}\lesssim \sqrt{\log(n)/n^\lambda}\to 0,\qquad \text{for every }\lambda\in(0,\infty)
\end{equation*}
as well as the existence of the following regimes:

\begin{center}
\begin{tabular}{ccc}
\toprule
	&$\hat{\beta}_{n,\text{2SLS}}$& $\hat{\beta}_{n,\text{W-2SLS}}^{(12n^{-\lambda})}$\\
\midrule
$\lambda\in(0,1)$ & $\tau_{n,\text{2SLS}}\to 0$ & $\tau_{n,\text{W-2SLS}}\to 0$\\
$\lambda=1$ & $\tau_{n,\text{2SLS}}\not\to 0$ & $\tau_{n,\text{W-2SLS}}\to 0$\\
$\lambda\in(1,\infty)$ & $\tau_{n,\text{2SLS}}\to\infty$ & $\tau_{n,\text{W-2SLS}}\to 0$\\
\bottomrule
\end{tabular}
\end{center}
%
Hence winsorized 2SLS has a vanishing uniform deviation radius at every polynomial confidence level $1-12n^{-\lambda}$, whereas ordinary 2SLS ceases to have a vanishing uniform $(1-12 n^{-\lambda})$-deviation radius at~$\lambda=1$.
\begin{remark}
	The uniform finite-sample result in Theorem~\ref{lem:Bound1} puts the pointwise (in~$Q\in\mathfrak{Q}$) asymptotic guarantee in~\eqref{eq:asymguar} into perspective: Although for fixed~$Q\in\mathfrak{Q}$ and~$n\to\infty$, it holds by~\eqref{eq:asymguar} along with~$\pi=E_{P_Q}(z_1x_1)=1$ and~$E_{P_Q}(z_1^2u_1^2)=2$  that,
\begin{equation}\label{eq:pointwisw2}
	\lim_{n\to\infty}P_Q\del[3]{|\hat{\beta}_{n,\text{2SLS}}-\beta(Q)|>2\sqrt{\frac{\log(2/\delta)}{n}}}\leq \delta,
\end{equation}
Theorem~\ref{lem:Bound1} implies, in particular, that for~$n\geq 2$ and~$\delta\in(0,1/4)$ there exists a~$Q'\in\mathfrak{Q}$ depending on~$n$ and~$\delta$ for which
\begin{equation*}
P_{Q'}\del[3]{|\hat{\beta}_{n,\text{2SLS}}-\beta(Q')|>\frac{1}{4}\del[2]{\frac{\delta^{-1}}{n}}^{1/2}}>\delta.
\end{equation*}
Since~$\del[1]{\frac{\delta^{-1}}{n}}^{1/2}\gg\sqrt{\frac{\log(2/\delta)}{n}}$ for~$\delta$ small it follows that the convergence in \eqref{eq:pointwisw2} is \emph{not} uniform over~$\mathfrak{Q}$ even though~$E_{P_Q}(z_1x_1)=1$,~$E_{P_Q}(z_1u_1)=0$ and~$E_{P_Q}(z_1^2u_1^2)=E_{P_Q}(z_1^2x_1^2)=2$ for all~$Q\in\mathfrak{Q}$ in
\begin{equation*}
	\hat{\beta}_{n,\text{2SLS}}-\beta
	=
	\frac{n^{-1}\sum_{i=1}^nz_iu_i}{n^{-1}\sum_{i=1}^nz_ix_i}.
\end{equation*}
Thus, the lower bound in Theorem~\ref{lem:Bound1} is not driven by weak instruments or diverging moments. The construction in its proof keeps the denominator~$n^{-1}\sum_{i=1}^nz_ix_i$ at its expectation of one with high probability and exploits the familiar poor concentration of the empirical mean in the numerator~$n^{-1}\sum_{i=1}^nz_iu_i$, cf.~also Example 5.49 in~\cite{romanosiegel} and Proposition 6.2 in \cite{catoni2012challenging}.
	\end{remark}
	
\begin{remark}
As expected, in light of the results in the previous sections, one can establish a version of Theorem~\ref{thm:W-2SLS} that remains valid also in the presence of contamination. Such a result would replace the current right-hand side inside the probability in~\eqref{eqn:m2bound} by $C \cdot \del[1]{\sqrt{\eta_n}+\sqrt{\log(12/\delta)/n}}$, for a different constant~$C$. 	
\end{remark}

\newpage
\begin{appendix}
\numberwithin{equation}{section}

\section{Proof of Proposition~\ref{prop:IVClose}}

Let~$A$,~$B$, and~$c$ be~$L\times K$,~$L\times L$, and~$L\times 1$ real matrices, respectively. Let
\begin{equation*}
	\mc{D}=\cbr[1]{(A,B,c):\text{det}(B)\neq 0\text{ and }\text{det}(A'B^{-1}A)\neq 0}\subseteq\R^{KL+L^2+L},
\end{equation*}
and noting that~$\mc{D}$ non-empty and open, define~$\Psi:\mc{D}\to \R^K$ as
\begin{equation*}
	\Psi\del[1]{A,B,c}
	=
	\del[1]{A'B^{-1}A}^{-1}A'B^{-1}c.
\end{equation*}
For~$Q_{ZX}=\E(z_1x_1')$,~$Q_{ZZ}=\E(z_1z_1')$,~$Q_{Zy}=\E(z_1y_1)$, observe  at~$Q:=\del[1]{Q_{ZX},Q_{ZZ},Q_{Zy}}\in\mc{D}$ by Condition 3.~of Assumption~\ref{ass:sequence} and that that~$\Psi$ is continuously differentiable on~$\mc{D}$. Thus, there exists a compact convex set~$\mc{K}\subseteq \mc{D}$ with~$Q$ in its interior such that~$\Psi$ has a bounded derivative on~$\mc{K}$. By the law of large numbers it holds with probability converging to one that
\begin{equation*}
	n^{-1}\del[1]{Z'X,Z'Z,Z'y}\qquad\text{is an element of } \mc{K}.
\end{equation*}
We next argue that
\begin{equation}\label{eq:close}
\enVert[2]{n^{-1}\del[1]{Z'X,Z'Z,Z'y}-\del[1]{W_{ZX},W_{ZZ},W_{Zy}}}
=
O_{\P}\del[2]{\eps_n^{1-\frac{1}{m}}}	
\end{equation}
for all vector norms~$||\cdot||$ on~$\R^{KL+L^2+L}$. To this end, it suffices to argue that
\begin{align}
	&\envert[2]{\frac{1}{n}(Z'X)_{l,k}-W_{ZX,l,k}}
	=
	O_{\P}\del[2]{\eps_n^{1-\frac{1}{m}}},\quad \envert[2]{\frac{1}{n}(Z'Z)_{l,j}-W_{ZZ,l,j}}
	=
	O_{\P}\del[2]{\eps_n^{1-\frac{1}{m}}},\label{eq:firsttwo}\\
	&\envert[2]{\frac{1}{n}(Z'y)_{l}-W_{Zy,l}}
	=
	O_{\P}\del[2]{\eps_n^{1-\frac{1}{m}}},\quad \text{for }l,j=1,\hdots,L\text{ and }k=1,\hdots, K,\label{eq:third}
\end{align}
which we do by verifying the conditions of Lemma~\ref{lem:wins_to_arithm_mean} applied with~$\lambda_1=1.01$,~$\lambda_2=1$,~$\delta=6 n^{-\lambda}\to 0$, and~$\eta=\eta_n$ such that~$\eps$ there equals~$\eps_n=1.01\cdot \eta_n +\lambda\cdot \frac{\log(n)}{n}$. With these choices, Condition~\eqref{eq:epscond} is eventually satisfied.  Note also that it follows from $\E (|z_{1,l}z_{1,j}|^m)<\infty$ and~$\E(|z_{1,l}x_{1,k}|^m)<\infty$ for~$l,j=1,\hdots,L$ and~$k=1,\hdots,K$ that
\begin{equation*}
	\E\envert[1]{z_{1,l}x_{1,k}-\E(z_{1,l}x_{1,k})}^m<\infty\qquad \text{and}\qquad \E\envert[1]{z_{1,l}z_{1,j}-\E(z_{1,l}z_{1,j})}^m<\infty,
\end{equation*}
such that Lemma~\ref{lem:wins_to_arithm_mean} applied with~$S_i=z_{i,l}x_{i,k}$ and~$S_i=z_{i,l}z_{i,j}$, respectively, for~$i=1,\hdots,n$ yields~\eqref{eq:firsttwo} (for a desired tail probability, first choose~$M$ in Lemma~\ref{lem:wins_to_arithm_mean} large, then~$n$ large). Finally, denoting by~$||\cdot||_\infty$ the supremum-norm of a vector with real entries, observe that
\begin{equation*}
	|z_{1,l}y_1|
	=
	|z_{1,l}x_{1}'\beta+z_{1,l}u_1|
	\leq
	\sum_{k=1}^K|z_{1,l}x_{1,k}|\cdot||\beta||_\infty+|z_{1,l}u_1|.
\end{equation*}
implying that
\begin{equation}\label{eq:BDmoments1}
	\E|z_{1,l}y_1|^m
	\leq
	(K+1)^m||\beta||_\infty^m\cdot\sum_{k=1}^K\E|z_{1,l}x_{1,k}|^m+(K+1)^m\E|z_{1,l}u_1|^m
	<\infty,
	\end{equation}
and hence
\begin{equation}\label{eq:BDmoments}
	\E|z_{1,l}y_1-\E(z_{1,l}y_1)|^m<\infty,\qquad \text{for}\qquad l=1,\hdots,L,
\end{equation}
so that we can also use Lemma~\ref{lem:wins_to_arithm_mean} with~$S_i=z_{i,l}y_i$ to conclude~\eqref{eq:third} and hence~\eqref{eq:close}. Therefore, because~$\eps_n\to 0$, also~$\del[1]{W_{ZX},W_{ZZ},W_{Zy}}\in \mc{K}$ with probability converging to one and so the boundedness of the derivative of~$\Psi$ on~$\mc{K}$ implies that by the mean value inequality
\begin{equation*}
	\hat{\beta}_{n,\text{W-2SLS}}-\hat{\beta}_{n,\text{2SLS}}
	=
	\Psi\del[2]{\del[1]{n^{-1}Z'X,n^{-1}Z'Z,n^{-1}Z'y}}-\Psi\del[1]{W_{ZX},W_{ZZ},W_{Zy}}
	=
	O_{\P}\del[2]{\eps_n^{1-\frac{1}{m}}}.	
\end{equation*}
\section{Proof of Theorems~\ref{thm:estim} and~\ref{thm:limdist}}
\begin{proof}[Proof of Theorem~\ref{thm:estim}]
Under the stated assumptions~$\hat{\beta}_{n,\text{2SLS}}$ based on the clean data satisfies~$\enVert[0]{\hat{\beta}_{n,\text{2SLS}}-\beta}=O_\P(n^{-1/2})$. Therefore, in combination with Proposition~\ref{prop:IVClose}, it follows that
\begin{align*}
\enVert[1]{\hat{\beta}_{n,\text{W-2SLS}}-\beta}
&\leq
\enVert[1]{\hat{\beta}_{n,\text{W-2SLS}}-\hat{\beta}_{n,\text{2SLS}}}+\enVert[1]{\hat{\beta}_{n,\text{2SLS}}-\beta}\\
&=
O_\P\del[3]{\eta_n^{1-\frac{1}{m}}+\sbr[2]{\frac{\log(n)}{n}}^{1-\frac{1}{m}}}+O_\P(n^{-1/2}).
	\end{align*} 
	Because~$m>2$, it holds that~$\sbr[1]{\frac{\log(n)}{n}}^{1-\frac{1}{m}}=O(n^{-1/2})$ from which the conclusion of the theorem follows.
\end{proof}

\begin{proof}[Proof of Theorem~\ref{thm:limdist}]
Under the stated assumptions 
\begin{equation*}
	\sqrt{n}\del[1]{\hat{\beta}_{n,\text{2SLS}}-\beta}\quad\text{converges in distribution to}\quad \mathsf{N}_K\del[1]{0,A\E(u_1^2z_1z_1')A'}.
\end{equation*}	
Since by Proposition~\ref{prop:IVClose} it also holds that
\begin{equation*}
	\enVert[1]{\sqrt{n}\del[1]{\hat{\beta}_{n,\text{W-2SLS}}-\hat{\beta}_{n,\text{2SLS}}}}
	=
	O_\P\del[3]{\sqrt{n}\eta_n^{1-\frac{1}{m}}+\frac{\log(n)^{1-\frac1m}}{n^{\frac12-\frac1m}}}
	=
	o_\P(1),
\end{equation*}
the last equality following from~$\sqrt{n}\eta_n^{1-\frac1m}\to 0$ and $\log(n)^{1-\frac{1}{m}}/n^{\frac{1}{2}-\frac{1}{m}}\to 0$ (recall that~$m>2$). Thus, also
\begin{equation*}
	\sqrt{n}\del[1]{\hat{\beta}_{n,\text{W-2SLS}}-\beta}\quad\text{converges in distribution to}\quad \mathsf{N}_K\del[1]{0,A\E(u_1^2z_1z_1')A'}.
\end{equation*}	
\end{proof}

\section{Proof of Theorem~\ref{thm:FeasibleInference}}
We begin by establishing that~$\hat{A}=\del[1]{W_{ZX}'[W_{ZZ}]^{-1}W_{ZX}}^{-1}W_{ZX}'[W_{ZZ}]^{-1}$ converges in probability to~$A$. To this end, note that it follows from the assumed~$\E (|z_{1,l}z_{1,j}|^m)<\infty$ and~$\E(|z_{1,l}x_{1,k}|^m)<\infty$ for~$l,j=1,\hdots,L$ and~$k=1,\hdots,K$ that
\begin{equation*}
	\E\envert[1]{z_{1,l}x_{1,k}-\E(z_{1,l}x_{1,k})}^m<\infty\qquad \text{and}\qquad \E\envert[1]{z_{1,l}z_{1,j}-\E(z_{1,l}z_{1,j})}^m<\infty.
\end{equation*}
Furthermore, by~\eqref{eq:BDmoments} it also holds that~$\E|z_{1,l}y_1-\E(z_{1,l}y_1)|^m<\infty$ for~$l=1,\hdots,L$. Therefore, applying Theorem 2.1 of~\cite{Wins1} with~$\lambda_1=1.01$,~$\lambda_2=1$,~$\delta=6 n^{-\lambda}\to 0$, and~$\eta=\eta_n$ such that~$\eps$ there equals~$\eps_n=1.01\cdot \eta_n +\lambda\cdot \frac{\log(n)}{n}\to 0$ (implying that (5) there is eventually satisfied) it follows that
\begin{equation*}
	\del[1]{W_{ZX},W_{ZZ},W_{Zy}}\qquad \text{converges in probability to}\qquad \del[1]{\E(z_1x_1'),\E(z_1z_1'),\E(z_1y_1)}.
\end{equation*}
Hence, since~$\text{rank}[\E(z_1z_1')]=L$ and~$\text{rank}[\E(z_1x_1')]=K$, it follows by the continuous mapping theorem that~$\hat{A}\to A=\del[1]{\E(z_1x_1')'\sbr[0]{\E(z_1z_1')}^{-1}\E(z_1x_1')}^{-1}\E(z_1x_1')'\sbr[0]{\E(z_1z_1')}^{-1}$ in probability. 

Next, we argue that
\begin{equation*}
	\hat{\Omega}_S
	=
	\mathfrak{P}(\hat{\beta}_{n,\text{W-2SLS}})\hat{\Xi}\mathfrak{P}(\hat{\beta}_{n,\text{W-2SLS}})'
	\to 
	\mathfrak{P}(\beta)\Xi \mathfrak{P}(\beta)'
	=
	\Omega_S\qquad \text{in probability}.
\end{equation*}
By continuity of~$b\mapsto \mathfrak{P}(b)$ and~$\hat{\beta}_{n,\text{W-2SLS}}\to \beta$ in probability (by Theorem~\ref{thm:estim}), it follows that~$\mathfrak{P}(\hat{\beta}_{n,\text{W-2SLS}})\to \mathfrak{P}(\beta)$ in probability. Concerning the convergence of~$\hat{\Xi}$, let~$\check{\Xi}$ be the~$L(K+1)\times L(K+1)$ matrix with entries
\begin{equation*}
\check{\Xi}_{j,k}=\mu_{\eps_n}\del[1]{\tilde{h}_{1,j}\tilde{h}_{1,k},\hdots,\tilde{h}_{n,j}\tilde{h}_{n,k}},\qquad\text{for}\qquad 1\leq j,k\leq L(K+1).	
\end{equation*}
Note that for every~$j\in\cbr[0]{1,\hdots,L(K+1)}$ either~$\tilde{h}_{i,j}=\tilde{z}_{i,l}\tilde{y}_i$ for some~$l\in\cbr[0]{1,\hdots,L}$ or~$\tilde{h}_{i,j}=\tilde{z}_{i,l}\tilde{x}_{i,k}$ for some~$l\in\cbr[0]{1,\hdots,L}$ and~$k\in\cbr[0]{1,\hdots,K}$. In either case, note that~$\E|z_{1,l}y_1|^m<\infty$, cf.~\eqref{eq:BDmoments1} in the proof of Proposition~\ref{prop:IVClose}, or~$\E|z_{1,l}x_{1,k}|^m<\infty$ by assumption. Hence,~$\E|h_{1,j}|^m<\infty$ for all~$1\leq j\leq L(K+1)$. We set up for an application of Lemma~\ref{lem:productwins} with~$S_{i,1}=h_{i,j}$,~$S_{i,2}=h_{i,k}$,~$\tilde{S}_{i,1}=\tilde{h}_{i,j}$, and~$\tilde{S}_{i,2}=\tilde{h}_{i,k}$. Applying Lemma~\ref{lem:productwins} with~$\lambda_1=1.01$,~$\lambda_2=1$,~$\delta=6n^{-\lambda}\to 0$ and~$\eta=\eta_n$ such that~$\eps$ there equals~$\eps_n=1.01\cdot \eta_n +\lambda\cdot \frac{\log(n)}{n}\to 0$ (noting that with these choices~\eqref{eq:epscond} is eventually satisfied) and~$m>2$ it follows that
\begin{equation*}
	\envert[1]{\hat\Xi_{j,k}-\check\Xi_{j,k}}\to 0\qquad \text{in probability for all} \qquad 1\leq j,k\leq L(K+1).
\end{equation*}
It therefore suffices to argue that~$\check\Xi$ converges to~$\Xi$ in probability. To this end, fix any pair~$1\leq j,k\leq L(K+1)$ and note that~$\E|h_{1,j}h_{1,k}|^{m/2}<\infty$ by the Cauchy-Schwarz inequality and the already established~$\E|h_{1,j}|^m<\infty$ for all~$1\leq j\leq L(K+1)$. Hence, also~$\E|h_{1,j}h_{1,k}-\E(h_{1,j}h_{1,k})|^{m/2}<\infty$. Thus, applying Theorem 2.1 of~\cite{Wins1} with~$X_i=h_{i,j}h_{i,k}$,~$\tilde{X}_i=\tilde{h}_{i,j}\tilde{h}_{i,k}$,~$\lambda_1=1.01$,~$\lambda_2=1$,~$\delta=6 n^{-\lambda}\to 0$, and~$\eta=\eta_n$ such that~$\eps$ there equals~$\eps_n=1.01\cdot \eta_n +\lambda\cdot \frac{\log(n)}{n}\to 0$ (implying that (5) there is eventually satisfied) and~$m$ there being~$m/2>1$ it follows that
\begin{equation*}
	\envert[1]{\check\Xi_{j,k}-\Xi_{j,k}}\to 0\qquad \text{in probability for all} \qquad 1\leq j,k\leq L(K+1).
\end{equation*}
Together with the penultimate display this implies that~$\hat{\Xi}\to \Xi$ and therefore~$\hat{\Omega}_S\to \Omega_S$ in probability. Therefore,
\begin{equation*}
	\hat{\Omega}
	=
	\hat{A}\hat{\Omega}_S\hat{A}'
	\to 
	A\Omega_SA'
	=
	\Omega\qquad\text{in probability},
\end{equation*}
implying also that~$\hat{\Omega}$ is positive definite with probability tending to one (since~$\Omega$ is positive definite under the assumptions of the theorem) and~$\hat{\Omega}^{-1/2}$ is eventually well-defined. Finally, as a result of this convergence and Theorem~\ref{thm:limdist} (note that we now impose that $\sqrt{n}\eta_n^{1-\frac1m}\to 0$),~\eqref{eq:feasible} follows by the continuous mapping theorem.

\section{Proofs of results in Section~\ref{sec:Opt}}
The following lemma constructs the distributions~$Q_0$ and~$Q_{1,n}$ in~$\mc{Q}_m$ used in the proofs of Theorems \ref{thm:contamLB} and \ref{thm:NoAdapt}. For any element~$a$ in a suitable space (typically~$\R^4$),~$\delta_{\cbr[0]{a}}$ denotes the Dirac measure at~$a$. 
\begin{lemma}\label{lem:distconstruct}
	Fix~$n\in\N$,~$\eta_n\in(0,1]$, and~$m\in[2,\infty)$. Let~$p_n=\eta_n/2$ and define the distributions
	\begin{align*}
		Q_0 &=\frac{1}{2}\delta_{\cbr[0]{-1,1,1,-1}}+\frac{1}{2}\delta_{\cbr[0]{1,1,1,1}}\\
		Q_{1,n}=Q_1 &= \frac{1}{2}\delta_{\cbr[0]{-1,1,1,-1-p_n^{1-\frac1m}}}+\del[2]{\frac{1}{2}-p_n}\delta_{\cbr[0]{1,1,1,1-p_n^{1-\frac1m}}}+p_n\delta_{\cbr[0]{1+p_n^{-\frac1m},1,1,1+p_n^{-\frac1m}-p_n^{1-\frac1m}}}
	\end{align*}
	with~$\beta(Q_0)=0$ and~$\beta(Q_1)=p_n^{1-\frac1m}$ of $(y_1,x_1,z_1,u_1)$. Then~$Q_0$ and~$Q_1$ belong to~$\mc{Q}_m$.
\end{lemma}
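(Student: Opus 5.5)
The plan is a direct verification of each defining condition in \eqref{eq:LBFam} for $Q_0$ and $Q_1$, treating each as a discrete distribution on $\R^4$ with coordinates $(y_1,x_1,z_1,u_1)$. First I check that both are probability measures: $Q_0$ has weights $1/2,1/2$. For $Q_1$, since $\eta_n\in(0,1]$ we have $p_n=\eta_n/2\in(0,1/2]$, so the weights $1/2,\ 1/2-p_n,\ p_n$ are non-negative and sum to one. Next, every atom has $x_1=z_1=1$, and on each atom $y_1=\beta(Q)x_1+u_1$ holds with the stated $\beta(Q)$. For $Q_0$ this is $y_1=u_1$. For $Q_1$ one computes $u_1=y_1-p_n^{1-1/m}$ on each atom. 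Moreover $\beta(Q_0)=0\in[0,1]$ and $\beta(Q_1)=p_n^{1-1/m}\in(0,1]$ because $p_n\le 1/2$ and $1-1/m>0$. Since $z_1=1$, the identities $E(z_1u_1)=E(u_1)$, $E|z_1u_1|^m=E|u_1|^m$ and $E(z_1^2u_1^2)=E(u_1^2)$ are automatic, so it remains to check the three moment conditions on $u_1$.

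For $Q_0$: $u_1=\pm1$ with probability $1/2$ each, so $E u_1=0$, $E u_1^2=1\in[1,4]$, and $E|u_1|^m=1\le 1.5\cdot 2^m$. For $Q_1$, write $u_1=y_1-\beta(Q_1)$. Then $E u_1=E y_1-p_n^{1-1/m}$ with
\[
E y_1=-\tfrac12+\tfrac12-p_n+p_n(1+p_n^{-1/m})=p_n^{1-1/m},
\]
so $E u_1=0$. This cancellation is what forces the choice $\beta(Q_1)=p_n^{1-1/m}$. For the second moment, $E u_1^2=\operatorname{Var}(y_1)=E y_1^2-p_n^{2-2/m}$ where
\[
E y_1^2=\tfrac12+\tfrac12-p_n+p_n(1+p_n^{-1/m})^2=1+2p_n^{1-1/m}+p_n^{1-2/m}.
\]
Hence $E u_1^2=1+2p_n^{1-1/m}+p_n^{1-2/m}-p_n^{2-2/m}$. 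The lower bound $\ge1$ follows since $p_n^{1-2/m}\ge p_n^{2-2/m}$ when $p_n\le1$. The upper bound $\le 4$ follows since each of $p_n^{1-1/m}$ and $p_n^{1-2/m}$ is at most $1$ (using $m\ge2$, so the exponents are non-negative), giving at most $1+2+1=4$.

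The main obstacle, though still elementary, is the $m$-th moment bound $E|u_1|^m\le1.5\cdot2^m$ for $Q_1$. Set $a:=p_n^{1-1/m}\in(0,1]$. On the first two atoms $|u_1|\le 1+a\le 2$, and these atoms carry total mass $1-p_n\le1$, contributing at most $2^m(1-p_n)$. On the third atom $u_1=1+p_n^{-1/m}-a$, and since $a\le 1$ we get $0\le u_1\le 1+p_n^{-1/m}\le 2p_n^{-1/m}$ (as $p_n^{-1/m}\ge1$). Its contribution is therefore at most $p_n\cdot 2^m p_n^{-1}=2^m$. This crude total of $2^m(2-p_n)$ is too large, so I will sharpen it. On the first atom $|u_1|=1+a$ with mass $1/2$. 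On the second atom $|u_1|=1-a\le1$ with mass $1/2-p_n$, contributing at most $1/2\le 2^{m}/4$ when $m\ge 2$. The first atom contributes at most $\tfrac12 2^m$. The third contributes at most $2^m$ only in the worst case. I would refine the third term using $(1+p_n^{-1/m}-a)^m\le (p_n^{-1/m}+1-a)^m$ together with convexity, namely $(x+y)^m\le 2^{m-1}(x^m+y^m)$. This gives
\[
p_n\bigl(2^{m-1}p_n^{-1}+2^{m-1}(1-a)^m\bigr)\le 2^{m-1}(1+p_n).
\]
Summing, $E|u_1|^m\le 2^{m-1}+\tfrac12+2^{m-1}(1+p_n)$, and with $p_n\le1/2$ this is at most $2^m\cdot1.25+\tfrac12\le1.5\cdot2^m$ for $m\ge2$. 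I expect this bookkeeping of constants to be the only delicate step, and it would be arranged exactly as above.
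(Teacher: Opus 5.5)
Your proposal is correct and is essentially the paper's proof: a direct atom-by-atom verification of the conditions in \eqref{eq:LBFam}, with the heavy atom handled by the same convexity bound $(x+y)^m\le 2^{m-1}(x^m+y^m)$. The differences are cosmetic. You compute $E u_1^2$ as $\operatorname{Var}(y_1)$ and bound the middle atom's contribution by $1/2$, absorbing it via $m\ge 2$. The paper instead bounds that atom by $(1/2-p_n)2^m$ and lets the $p_n$ terms cancel.
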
		
\begin{proof}
That~$Q_0\in\mc{Q}_m$ is trivial since both of its atoms satisfy~$y_1=\beta(Q_0)x_1+u_1$ and the remaining conditions for membership of $\mc{Q}_m$ in~\eqref{eq:LBFam} are also easily seen to be satisfied. 	

To see that~$Q_1\in\mc{Q}_m$, we again observe that all atoms of~$Q_1$ satisfy $y_1=\beta(Q_1)x_1+u_1$ because~$\beta(Q_1)=p_n^{1-1/m}$. Because~$P_{Q_1}(z_1=1)=1$, it only remains to be shown that~$E_{P_{Q_1}}|u_1|^m\leq 1.5\cdot 2^m$,~$E_{P_{Q_1}}(u_1)=0$, and~$E_{P_{Q_1}}(u_1^2)\in[1,4]$. To this end, note that
\begin{align*}
	E_{P_{Q_1}}|u_1|^m
	&=
	0.5\cdot\del[1]{1+p_n^{1-1/m}}^m+(0.5-p_n)\cdot\envert[1]{1-p_n^{1-1/m}}^m+p_n\cdot\envert[1]{1+p_n^{-1/m}-p_n^{1-1/m}}^m\\
	&\leq
	0.5\cdot 2^m+(0.5-p_n)\cdot 2^m+p_n\cdot 2^{m-1}(1+p_n^{-1})\\
	&\leq
	1.5\cdot2^m,
\end{align*}
where we used that~$p_n\in(0,0.5]$ and~$|1+p_n^{-1/m}-p_n^{1-1/m}|^m\leq (1+p_n^{-1/m})^m\leq 2^{m-1}(1+p_n^{-1})$. Next,
\begin{align*}
	E_{P_{Q_1}}(u_1^2)
	&=
	0.5\cdot\del[2]{1+p_n^{1-\frac1m}}^2+(0.5-p_n)\cdot\envert[2]{1-p_n^{1-\frac1m}}^2+p_n\cdot\envert[2]{1+p_n^{-\frac1m}-p_n^{1-\frac1m}}^2\\
	&=
	1+2p_n^{1-\frac1m}+(1-p_n)p_n^{1-\frac2m}\in[1,4].
\end{align*}
and
\begin{equation*}
	E_{P_{Q_1}}(u_1)
	=
	0.5\cdot(-1-p_n^{1-1/m})+(0.5-p_n)\cdot (1-p_n^{1-1/m})+p_n\cdot (1+p_n^{-1/m}-p_n^{1-1/m})
	=0.
\end{equation*}
It follows that~$Q_1\in\mc{Q}_m$.
\end{proof}

For any two probability measures~$P_0$ and~$P_1$ on the same measurable space (typically~$(\R^3)^n,\mc{B}(\R^3)^n$), we denote by~$\mathsf{TV}(P_0,P_1)=\sup_{A}|P_0(A)-P_1(A)|$ the total variation distance between~$P_0$ and~$P_1$, the supremum being over all measurable~$A$. 

Recall the definition of the contamination map~$\mathsf{C}$ in~\eqref{eq:Contam} and the remaining notation introduced prior to Theorem \ref{thm:contamLB}. The following lemma bounds the total variation distance between the distributions of the contaminated data~$(\tilde{V}_1,\hdots,\tilde{V}_n)=\mathsf{C}(R)$ under~$P_{Q_0}$ and~$P_{Q_1}$ with~$Q_0$ and~$Q_1$ defined in the statement of Lemma~\ref{lem:distconstruct}. Recall that
\begin{equation*}
	P_{Q_i}\circ (\mathsf{C}\circ R)^{-1}=(P_{Q_i}\circ R^{-1})\circ \mathsf{C}^{-1}=Q_{i}^{\otimes n}\circ \mathsf{C}^{-1}\qquad \text{for}\qquad i\in\cbr[0]{0,1}.
\end{equation*}

\begin{lemma}\label{lem:TVDist}
Fix~$n\in\N$,~$\eta_n\in(0,1]$, and~$m\in[2,\infty)$. Let~$p_n=\eta_n/2$. Then, for~$Q_0$ and~$Q_1$ defined the statement of Lemma~\ref{lem:distconstruct}
\begin{equation}\label{eq:TVBound}
	\mathsf{TV}\del[1]{Q_0^{\otimes n}\circ \mathsf{C}^{-1},Q_1^{\otimes n}\circ \mathsf{C}^{-1}}
	\leq 
	\del[2]{\frac{e}{4}}^{\eta_nn/2}.
\end{equation}	
\end{lemma}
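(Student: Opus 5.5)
The plan is to bound the total variation distance by a coupling. I will construct, on one probability space, a sample distributed as $Q_0^{\otimes n}\circ\mathsf{C}^{-1}$ and a sample distributed as $Q_1^{\otimes n}\circ\mathsf{C}^{-1}$ that coincide except on a rare event. The coupling inequality then gives $\mathsf{TV}\le$ the probability of that event, and the event's probability will be controlled by a Chernoff bound.

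\textbf{Step 1: the $Q_0$ side.} Under $Q_0$ the observed variables satisfy $x_i=z_i=1$, and $y_i$ is uniform on $\{-1,1\}$. Since $p_n^{-1/m}>0$, we have $1+p_n^{-1/m}\notin\{-1,1\}$. Hence the indicator in \eqref{eq:Contam} never fires, $\mathsf{C}$ acts as the identity on the observed coordinates, and $Q_0^{\otimes n}\circ\mathsf{C}^{-1}$ is the law of $n$ i.i.d.\ triples $(y_i,1,1)$ with $y_i$ uniform on $\{-1,1\}$.

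\textbf{Step 2: the coupling on the $Q_1$ side.} Let $R\sim Q_1^{\otimes n}$ and let $N:=N_n(R)$ be the number of indices with $y_i=1+p_n^{-1/m}$, so that $N\sim\mathrm{Bin}(n,p_n)$. Define $Y'$ by replacing every $y_i=1+p_n^{-1/m}$ by $1$ and keeping $x_i=z_i=1$. The three $Q_1$-atoms have $y$-values $-1$, $1$ and $1+p_n^{-1/m}$, with probabilities $1/2$, $1/2-p_n$ and $p_n$. After the replacement, each coordinate of $Y'$ has $y$ uniform on $\{-1,1\}$ independently across $i$. So $Y'$ has exactly the law $Q_0^{\otimes n}\circ\mathsf{C}^{-1}$.

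On the event $\{N\le\eta_n n\}$, every special observation satisfies $N_i(R)\le N\le\eta_n n$. All of them are therefore moved to $1$, and $\mathsf{C}(R)=Y'$. Consequently
\begin{equation*}
\mathsf{TV}\del[1]{Q_0^{\otimes n}\circ \mathsf{C}^{-1},Q_1^{\otimes n}\circ \mathsf{C}^{-1}}\le P(\mathsf{C}(R)\neq Y')\le P(N>\eta_n n)=P(N>2np_n).
\end{equation*}

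\textbf{Step 3: the binomial tail.} Apply the multiplicative Chernoff bound $P(N\ge(1+t)\mu)\le\bigl(e^{t}/(1+t)^{1+t}\bigr)^{\mu}$ with $\mu=np_n=\eta_n n/2$ and $t=1$. This gives $P(N\ge 2\mu)\le (e/4)^{\eta_n n/2}$, which is \eqref{eq:TVBound}.

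The main point needing care is Step 2. One must check that the capped contamination in \eqref{eq:Contam} alters exactly the special atoms whenever $N\le\eta_n n$. One must also check that the replacement step produces precisely the $Q_0$ marginal, which holds because the masses $(1/2-p_n)+p_n=1/2$ recombine. Everything else is a standard Chernoff computation.
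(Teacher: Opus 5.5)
Your proposal is correct and follows essentially the same route as the paper. The paper also uses the uncapped replacement map (your $Y'$), shows $Q_0^{\otimes n}\circ\mathsf{C}^{-1}$ equals its law under $Q_1^{\otimes n}$, and bounds the remaining distance by the probability that the binomial count exceeds $\eta_n n$ (your coupling inequality written out set by set), then finishes with the same Chernoff bound at $t=1$.
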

\begin{proof}
Write~$\mathsf{U}:(\R^4)^n\to(\R^3)^n$ for the function~$(\tilde{V}_1,\hdots,\tilde{V}_n)=\mathsf{U}(R)$ with\footnote{In the course of the proof of the present lemma, it is convenient to interpret~$R$ as the input to the function~$\mathsf{U}$ rather than random variables. A similar comment applies to~$V_i=(y_i,x_i,z_i)$ and~$\tilde{V}_i=(\tilde{y}_i,\tilde{x_i},\tilde{z_i})$.}
\begin{equation*}
\tilde{V}_i=(\tilde{y}_i,\tilde{x_i},\tilde{z_i})=\del[1]{y_i-p_n^{-1/m}\mathds{1}_{\cbr[0]{y_i=1+p_n^{-1/m}}},x_i,z_i},\qquad i=1,\hdots,n,
\end{equation*} 
which replaces all~$y_i$ that equal~$1+p_n^{-1/m}$ by~$1$ (recall that~$\mathsf{C}$ only replaces the first up to~$\eta_nn$ $y_i$ that equal~$1+p_n^{-1/m}$ by~$1$).

Since there are no~$y_i$ that equal~$1+p_n^{-1/m}$ under~$Q_0^{\otimes n}$ and since~$\mathsf{U}$ replaces all~$y_i$ that equal~$1+p_n^{-1/m}$ by~$1$, it follows that
\begin{equation*}
	Q_0^{\otimes n}\circ \mathsf{C}^{-1}=Q_0^{\otimes n}\circ \mathsf{U}^{-1}=Q_1^{\otimes n}\circ \mathsf{U}^{-1};
\end{equation*}
here the second equality used that under~$Q_1^{\otimes n}$ one has that~$\mathsf{U}$ moves the probability mass~$p_n$ at~$1+p_n^{-1/m}$ of the~$y_i$ to~$1$, which already had mass~$(0.5-p_n)$, brining the total mass at~$1$ to~$0.5$, the same as under~$Q_0^{\otimes n}$. Therefore,
\begin{equation*}
	\mathsf{TV}\del[1]{Q_0^{\otimes n}\circ \mathsf{C}^{-1},Q_1^{\otimes n}\circ \mathsf{C}^{-1}}
	=
	\mathsf{TV}\del[1]{Q_1^{\otimes n}\circ \mathsf{U}^{-1},Q_1^{\otimes n}\circ \mathsf{C}^{-1}}.
\end{equation*}
Next, define
\begin{equation*}
\mc{B}_n=\cbr[3]{\sum_{i=1}^n\mathds{1}\cbr[0]{y_i=1+p_n^{-1/m}}\leq \eta_nn}
=
\cbr[3]{\sum_{i=1}^n\mathds{1}\cbr[0]{y_i=1+p_n^{-1/m}}\leq 2p_nn}
\end{equation*}
and let~$A\subseteq (\R^3)^n$ be measurable. Then
\begin{align*}
	&Q_1^{\otimes n}( \mathsf{U}^{-1}(A))-Q_1^{\otimes n}(\mathsf{C}^{-1}(A))\\
	=
	&Q_1^{\otimes n}(\mathsf{U}^{-1}(A)\cap \mc{B}_n)+Q_1^{\otimes n}(\mathsf{U}^{-1}(A)\cap \mc{B}_n^c)-Q_1^{\otimes n}(\mathsf{C}^{-1}(A)\cap \mc{B}_n)-Q_1^{\otimes n}(\mathsf{C}^{-1}(A)\cap \mc{B}_n^c).
\end{align*}
Observe that~$\mathsf{U}^{-1}(A)\cap \mc{B}_n=\mathsf{C}^{-1}(A)\cap \mc{B}_n$ because~$\mathsf{U}(R)=\mathsf{C}(R)$ for~$R\in\mc{B}_n$ as there are at most~$\eta_nn$ $y_i$ that equal~$1+p_n^{-1/m}$ on $\mc{B}_n$. Thus, for every~$A\subseteq (\R^3)^n$,
\begin{equation*}
\envert[1]{Q_1^{\otimes n}( \mathsf{U}^{-1}(A))-Q_1^{\otimes n}(\mathsf{C}^{-1}(A))}
=
\envert[1]{Q_1^{\otimes n}(\mathsf{U}^{-1}(A)\cap \mc{B}_n^c)-Q_1^{\otimes n}(\mathsf{C}^{-1}(A)\cap \mc{B}_n^c)}
\leq
Q_1^{\otimes n}(\mc{B}_n^c).	
\end{equation*}
Since~$\sum_{i=1}^n\mathds{1}\cbr[1]{y_i=1+p_n^{-1/m}}$ is binomially distributed with success probability~$p_n$ under~$Q_1^{\otimes n}$, it follows from (5) in~\cite{hagerup1990guided} with~$\epsilon$ there being 1 and~$m$ there being~$p_n n$  that  
\begin{equation*}
		Q_1^{\otimes n}\del[1]{\mc{B}_n^c}
		\leq 
		\del[1]{e/4}^{p_nn}
		=
		\del[1]{e/4}^{\eta_nn/2},
			\end{equation*}
which yields~\eqref{eq:TVBound} as the right-hand side does not depend on~$A$.
\end{proof}

\begin{proof}[Proof of Theorem~\ref{thm:contamLB}]
 Fix any measurable function~$T_n:\R^{3n}\to\R$ and let~$Q_0$ and~$Q_1$ be the elements of~$\mc{Q}_m$ as in the statement of Lemma~\ref{lem:distconstruct}.
 
 We first prove~\eqref{eq:LB}. Writing $E_i=\cbr[0]{|T_n-\beta(Q_i)|\geq p_n^{1-1/m}/2}$ for~$i\in\cbr[0]{0,1}$ and recalling that $\beta(Q_1)-\beta(Q_0)=p_n^{1-1/m}$, it follows that~$E_0\cup E_1=(\R^3)^n$. This implies via Lemma~\ref{lem:TVDist} that
\begin{align*}
	1
	\leq 
	Q_0^{\otimes n}\circ \mathsf{C}^{-1}\del[0]{E_0}+Q_0^{\otimes n}\circ \mathsf{C}^{-1}\del[0]{E_1}
	\leq
Q_0^{\otimes n}\circ \mathsf{C}^{-1}\del[0]{E_0}+Q_1^{\otimes n}\circ \mathsf{C}^{-1}\del[0]{E_1}+\del[2]{\frac{e}{4}}^{\eta_nn/2}.
\end{align*}	
Therefore, a simple rearrangement yields that
\begin{equation*}
\max_{i\in\cbr[0]{0,1}}Q_i^{\otimes n}\del[2]{|T_n(\mathsf{C})-\beta(Q_i)|\geq p_n^{1-1/m}/2}
	\geq
	0.5\cdot\sbr[2]{1-\del[2]{\frac{e}{4}}^{\eta_nn/2}}
\end{equation*}
which, recalling that~$Q_i^{\otimes n}=P_{Q_i}\circ R^{-1}$,~$\mathsf{C}(R)=\tilde{V}$, and~$p_n=\eta_n/2$ implies~\eqref{eq:LB}.

Concerning~\eqref{eq:NoUnifConv}, fix any functional~$\sigma:\mc{Q}_m\to(0,\infty)$  and suppose that (if this convergence does not holds, we are done)
\begin{equation*}
	\sup_{t\in\R}\envert[1]{Q_0^{\otimes n}\circ \mathsf{C}^{-1}\del[1]{\sqrt{n}\sbr[1]{T_n-\beta(Q_0)}\leq t}-\Phi(t/\sigma(Q_0))}
	\to 0.
\end{equation*}
Because~$\limsup_{n\to\infty}\sqrt{n}\eta_n^{1-1/m}>0$ there exists a subsequence along which~$\sqrt{n}\eta_n^{1-1/m}\geq c>0$, say, and~$\eta_n n\to\infty$ as~$m>2$. Therefore, by Lemma~\ref{lem:TVDist} it also holds that
\begin{equation*}
	\sup_{t\in\R}\envert[1]{Q_1^{\otimes n}\circ \mathsf{C}^{-1}\del[1]{\sqrt{n}\sbr[1]{T_n-\beta(Q_0)}\leq t}-\Phi(t/\sigma(Q_0))}
	\to 0
\end{equation*}
and~$d_n:=\sqrt{n}(\beta(Q_1)-\beta(Q_0))=\sqrt{n}(\eta_n/2)^{1-1/m}\geq \frac{c}{2^{1-1/m}}=:c'\in(0,\infty)$ along this subsequence. Hence,
\begin{equation*}
		\sup_{t\in\R}\envert[1]{Q_1^{\otimes n}\circ \mathsf{C}^{-1}\del[1]{\sqrt{n}\sbr[1]{T_n-\beta(Q_1)}\leq t}-\Phi((t+d_n)/\sigma(Q_0))}\to 0.
\end{equation*}
Because~$d_n\geq c'>0$,
\begin{equation*}
	\sup_{t\in\R}\envert[1]{\Phi((t+d_n)/\sigma(Q_0))-\Phi(t/\sigma(Q_1))}
	\geq
	 \envert[1]{\Phi(d_n/\sigma(Q_0))-\Phi(0/\sigma(Q_1))}
	 \geq \Phi(c'/\sigma(Q_0))-0.5,
\end{equation*}
 the right-hand side being strictly positive, this establishes~\eqref{eq:NoUnifConv}.
\end{proof}

\begin{proof}[Proof of Theorem~\ref{thm:NoAdapt}]
Fix all quantities as in the statement of the theorem and let~$Q_0$ and~$Q_1$ be as in the statement of Lemma~\ref{lem:distconstruct}. By assumption

\begin{equation*}
Q_{1}^{\otimes n}\circ \mathsf{C}^{-1}\del[1]{\beta(Q_1)\in C_n}
=
P_{Q_1}\del[1]{\beta(Q_1)\in C_n(\tilde{V})}\geq 1-\delta.
\end{equation*}	
which in combination with Lemma~\ref{lem:TVDist} implies that  
\begin{equation*}
	Q_{0}^{\otimes n}\circ \mathsf{C}^{-1}\del[1]{\beta(Q_1)\in C_n}\geq 1-\delta-\del[2]{\frac{e}{4}}^{\eta_nn/2}.
\end{equation*} 	
Under~$Q_{0}^{\otimes n}$ it holds that~$\mathsf{C}$ is the projection mapping from~$(\R^{4})^n\to(\R^3)^n$ so that~$P_{Q_0}(\mathsf{C}(R)=V)=1$, it also holds by assumption that
\begin{equation*}
Q_{0}^{\otimes n}\circ \mathsf{C}^{-1}\del[1]{\beta(Q_0)\in C_n}
=
P_{Q_0}\del[1]{\beta(Q_0)\in C_n(\mathsf{C}(R))}
=
P_{Q_0}\del[1]{\beta(Q_0)\in C_n(V)}\geq 1-\delta.
\end{equation*}	
Therefore, since~$\beta(Q_1)-\beta(Q_0)=(\eta_n/2)^{1-\frac1m}\geq 0.5\cdot \eta_n^{1-\frac{1}{m}}$, it follows by the two previous displays that
\begin{equation*}
	Q_{0}^{\otimes n}\circ \mathsf{C}^{-1}\del[1]{|C_n|\geq 0.5\cdot\eta_n^{1-\frac1m}}
	\geq
	Q_{0}^{\otimes n}\circ \mathsf{C}^{-1}\del[1]{\cbr[0]{\beta(Q_0), \beta(Q_1)}\subseteq C_n}
	\geq
	1-2\delta-\del[2]{\frac{e}{4}}^{\eta_nn/2},
\end{equation*}
so that the conclusion follows from
\begin{equation*}
	P_{Q_0}\del[1]{|C_n(V)|\geq 0.5\cdot\eta_n^{1-\frac1m}}
	=
	P_{Q_0}\del[1]{|C_n(\mathsf{C}(R))|\geq 0.5\cdot\eta_n^{1-\frac1m}}
	=
	Q_{0}^{\otimes n}\circ \mathsf{C}^{-1}\del[1]{|C_n|\geq 0.5\cdot\eta_n^{1-\frac1m}}.
\end{equation*}
\end{proof}

\section{Proof of Theorem~\ref{thm:ARrobust}}
Fix~$\beta_0$ as in the statement of the theorem. Under the conditions imposed, it holds that
\begin{equation*}
	\sqrt{n}\bar{g}_n(\beta_0)=\frac{1}{\sqrt{n}}\sum_{i=1}^ng_{i}(\beta_0)\qquad \text{converges in distribution to}\qquad \mathsf{N}_L\del[1]{0,\Sigma(\beta_0)}.
\end{equation*}
We next argue that for each~$l\in\cbr[0]{1,\hdots,L}$ 
\begin{equation*}
	\sqrt{n}\envert[1]{\bar{\tilde{g}}_{n,l}(\beta_0)-\bar{g}_{n,l}(\beta_0)}=o_\P(1)
\end{equation*}
by means of Lemma~\ref{lem:wins_to_arithm_mean} applied with~$\lambda_1=1.01$,~$\lambda_2=1$,~$\delta=6 n^{-\lambda}\to 0$, and~$\eta=\eta_n$ such that~$\eps$ there equals~$\eps_n=1.01\cdot \eta_n +\lambda\cdot \frac{\log(n)}{n}$. With these choices, Condition~\eqref{eq:epscond} is eventually satisfied.  Note also that it follows from $\E (|g_{1,l}(\beta_0)|^m)<\infty$ that~$\E|g_{1,l}(\beta_0)-\E g_{1,l}(\beta_0)|^m<\infty$ such that Lemma~\ref{lem:wins_to_arithm_mean} applied with~$S_i=g_{i,l}(\beta_0)$ and~$\tilde{S}_i=\tilde{g}_{i,l}(\beta_0)$, for~$i=1,\hdots,n$ yields the previous display (for a desired tail probability, first choose~$M$ in Lemma~\ref{lem:wins_to_arithm_mean} large, then~$n$ large) noting also that~$\sqrt{n}\eps_n^{1-\frac1m}\to 0$. As a result, also
\begin{equation}\label{eq:convdistAR}
	\sqrt{n}\bar{\tilde{g}}_{n}(\beta_0)\qquad \text{converges in distribution to}\qquad \mathsf{N}_L\del[1]{0,\Sigma(\beta_0)}.
\end{equation}
We next argue that~$\hat{\Sigma}_{\text{W-AR}}(\beta_0)\to \Sigma(\beta_0)$ in probability by arguments similar to those already employed in the proof of Theorem~\ref{thm:FeasibleInference} (but which we adjust to the present setup for completeness): Let~$\check{\Sigma}(\beta_0)$ be the~$L\times L$ matrix with entries
\begin{equation*}
\check{\Sigma}_{l,j}(\beta_0)
=
\mu_{\eps_n}\del[1]{\tilde{g}_{1,l}(\beta_0)\tilde{g}_{1,j}(\beta_0),\hdots, \tilde{g}_{n,l}(\beta_0) \tilde{g}_{n,j}(\beta_0)},\qquad 1\leq l,j\leq L.
\end{equation*}
Note that~$\E|g_{1,l}(\beta_0)g_{1,j}(\beta_0)|^{m/2}<\infty$ by the Cauchy-Schwarz inequality. We set up for an application of Lemma~\ref{lem:productwins} with~$S_{i,1}=g_{i,l}(\beta_0)$,~$S_{i,2}=g_{i,j}(\beta_0)$,~$\tilde{S}_{i,1}=\tilde{g}_{i,l}(\beta_0)$, and~$\tilde{S}_{i,2}=\tilde{g}_{i,j}(\beta_0)$. Applying Lemma~\ref{lem:productwins} with~$\lambda_1=1.01$,~$\lambda_2=1$,~$\delta=6n^{-\lambda}\to 0$ and~$\eta=\eta_n$ such that~$\eps$ there equals~$\eps_n=1.01\cdot \eta_n +\lambda\cdot \frac{\log(n)}{n}\to 0$ (noting that with these choices~\eqref{eq:epscond} is eventually satisfied) and~$m>2$ it follows that
\begin{equation*}
	\envert[1]{\hat{\Sigma}_{\text{W-AR},l,j}(\beta_0)-\check{\Sigma}_{l,j}(\beta_0)}\to 0\qquad \text{in probability for all} \qquad 1\leq l,j\leq L.
\end{equation*}
It therefore suffices to argue that~$\check\Sigma(\beta_0)$ converges to~$\Sigma(\beta_0)$ in probability. To this end, fix any pair~$1\leq l,j\leq L$ and note that~$\E|g_{1,l}(\beta_0)g_{1,j}(\beta_0)-\E(g_{1,l}(\beta_0)g_{1,j}(\beta_0))|^{m/2}<\infty$ by the already established~$\E|g_{1,l}(\beta_0)g_{1,j}(\beta_0)|^{m/2}<\infty$. Thus, applying Theorem 2.1 of~\cite{Wins1} with~$X_i=g_{i,l}(\beta_0)g_{i,j}(\beta_0)$,~$\tilde{X}_i=\tilde{g}_{i,l}(\beta_0)\tilde{g}_{i,j}(\beta_0)$,~$\lambda_1=1.01$,~$\lambda_2=1$,~$\delta=6 n^{-\lambda}\to 0$, and~$\eta=\eta_n$ such that~$\eps$ there equals~$\eps_n=1.01\cdot \eta_n +\lambda\cdot \frac{\log(n)}{n}\to 0$ (implying that (5) there is eventually satisfied) and~$m$ there being~$m/2>1$ it follows that
\begin{equation*}
	\envert[1]{\check{\Sigma}_{l,j}(\beta_0)-\Sigma_{l,j}(\beta_0)}\to 0\qquad \text{in probability for all} \qquad 1\leq l,j\leq L.
\end{equation*}
Together with the penultimate display this implies that~$\hat{\Sigma}_{\text{W-AR}}(\beta_0)\to \Sigma(\beta_0)$ in probability, which in combination with~\eqref{eq:convdistAR} implies~\eqref{eq:ARconv}.

\section{Proofs of results in Section~\ref{sec:FiniteDev}}

\begin{lemma}\label{lem:dist1}
	Fix~$n\in \N$,~$\pi\in\R$,~$\beta\in\R$, $\sigma_C^2>0$,~$\sigma_V^2>0$, and~$a>\sigma_C/n$. Denote by~$\mu=\mu_V\otimes \mu_C\otimes \mu_Z$ the product measure on~$\R^3$ with
	\begin{itemize}
		\item $\mu_V=\frac{1}{2n^2}\delta_{-\sigma_Vn} +\frac{1}{2n^2}\delta_{\sigma_Vn}+\del[1]{1-\frac{1}{n^2}}\delta_0$;
		\item $\mu_C=\frac{\sigma_{C}^2}{2n^2a^2}\delta_{-an} +\frac{\sigma_{C}^2}{2n^2a^2}\delta_{an}+\del[1]{1-\frac{\sigma_{C}^2}{a^2n^2}}\delta_0$;
		\item $\mu_Z=\frac{1}{2}\delta_{-1}+\frac{1}{2}\delta_1$.
	\end{itemize}
	Let~$T:\R^3\to \R^4$ be the mapping defined via
	\begin{equation*}
		T(v,c,z)=\del[1]{\beta(\pi z+v)+zc+v,\pi z+v,z,zc+v}=\del[1]{T_i(v,c,z),\ i=1,\hdots,4}
	\end{equation*}
	and define the push-forward measure~$Q=\mu \circ T^{-1}$ on~$\R^4$. Let~$R_1=(y_1,x_1,z_1,u_1)$ have distribution~$Q$ and let~$P_Q\circ R_1^{-1}=Q$.\footnote{Confer Footnote~\ref{fn:prodspace} for regarding the existence of such a~$P_Q$.} Then
	\begin{align*}
		&P_Q(y_1=\beta x_1+u_1)=1,\quad E_{P_Q}(z_1u_1)=0,\quad E_{P_Q}(x_1u_1)=\sigma_V^2,\quad E_{P_Q}(x_1z_1)=\pi, \\
		 &E_{P_Q}(z_1^2x_1^2)=\pi^2+\sigma_V^2,\quad E_{P_Q}(z_1^2u_1^2)=\sigma_C^2+\sigma_V^2,\quad E_{P_Q}(u_1)=0.
	\end{align*}
	In particular, for~$\beta=\pi=1=\sigma_V^2=\sigma_C^2=1$ and stressing the dependence of~$Q=Q_a$ on~$a>1/n$, it holds that~$\cbr[0]{Q_a:a>1/n}\subseteq \mathfrak{Q}$.
	\end{lemma}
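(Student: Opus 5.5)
This is a direct moment computation under the product measure $\mu=\mu_V\otimes\mu_C\otimes\mu_Z$. The plan is to express each quantity through $T$ and use independence of $v$, $c$, $z$. First I record the elementary moments of the three marginals.
\begin{itemize}
\item $\mu_Z$ is symmetric on $\{-1,1\}$, so $E z=0$ and $z^2=1$ almost surely.
\item $\mu_V$ is symmetric, so $E v=0$, and $E v^2=2\cdot\frac{1}{2n^2}\sigma_V^2n^2=\sigma_V^2$.
\item $\mu_C$ is symmetric, so $E c=0$, and $E c^2=2\cdot\frac{\sigma_C^2}{2n^2a^2}a^2n^2=\sigma_C^2$.
\end{itemize}
The condition $a>\sigma_C/n$ is used only to guarantee that $1-\sigma_C^2/(a^2n^2)\in(0,1)$, so that $\mu_C$ is a genuine probability measure. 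The same check for $\mu_V$ is trivial.

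Next I read off the coordinates: $y_1=\beta(\pi z+v)+zc+v$, $x_1=\pi z+v$, $z_1=z$, and $u_1=zc+v$.

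\emph{The identity.} $y_1=\beta x_1+u_1$ holds identically on $\R^3$, so $P_Q(y_1=\beta x_1+u_1)=1$.

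\emph{Cross moments.} Each expectation is expanded and the independence and mean-zero facts above are applied.
\begin{itemize}
\item $E(z_1u_1)=E(z^2c)+E(zv)=E c+E z\,E v=0$.
\item $E(u_1)=E z\,E c+E v=0$.
\item $E(x_1z_1)=\pi E z^2+E(vz)=\pi$.
\item $E(x_1u_1)=E[(\pi z+v)(zc+v)]=\pi E(z^2c)+\pi E(zv)+E(vzc)+E v^2=\sigma_V^2$.
\end{itemize}

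\emph{Second moments.} Here I use $z^2=1$ throughout.
\begin{itemize}
\item $E(z_1^2x_1^2)=E(\pi z+v)^2=\pi^2+2\pi E(zv)+E v^2=\pi^2+\sigma_V^2$.
\item $E(z_1^2u_1^2)=E(zc+v)^2=E c^2+2E(zcv)+E v^2=\sigma_C^2+\sigma_V^2$.
\end{itemize}
All the mixed terms vanish by independence and $Ez=Ec=Ev=0$. All moments involved are finite, since the measures have finite support.

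\emph{The specialization.} Set $\beta=\pi=\sigma_V^2=\sigma_C^2=1$, so that $a>1/n$. The displayed values then become
\[
E(z_1^2x_1^2)=2,\qquad E(z_1^2u_1^2)=2,\qquad E(z_1u_1)=0,\qquad E(z_1x_1)=1,\qquad E(u_1)=0,\qquad E(x_1u_1)=1,
\]
and $y_1=x_1+u_1$ almost surely. These are exactly the defining conditions of $\mathfrak{Q}$, so $Q_a\in\mathfrak{Q}$ for every $a>1/n$.

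There is no real obstacle. The only points needing care are to verify that $\mu_C$ has nonnegative weights, which is where $a>\sigma_C/n$ enters, and to keep track of the vanishing cross terms.
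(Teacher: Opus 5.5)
Your proof is correct and follows essentially the same route as the paper: transfer each expectation to $\mu$ via $E_{P_Q}f(R_1)=E_\mu f(T(V,C,Z))$, then expand using independence, $z^2=1$, and the zero means of the symmetric marginals. Your explicit check that $a>\sigma_C/n$ makes $\mu_C$ a genuine probability measure is a small point the paper leaves implicit.
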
 
	\begin{proof}
	Denote by~$V,\ C,$ and~$Z$ the coordinate projections on~$\R^3$ with~$\R^3$ equipped with the measure~$\mu=\mu_V\otimes \mu_C\otimes \mu_Z$, implying that, as random variables, $V,\ C,$ and~$Z$ are independent with~$V\sim \mu_V$,~$C\sim \mu_C$, and~$Z\sim \mu_Z$. Denote the expectation on~$\R^3$ by~$E_\mu$. 
	We shall use that for any $Q$-integrable~$f:\R^4\to \R$, it holds that
	\begin{align*}
		E_{P_Q}f(y_1,x_1,z_1,u_1)
		&=
		\int_{\R^4}f(y,x,z,u)dQ(y,x,z,u)\\
		&=
		\int_{\R^3}f(T(v,c,z))d\mu(v,c,z)\\
		&=
		E_{\mu}f(T(V,C,Z)).
	\end{align*}
 Then it holds that:
\begin{itemize}
\item $P_Q(y_1=\beta x_1+u_1)=Q\del[1]{(y,x,z,u)\in\R^4:y=\beta x+u}\\ =\mu\del[1]{(v,c,z)\in\R^3: T_1(v,c,z)=\beta T_2(v,c,z)+T_4(v,c,z)}=1$
\item $E_{P_Q}(z_1u_1)=E_{\mu}(Z(ZC+V))=E_{\mu}(C)+E_{\mu}(ZV)=0$; 
\item $E_{P_Q}(x_1u_1)=E_{\mu}(\pi Z+V)(ZC+V)=\pi E_{\mu}(C)+\pi E_{\mu}(ZV)+E_{\mu}(VZC)+E_{\mu}(V^2)=E_{\mu}(V^2)=\sigma_V^2$;
\item $E_{P_Q}(x_1z_1)=E_{\mu}((\pi Z+V)Z)=\pi+E_{\mu}(VZ)=\pi$;
\item $E_{P_Q}(z_1^2x_1^2)=E_{P_Q}(x_1^2)=E_{\mu}(\pi^2+V^2+2\pi ZV)=\pi^2+\sigma_V^2$;
\item $E_{P_Q}(z_1^2u_1^2)=E_{P_Q}(u_1^2)=E_{\mu}(C^2+V^2+2ZCV)=\sigma_C^2+\sigma_V^2$.
\item $E_{P_Q}(u_1)=E_{\mu}(ZC+V)=0$.
 \end{itemize}
Finally, for~$\beta=\pi=\sigma_v^2=\sigma_C^2=1$ it holds for each~$a>1/n$ that~$Q_a\in\mathfrak{Q}$.		
	\end{proof}

\begin{lemma}\label{lem:estimates}
	For~$n\geq 2$, it holds that
	\begin{equation}\label{eq:auxbounds}
		\frac{1}{4}
		\leq
		\frac{\del[1]{1-n^{-2}}^nn}{4(n-1)}\qquad\text{and}\qquad \frac{n}{4(n-1)}\leq \del[2]{1-\frac{1}{n}}^{n-1}.	
	\end{equation}
	Thus, for any~$p_n$ satisfying~$(1-n^{-2})^n\leq p_n$ it holds that~$1/4\leq p_n (1-n^{-1})^{n-1}$.  Furthermore, 
	\begin{equation}\label{eq:auxbound2}
		\del[1]{1-n^{-2}}^n\del[3]{1+\sqrt{1-\frac{(n-1)}{n\del[1]{1-n^{-2}}^n}}}\geq \frac{1}{2}.
	\end{equation}
	
\end{lemma}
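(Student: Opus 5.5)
The three claims reduce to Bernoulli's inequality plus the monotonicity of $(1-1/n)^n$, so I would prove them in order, in elementary form.

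\emph{First inequality in \eqref{eq:auxbounds}.} It is equivalent to $(1-n^{-2})^n\geq (n-1)/n=1-1/n$. This is Bernoulli's inequality $(1+x)^n\geq 1+nx$ applied with $x=-n^{-2}\geq -1$.

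\emph{Second inequality in \eqref{eq:auxbounds}.} Multiplying both sides by $(n-1)/n>0$, it is equivalent to $(1-1/n)^n\geq 1/4$. At $n=2$ this holds with equality, so it suffices that $a_n:=(1-1/n)^n$ is non-decreasing in $n\geq 2$. I would obtain this from AM--GM applied to the $n+1$ numbers consisting of $n$ copies of $1-1/n$ and one copy of $1$. Their arithmetic mean is $(n-1+1)/(n+1)=1-1/(n+1)$, so
\begin{equation*}
a_n^{1/(n+1)}\leq 1-\frac{1}{n+1}, \qquad\text{that is,}\qquad a_n\leq a_{n+1}.
\end{equation*}
This monotonicity step is the only one needing more than a one-line inequality, and it is standard.

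\emph{The ``thus'' statement.} Suppose $p_n\geq (1-n^{-2})^n$. Combining the two inequalities in \eqref{eq:auxbounds} gives
\begin{equation*}
p_n\del[1]{1-n^{-1}}^{n-1}\geq \del[1]{1-n^{-2}}^n\cdot\frac{n}{4(n-1)}\geq \frac14 .
\end{equation*}

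\emph{Inequality \eqref{eq:auxbound2}.} Write $q_n:=(1-n^{-2})^n$. By the first step, $q_n\geq (n-1)/n$, so $(n-1)/(nq_n)\in[0,1]$ and the square root is well defined and non-negative. Hence the left-hand side is at least $q_n\geq 1-1/n\geq 1/2$ for $n\geq 2$, which gives \eqref{eq:auxbound2} immediately.
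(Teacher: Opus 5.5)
Your proof is correct and follows the paper's argument step for step. You use Bernoulli's inequality for the first bound and, after dropping the square root, for \eqref{eq:auxbound2}, and monotonicity of $(1-1/n)^n$ with $a_2=1/4$ for the second bound. The only difference is that you justify that monotonicity via AM--GM, which the paper simply asserts.
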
	

\begin{proof}
Concerning the first bound in~\eqref{eq:auxbounds}, Bernoulli's inequality gives
\begin{equation}\label{eq:Bern}
	\left(1-\frac{1}{n^2}\right)^n \ge 1-\frac{n}{n^2}=1-\frac1n.
\end{equation}
Hence,
\begin{equation*}
\left(1-\frac{1}{n^2}\right)^n \frac{n}{4(n-1)}
\ge \left(1-\frac1n\right)\frac{n}{4(n-1)}
= \frac14.	
\end{equation*}

For the second bound in~\eqref{eq:auxbounds}, let~$a_n:=\left(1-\frac1n\right)^n$. The sequence $(a_n)_{n\ge 2}$ is increasing and $a_2=1/4$, so $a_n\geq 1/4$ for all $n\geq 2$. Therefore,
\begin{equation*}
	\left(1-\frac1n\right)^{n-1}
=\frac{n}{n-1}\left(1-\frac1n\right)^n
\ge \frac{n}{n-1}\cdot \frac14
= \frac{n}{4(n-1)}.
\end{equation*}
Consequently, if $p_n\ge \left(1-n^{-2}\right)^n$, combining the just established two inequalities in~\eqref{eq:auxbounds} yields
\begin{equation*}
	\frac14\leq p_n\left(1-\frac1n\right)^{n-1}.
\end{equation*}

Concerning \eqref{eq:auxbound2}, by the first inequality in the already established~\eqref{eq:auxbounds},~$\frac{(n-1)}{n\del[0]{1-n^{-2}}^n}\leq 1$ so that the left-hand side of~\eqref{eq:auxbound2} is well-defined. Furthermore,~\eqref{eq:Bern} has already established that
\begin{equation*}
	q_n:=\left(1-\frac{1}{n^2}\right)^n\geq 1-\frac{1}{n}.
\end{equation*}
Thus, bounding the square root in~\eqref{eq:auxbound2} from below by zero, it follows that
\begin{equation*}
\del[1]{1-n^{-2}}^n\del[3]{1+\sqrt{1-\frac{(n-1)}{n\del[1]{1-n^{-2}}^n}}}
\geq
1-\frac{1}{n}
\geq 
\frac{1}{2},\qquad \text{for all }n\geq 2.
\end{equation*}
 
\end{proof}

\begin{proof}[Proof of Theorem~\ref{lem:Bound1}]
Fix~$n\geq 2$ and $\delta\in\del[0]{0,1/4}$. 

It suffices to consider the subfamily
$\{Q_a:a>1/n\}\subseteq\mathcal Q$ constructed in Lemma \ref{lem:dist1} and, as for~$Q=Q_a$ in the last line of that Lemma, we make explicit the dependence also of~$\mu=\mu_a=\mu_V\otimes \mu_{C,a}\otimes \mu_Z$ on~$a$ with~$\mu_V$,~$\mu_{C,a}=\mu_{C}$, and~$\mu_Z$ defined in the statement of that Lemma~\ref{lem:dist1}. Recall that~$\beta=\pi=\sigma_V^2=\sigma_C^2=1$. Furthermore, let~$T:\R^3\to\R^4$ be as defined there and denote by
\begin{equation*}
	(V_i,C_i,Z_i),\qquad i=1,\hdots,n 
\end{equation*}
the coordinate projections from the latent probability space~$((\R^3)^n, \mc{B}(\R^3)^n, \mu_a^n)$ to~$\R^3$. Finally, let~$(Y_i,X_i,Z_i,U_i):=T(V_i,C_i,Z_i)\in\R^4$ for~$i=1,\hdots,n$ such that
\begin{align*}
 Y_i=T_1(V_i,C_i,Z_i)=X_i+U_i,\qquad X_i=T_2(V_i,C_i,Z_i)=Z_i+V_i,\\ Z_i=T_3(V_i,C_i,Z_i)=Z_i,\qquad U_i=T_4(V_i,C_i,Z_i)=Z_iC_i+V_i. 	
\end{align*}			
Recalling that~$R_i=(y_i,x_i,z_i,u_i)$ and writing~$\hat{\beta}_{n,\text{2SLS}}=\hat{\beta}_{n,\text{2SLS}}(R_1,\hdots,R_n)$ it holds for all~$t\geq 0$ and~$a>1/n$ that (note that~$\beta(Q_a)=\beta=1$ for all~$a>1/n$.)
\begin{align}\label{eq:measurechange}
	P_{Q_a}\del[2]{\envert[1]{\hat{\beta}_{n,\text{2SLS}}((R_i)_{i=1}^n)-\beta}\geq t}
	&=
	Q_a^n\del[2]{\envert[1]{\hat{\beta}_{n,\text{2SLS}}((r_i)_{i=1}^n)-\beta}\geq t}\notag\\
	&=
	\mu_a^n\del[2]{\envert[1]{\hat{\beta}_{n,\text{2SLS}}((T(v_i,c_i,z_i))_{i=1}^n)-\beta}\geq t}\notag\\
	&=
	\mu_a^n\del[2]{\envert[1]{\hat{\beta}_{n,\text{2SLS}}((T(V_i,C_i,Z_i))_{i=1}^n)-\beta}\geq t}.
	\end{align}
Therefore, it suffices to consider the far right-hand side of the previous display concerning which we note that (whenever the denominator is non-zero)
\begin{align*}
\hat{\beta}_{n,\text{2SLS}}((T(V_i,C_i,Z_i))_{i=1}^n)
&=
\frac{n^{-1}\sum_{i=1}^nT_1(V_i,C_i,Z_i)T_3(V_i,C_i,Z_i)}{n^{-1}\sum_{i=1}^nT_2(V_i,C_i,Z_i)T_3(V_i,C_i,Z_i)}\\	
&=
\frac{n^{-1}\sum_{i=1}^n Y_iZ_i}{n^{-1}\sum_{i=1}^nX_iZ_i}\\
&=
1+\frac{n^{-1}\sum_{i=1}^n U_iZ_i}{n^{-1}\sum_{i=1}^nX_iZ_i},
\end{align*} 
the last equality using that~$Y_i=X_i+U_i$. Since also~$X_i=Z_i+V_i$ and $U_i=Z_iC_i+V_i$, 
\begin{equation*}
	\hat{\beta}_{n,\text{2SLS}}((T(V_i,C_i,Z_i))_{i=1}^n)
	=
	1+\frac{n^{-1}\sum_{i=1}^n (C_i+Z_iV_i)}{1+n^{-1}\sum_{i=1}^nZ_iV_i},
\end{equation*}
noting that~$\mu_a(Z_i^2=1)=1$ for all~$i=1,\hdots,n$. Therefore, since~$\beta=1$,
\begin{align*}
	\mu_a^n\del[1]{|\hat{\beta}_{n,\text{2SLS}}((T(V_i,C_i,Z_i))_{i=1}^n)-\beta|\geq t}
	=
	\mu_a^n\del[3]{\frac{|n^{-1}\sum_{i=1}^nC_i+n^{-1}\sum_{i=1}^nZ_iV_i|}{|1+n^{-1}\sum_{i=1}^nZ_iV_i|}\geq t}.
\end{align*}
Let~$\mathcal{G}_n=\cbr[1]{n^{-1}\sum_{i=1}^nZ_iV_i=0}$. Clearly~$\mc{G}_n$ occurs if~$V_1=\hdots=V_n=0$, implying that~$\mu_a^n(\mc{G}_n)\geq \del[1]{1-\frac{1}{n^2}}^n$. Thus,
\begin{align*}
	\mu_a^n\del[1]{|\hat{\beta}_{n,\text{2SLS}}((T(V_i,C_i,Z_i))_{i=1}^n)-\beta|\geq t}
	&\geq
		\mu_a^n\del[3]{\frac{|n^{-1}\sum_{i=1}^nC_i+n^{-1}\sum_{i=1}^nZ_iV_i|}{|1+n^{-1}\sum_{i=1}^nZ_iV_i|}\geq t, \mc{G}_n}\\
	&=
	\mu_a^n\del[3]{\envert[2]{n^{-1}\sum_{i=1}^nC_i}\geq t,\mathcal{G}_n}\\
	&=
	\mu_a^n\del[2]{\envert[1]{n^{-1}\sum_{i=1}^nC_i}\geq t}\cdot p_n,
\end{align*}
where the last equality used the independence of~$\mathcal{G}_n$ and~$\cbr[0]{C_i}_{i=1}^n$ as well as~$$p_n:=\mu_a^n(\mc{G}_n)=(\mu_V\otimes \mu_Z)^n(\mc{G}_n)\geq \del[1]{1-\frac{1}{n^2}}^n$$ not depending on~$a$. Now, for~$t=a$, it holds that~$|n^{-1}\sum_{i=1}^nC_i|\geq t$ (in particular) if one~$C_i$ is non-zero and the remaining~$n-1$ $C_i$ are zero. Thus, noting that~$\mu_a(C_1\neq 0)=\frac{1}{n^2a^2}\in(0,1)$ since~$a>1/n$, it holds that
\begin{align}\label{eq:reduction}
\mu_a^n\del[1]{|\hat{\beta}_{n,\text{2SLS}}((T(V_i,C_i,Z_i))_{i=1}^n)-\beta|\geq a}
\geq
n\cdot \frac{1}{n^2a^2}\del[2]{1-\frac{1}{n^2a^2}}^{n-1}\cdot p_n.
\end{align}
We now seek to set the right-hand side in the previous display equal to~$\delta\in(0,1)$ and solve for~$a$ to obtain a lower bound on the deviation of~$\hat{\beta}_{n,\text{2SLS}}((T(V_i,C_i,Z_i))_{i=1}^n)$ from~$\beta$ that occurs with probability at least~$\delta$. To this end, let~$s=\frac{1}{n^2a^2}$ and set~$f(s)=ns(1-s)^{n-1}\cdot p_n$ for~$s\in[0,1]$.~$f$ is maximized at~$n^{-1}$ with maximal value~$(1-n^{-1})^{n-1}\cdot p_n$. Since~$f$ is strictly increasing and continuous on~$[0,n^{-1}]$ there exists a unique solution~$s^*(\delta)\in[0,n^{-1})$ to~$f(s)=\delta\in (0,1/4)\subseteq (0,(1-n^{-1})^{n-1}\cdot p_n)$, the set inclusion following from~$(1-n^{-2})^n\leq p_n$ together with Lemma~\ref{lem:estimates}.\footnote{There is also a solution in~$[n^{-1},1]$ but since~$s=\frac{1}{n^2a^2}$, one has that smaller $s$ correspond to larger~$a$ and we are seeking to find the maximal~$a$ such that~$\hat{\beta}_{n,\text{2SLS}}((T(V_i,C_i,Z_i))_{i=1}^n)$ differs from~$\beta$ by~$a$ with probability at least~$\delta$. Hence, we are interested in small~$s$.} The corresponding value~$a^*(\delta)$ is

\begin{equation*}
	a^*(\delta)	
	=
	\del[2]{\frac{1}{n^2s^*(\delta)}}^{1/2}> \frac{1}{\sqrt{n}}\
> \frac1n. 
\end{equation*}
Therefore, 
\begin{equation*}
		\mu_{a^*(\delta)}^n\del[2]{\envert[1]{\hat{\beta}_{n,\text{2SLS}}((T(V_i,C_i,Z_i))_{i=1}^n)-\beta}\geq a^*(\delta)}
		\geq
		\delta,	
\end{equation*}
and we now argue that~$\sup_{a>1/n}\mu_{a}^n\del[1]{\envert[1]{\hat{\beta}_{n,\text{2SLS}}((T(V_i,C_i,Z_i))_{i=1}^n)-\beta}> r}>\delta$ for all~$r<a^*(\delta)$. Fix~$r<a^*(\delta)$. Because also~$a^*(\delta)>1/\sqrt{n}$ by the penultimate display, we can and do choose an~$a$ satisfying~$\max(r,1/\sqrt{n})<a<a^*(\delta)$. By~$r<a$ and~\eqref{eq:reduction},
\begin{align*}
	&\mu_a^n\del[1]{|\hat{\beta}_{n,\text{2SLS}}((T(V_i,C_i,Z_i))_{i=1}^n)-\beta|> r}\\
	\geq
	&\mu_a^n\del[1]{|\hat{\beta}_{n,\text{2SLS}}((T(V_i,C_i,Z_i))_{i=1}^n)-\beta|\geq a}
\geq
		f\del[2]{\frac{1}{n^2a^2}}>f\del[2]{\frac{1}{n^2a^*(\delta)^2}}
		=
		f(s^*(\delta))=\delta,	
\end{align*}
the strict inequality using that we have already argued that~$f$ is strictly increasing on~$[0,n^{-1}]$ and the final equality being by definition of~$s^*(\delta)$. Since~$r<a^*(\delta)$ was arbitrary, it follows that
\begin{equation*}
	\tau_n(\hat{\beta}_{n,\text{2SLS}},\mathfrak{Q},\delta)\geq a^*(\delta).
\end{equation*}

Finally, we upper bound~$s^*=s^*(\delta)$ in order to lower bound~$a^*(\delta)$ explicitly in terms of~$\delta$ for~$\delta\in(0,1/4)\subseteq(0,\frac{p_nn}{4(n-1)})$; the set inclusion following from Lemma~\ref{lem:estimates} for~$n\geq 2$ since~$(1-n^{-2})^n\leq p_n$. To this end, because~$(1-s)
^{n-1}\geq 1-(n-1)s$ for~$n\geq 2$ and~$s\in[0,1]$,
\begin{equation*}
\delta=
f(s^*)=ns^*(1-s^*)^{n-1}\cdot p_n
\geq
ns^*(1-(n-1)s^*)\cdot p_n.
\end{equation*}
The smallest root,~$r_-=r_-(\delta)$, say, of the quadratic equation~$ns(1-(n-1)s)p_n=\delta$ is
\begin{equation*}
r_-
	=
	\frac{1}{2(n-1)}\del[4]{1-\sqrt{1-\frac{4(n-1)\delta}{np_n}}}
	=
	\frac{\delta}{n}\Bigg\slash \sbr[4]{\frac{p_n}{2}\cdot\del[4]{1+\sqrt{1-\frac{4(n-1)\delta}{np_n}}}},
\end{equation*}
where the equality followed from multiplying and dividing the left-hand side by~$1+\sqrt{1-\frac{4(n-1)\delta}{np_n}}$ and using that~$\del[1]{1-\sqrt{1-\frac{4(n-1)\delta}{np_n}}}\del[1]{1+\sqrt{1-\frac{4(n-1)\delta}{np_n}}}=\frac{4(n-1)\delta}{np_n}$. From the first expression for~$r_-$ it is seen that~$r_-\leq n^{-1}$ for~$n\geq 2$. Thus, since
\begin{eqnarray*}
	f(r_-)=nr_-(1-r_-)^{n-1}p_n
	\geq
	 nr_-(1-(n-1)r_-)p_n=\delta
\end{eqnarray*}
and because~$f$ is increasing on~$[0,n^{-1}]$, it follows that~$s^*\leq r_-$. Hence, for~$\delta\in\del[1]{0,\frac{p_n n}{4(n-1)}}$,
\begin{align*}
	a^*(\delta)
	\geq
	\del[4]{\frac{1}{n}\cdot \frac{p_n\del[2]{1+\sqrt{1-\frac{4(n-1)\delta}{np_n}}}}{2\delta}}^{1/2}
	&=
	\del[2]{\frac{\delta^{-1}}{n}}^{1/2}\del[4]{\frac{p_n\del[2]{1+\sqrt{1-\frac{4(n-1)\delta}{np_n}}}}{2}}^{1/2}\\
	&\geq
	\frac{1}{2}\cdot\del[2]{\frac{\delta^{-1}}{n}}^{1/2},
	\end{align*}	
	the last inequality following from~$p_n\geq (1-n^{-2})^n$,~$\delta\leq 1/4$, and~\eqref{eq:auxbound2} of Lemma~\ref{lem:estimates}.
\end{proof}
	
\begin{proof}[Proof of Corollary~\ref{cor:conc}]
Since the assumptions of Theorems~\ref{lem:Bound1} and \ref{thm:W-2SLS} are imposed,~\eqref{eq:smallratio} is a direct consequence of these. Next, \eqref{eq:PoorConc} follows from Theorem~\ref{lem:Bound1} because~$\delta_n=12n^{-\lambda}\in(0,1/4)$, eventually. Similarly,~\eqref{eq:GoodConc} follows from Theorem \ref{thm:W-2SLS} since~$\delta_n=12n^{-\lambda}$ eventually satisfies the conditions in~\eqref{eq:epscondDev}.  	
\end{proof}	
		
The following upper bound on the uniform deviation radius of~$\hat{\beta}_{n,\mathrm{2SLS}}$ over~$\overline{\mathfrak{Q}}(\beta,\underline{\pi},K)$ complements the lower bound in Theorem \ref{lem:Bound1}	 over~$\mathfrak{Q}$. Recall that, e.g.,~$\mathfrak{Q}\subseteq \overline{\mathfrak{Q}}(1,1,2)$.
\begin{lemma}
\label{lem:2sls-upper-bound}
Fix $\beta\in\mathbb{R}$, $\underline{\pi}>0$, $K>0$,
$n\in\mathbb{N}$, $\delta\in(0,1)$, and suppose that
\begin{equation}\label{eq:2SLSdelta}
	n\underline{\pi}^2\delta
    \geq 8K,
\end{equation}
Then
\begin{equation*}
\tau_n\left(
    \widehat{\beta}_{n,\mathrm{2SLS}},
    \overline{\mathfrak{Q}}(\beta,\underline{\pi},K),
    \delta
\right)
\leq
\sqrt{\frac{8K}{\underline{\pi}^2\delta n}}
\leq 
1.	
\end{equation*}
\end{lemma}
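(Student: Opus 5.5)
Fix $Q\in\overline{\mathfrak{Q}}(\beta,\underline{\pi},K)$ and write $\hat\pi_n=n^{-1}\sum_{i=1}^n z_ix_i$ and $\bar g_n=n^{-1}\sum_{i=1}^n z_iu_i$. On the event $\{\hat\pi_n\neq 0\}$ the display preceding Theorem~\ref{lem:Bound1} gives
\begin{equation*}
\hat\beta_{n,\mathrm{2SLS}}-\beta=\bar g_n/\hat\pi_n .
\end{equation*}
The plan is to control the numerator and the denominator separately with Chebyshev's inequality. Each step uses only second moments and costs $\delta/2$ in failure probability. Because the moment bounds defining $\overline{\mathfrak{Q}}(\beta,\underline{\pi},K)$ do not depend on $Q$, the resulting bound is automatically uniform, which is exactly what $\tau_n$ requires.

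\textbf{Denominator.} Let $\pi=E_{P_Q}(z_1x_1)$, so $|\pi|\geq\underline{\pi}$. We have $\operatorname{Var}(\hat\pi_n)\leq E_{P_Q}(z_1^2x_1^2)/n\leq K/n$. Chebyshev's inequality then gives
\begin{equation*}
P_Q\bigl(|\hat\pi_n-\pi|\geq \underline{\pi}/2\bigr)\leq \frac{4K}{n\underline{\pi}^2}\leq \frac{\delta}{2},
\end{equation*}
where the last step uses \eqref{eq:2SLSdelta}. On the complement, $|\hat\pi_n|\geq |\pi|-\underline{\pi}/2\geq \underline{\pi}/2>0$. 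In particular the denominator is nonzero there, so the convention $\hat\beta_{n,\mathrm{2SLS}}=0$ is never triggered on this event.

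\textbf{Numerator.} Since $E_{P_Q}(z_1u_1)=0$ and $E_{P_Q}(z_1^2u_1^2)\leq K$, Chebyshev's inequality gives
\begin{equation*}
P_Q\bigl(|\bar g_n|> s\bigr)\leq \frac{K}{ns^2}.
\end{equation*}
Choosing $s=\sqrt{2K/(n\delta)}$ makes this at most $\delta/2$.

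\textbf{Combining.} Outside an event of probability at most $\delta$,
\begin{equation*}
|\hat\beta_{n,\mathrm{2SLS}}-\beta|\leq \frac{2s}{\underline{\pi}}=\sqrt{\frac{8K}{\underline{\pi}^2\delta n}}.
\end{equation*}
Hence $\sup_Q P_Q(|\hat\beta_{n,\mathrm{2SLS}}-\beta|>t)\leq\delta$ for this value of $t$, and so $\tau_n$ is at most it. The final inequality $\sqrt{8K/(\underline{\pi}^2\delta n)}\leq 1$ is a direct rearrangement of \eqref{eq:2SLSdelta}.

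The argument has no real obstacle. The only points needing care are, first, splitting the failure probability so that the constant $8$ comes out exactly, and second, checking that assumption \eqref{eq:2SLSdelta} simultaneously delivers the denominator bound and the final inequality $\leq 1$. The constant $4$ from the denominator step enters only as $4K/(n\underline{\pi}^2)\leq\delta/2$, which is equivalent to $n\underline{\pi}^2\delta\geq 8K$, so both uses of the assumption are consistent.
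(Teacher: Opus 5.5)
Your proposal is correct and is essentially the paper's proof: Chebyshev on the denominator at threshold $\underline{\pi}/2$, and Chebyshev on the numerator at threshold $\sqrt{2K/(n\delta)}=r_{n,\delta}\underline{\pi}/2$, each costing $\delta/2$, then a union bound and the definition of $\tau_n$. The only difference is cosmetic: you put the strict inequality in the numerator's Chebyshev event rather than in the final deviation bound, which changes nothing.
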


\begin{proof}
Fix
$Q\in\overline{\mathfrak{Q}}(\beta,\underline{\pi},K)$, set $\pi_Q:=E_{P_Q}(z_1x_1)$ and abbreviate
\begin{equation*}
	    \widehat{\mu}_{n,zu}
    :=\frac{1}{n}\sum_{i=1}^n z_i u_i,
    \qquad
    \widehat{\mu}_{n,zx}
    :=\frac{1}{n}\sum_{i=1}^n z_i x_i.
\end{equation*}
Note that, whenever $\widehat{\mu}_{n,zx}\neq 0$,
\begin{equation*}
	\widehat{\beta}_{n,\mathrm{2SLS}}
=
\frac{n^{-1}\sum_{i=1}^n z_i y_i}
     {n^{-1}\sum_{i=1}^n z_i x_i}
=
\beta+
\frac{\widehat{\mu}_{n,zu}}
     {\widehat{\mu}_{n,zx}},
\end{equation*}
and by the definition of~$\overline{\mathfrak{Q}}(\beta,\underline{\pi},K)$,
\begin{equation*}
	  |\pi_Q|\geq\underline{\pi},
    \qquad
    \operatorname{Var}_{P_Q}(z_1u_1)
    =E_{P_Q}(z_1^2u_1^2)\leq K,\qquad \text{and}\qquad  \operatorname{Var}_{P_Q}(z_1x_1)\leq E_{P_Q}(z_1^2x_1^2)\leq K.
\end{equation*}
By Chebychev's inequality and~\eqref{eq:2SLSdelta} 
\begin{equation*}
P_Q\left(
    \left|\widehat{\mu}_{n,zx}-\pi_Q\right|
    \geq\frac{\underline{\pi}}{2}
\right)
\leq
\frac{4\operatorname{Var}_{P_Q}(z_1x_1)}
     {n\underline{\pi}^2} 
     \leq
 \frac{4K}
     {n\underline{\pi}^2}
     \leq 
     \frac{\delta}{2}.
\end{equation*}
Next, since $E_{P_Q}(z_1u_1)=0$, it holds for~$r_{n,\delta}=\sqrt{\frac{8K}{\underline{\pi}^2\delta n}}$ that
\begin{equation*}
P_Q\left(
    \left|\widehat{\mu}_{n,zu}\right|
    \geq\frac{r_{n,\delta}\underline{\pi}}{2}
\right)
\leq
\frac{4\operatorname{Var}_{P_Q}(z_1u_1)}
     {nr_{n,\delta}^2\underline{\pi}^2} 
\leq
\frac{4K}
     {nr_{n,\delta}^2\underline{\pi}^2} 
=
\frac{\delta}{2}.	
\end{equation*}
Hence, by the union bound, there exists an event $\mathcal{E}_{n}$
with $P_Q(\mathcal{E}_{n})\geq 1-\delta$ on which
\begin{equation*}
  \left|\widehat{\mu}_{n,zx}\right|
    \geq
    |\pi_Q|-
    \left|\widehat{\mu}_{n,zx}-\pi_Q\right| 
    >\frac{\underline{\pi}}{2}>0
    \qquad\text{and}\qquad
    \left|\widehat{\mu}_{n,zu}\right|
    <\frac{r_{n,\delta}\underline{\pi}}{2},	
\end{equation*}
so the 2SLS denominator~$\widehat{\mu}_{n,zx}$ is nonzero, and
\begin{equation*}
	\left|\widehat{\beta}_{n,\mathrm{2SLS}}-\beta\right|
=
\frac{\left|\widehat{\mu}_{n,zu}\right|}
     {\left|\widehat{\mu}_{n,zx}\right|} 
<
\frac{r_{n,\delta}\underline{\pi}/2}{\underline{\pi}/2}
=
r_{n,\delta}
\end{equation*}
It follows that
\begin{equation*}
	P_Q\left(
    \left|\widehat{\beta}_{n,\mathrm{2SLS}}-\beta\right|
    > r_{n,\delta}
\right)
\leq
P_Q\left(
    \left|\widehat{\beta}_{n,\mathrm{2SLS}}-\beta\right|
    \geq r_{n,\delta}
\right)
\leq 
\delta.
\end{equation*}
Since $Q$ was arbitrary,
\begin{equation*}
	\sup_{Q\in\overline{\mathfrak{Q}}(\beta,\underline{\pi},K)}
P_Q\left(
    \left|\widehat{\beta}_{n,\mathrm{2SLS}}-\beta\right|
    >r_{n,\delta}
\right)
\leq\delta.
\end{equation*}
The definition of the uniform deviation radius therefore gives
\begin{equation*}
	\tau_n\left(
    \widehat{\beta}_{n,\mathrm{2SLS}},
    \overline{\mathfrak{Q}}(\beta,\underline{\pi},K),
    \delta
\right)
\leq r_{n,\delta}
=
\sqrt{\frac{8K}{\underline{\pi}^2\delta n}}.
\end{equation*}
\end{proof}

\subsection{Proof of Theorem~\ref{thm:W-2SLS}}
Let
\begin{equation}\label{eq:Bconstant}
\mathfrak{B}:= \sqrt{2} +  \left( (A(\mathfrak{l}(1.01, 1), 1)/3)   +\overline{B}(1.01, 1) \right), 
\end{equation}
where
\begin{align*}
&A(d_1,d_2):=\frac{1}{d_1^{1/2}}+\frac{1}{d_2^{1/2}}\qquad\text{and}\qquad \overline{B}(1.01, 1):=2+\sbr[2]{1+\del[2]{\frac{\mathfrak{u}(1.01, 1)}{\mathfrak{l}(1.01, 1)}}^{\frac{1}{2}}}\mathfrak{u}(1.01, 1)^{\frac{1}{2}}\\
&\mathfrak{l}(1.01, 1) := (1-1.01^{-1})\exp \del[2]{{-\frac{1}{(1-1.01^{-1})}-1}}\quad\text{and}\quad  \mathfrak{u}(1.01, 1):=
	3+\sqrt{5}.
	\end{align*}

\begin{proof}[Proof of Theorem~\ref{thm:W-2SLS}]
Abbreviate~$\hat{\mu}_{n,zy}=\mu_{\eps}\del[1]{z_1y_1,\hdots,z_ny_n}$,~$\hat{\mu}_{n, zx}=\mu_{\eps}\del[1]{z_1x_1,\hdots,z_nx_n}$, and note that
\begin{equation}\label{eq:decomp}
	\hat{\beta}_{n,\text{W-2SLS}}^{(\delta)}
	=
	\frac{\hat{\mu}_{n,zy}}{\hat{\mu}_{n, zx}}
	=
	\beta 
	+\frac{\hat{\mu}_{n,zy}-\beta\hat{\mu}_{n, zx}}{\hat{\mu}_{n,zx}}.
\end{equation}	
Fix $Q\in \overline{\mathfrak{Q}}(\beta,\underline{\pi},K)$,~$\pi_Q:=E_{P_Q}(z_1x_1)$ and~$E_{P_Q}(z_1^2x_1^2)\leq K$. Note also that
\begin{align*}
	E_{P_Q}(z_1y_1)&=\beta E_{P_Q}(z_1x_1)+E_{P_Q}(z_1u_1)=\beta \pi_Q\qquad \text{and}\\
	E_{P_Q}(z_1^2y_1^2)&\leq 4\beta^2E_{P_Q}(z_1^2x_1^2)+4E_{P_Q}(z_1^2u_1^2)\leq 4K(\beta^2+1).
\end{align*}
We set up for two applications of Theorem 2.1 in~\cite{Wins1} with~$\eta$ there being zero (no contamination),~$\delta$ there replaced by~$\frac{1}{2}\delta$,~$\lambda_2=1$ such that~$\eps$ there equals the~$\eps$ in~\eqref{eq:epsfamDevAna} and (5) there is satisfied by~\eqref{eq:epscondDev}. Furthermore, for~$m=2$, and~$X_i$ there being~$z_iy_i$ and~$z_ix_i$, respectively, one has that by the union bound there exists a set~$E_n$ with probability at least~$1-\delta$ on which 
\begin{align*}
\envert[1]{\hat{\mu}_{n,zy}-\beta\pi_Q}
&\leq 
\mathfrak{B}\sqrt{4K(\beta^2+1)}\cdot \sqrt{\frac{\log(12/\delta)}{n}}
\leq
C_1\cdot \sqrt{\frac{\log(12/\delta)}{n}}\notag\qquad \text{and}\\	
\envert[1]{\hat{\mu}_{n,zx}-\pi_Q}
&\leq
\mathfrak{B}\sqrt{K}\cdot \sqrt{\frac{\log(12/\delta)}{n}}
\leq
C_1\cdot \sqrt{\frac{\log(12/\delta)}{n}},	
\end{align*}
for a constant~$C_1$ depending only on~$\beta$ and~$K$. Hence, on~$E_n$ it holds that
\begin{equation*}
	\envert[1]{\hat{\mu}_{n,zy}-\beta\hat{\mu}_{n, zx}}
	\leq
	\envert[1]{\hat{\mu}_{n,zy}-\pi_Q\beta}+|\beta|\envert[1]{(\hat{\mu}_{n,zx}-\pi_Q)}
	\leq
	C_2\cdot \sqrt{\frac{\log(12/\delta)}{n}},
\end{equation*}
for a constant~$C_2$ depending only on~$\beta$ and~$K$. Since by assumption 
\begin{equation*}
	|\hat{\mu}_{n,zx}|
	\geq
	 |\pi_Q|-\envert[0]{\hat{\mu}_{n,zx}-\pi_Q} 
	\geq
	 \underline{\pi}-\mathfrak{B}\sqrt{K\frac{\log(12/\delta)}{n}}
	 \geq
	 \frac{\underline{\pi}}{2},
\end{equation*}
the penultimate display together with~\eqref{eq:decomp} yields
\begin{equation*}
	P_Q\del[3]{|\hat{\beta}_{n,\text{W-2SLS}}^{(\delta)}-\beta |
	\leq 
	C\sqrt{\frac{\log(12/\delta)}{n}}}\geq 1-\delta,
\end{equation*}
for a constant~$C$ depending only on~$\beta,\underline{\pi}$ and~$K$. Because $Q\in \mathfrak{Q}(\beta,\underline{\pi},K)$ was arbitrary, the conclusion follows.

\end{proof}

\section{Auxiliary lemmas}
\subsection{Distance between winsorized mean of contaminated  and arithmetic mean uncontaminated observations}
For real numbers~$s_1,\hdots,s_n$, we denote by~$s_1^*\leq \hdots\leq s_n^*$ their non-decreasing rearrangement.
Let $S_1,\hdots,S_n$ and~$\tilde S_1,\hdots,\tilde S_n$ be real random variables. For~$\eps\in(0,1/2]$, let~$\hat{\alpha}=\tilde S_{\lceil \eps n \rceil}^*$ and $\hat{\beta}=\tilde S_{\lfloor(1-\eps )n\rfloor+1}^*$. Define the (one-dimensional) quantile-based winsorized mean estimator
\begin{equation}\label{eqn:winsmean}
\hat{\mu}_{n}
=
\hat{\mu}_{n}(\eps):=\frac{1}{n}\sum_{i=1}^n\phi_{\hat\alpha,\hat\beta}(\tilde{S}_i),
\end{equation}
as in Section 2 of \cite{Wins1}. Observe that with the notation in~\eqref{eq:winsfunc} and~$\eps_n$ there being~$\eps$, one can write
\begin{equation}\label{eq:notatinalequivalence}
	\hat{\mu}_n(\eps)
	=
	\mu_{\eps}(\tilde{S}_1,\hdots,\tilde{S}_n)
	=
	\frac{1}{n}\sum_{i=1}^{n}
\phi_{\tilde{S}_{(\lceil \eps n\rceil)}^*,\tilde{S}_{(\lfloor(1-\eps)n\rfloor)+1}^*}(\tilde{S}_i).
\end{equation}	

Lemma~\ref{lem:wins_to_arithm_mean} below controls the distance between the arithmetic mean $n^{-1}\sum_{i=1}^nS_i$ of $S_1,\hdots,S_n$ and the winsorized mean~$\hat{\mu}_{n}$ of~$\tilde S_1,\hdots,\tilde S_n$ in~\eqref{eqn:winsmean}. We next introduce some notation used in the proof of the lemma.

 For~$p\in(0,1)$ and a random variable~$S$, denote by~$Q_p(S)$ the $p$-quantile of the distribution of~$S$, that is
\begin{equation*}
Q_p(S)=\inf\cbr[1]{s\in \R:\P(S\leq s)\geq p}.
\end{equation*}
Furthermore, let
\begin{equation}\label{eq:epsApp}
	\eps=\lambda_1\cdot \eta +\lambda_2\cdot \frac{\log(6/\delta)}{n}
,\qquad \text{for fixed }\lambda_1\in(1,\infty)\text{ and }\lambda_2\in (0,\infty),
\end{equation}
and let~$A_+=1-\lambda_1^{-1}\mathds{1}\cbr[0]{\eta>0}$ as well as~$A_-=1+\lambda_1^{-1}\mathds{1}\cbr[0]{\eta>0}$. Define, for~$\eps$ as in~\eqref{eq:epsApp} and~$\delta \in (0, \infty)$, the quantities
\begin{equation}\label{eq:c1}
c_1 : = -A_+W_0\del[1]{-e^{-(\frac{\log(6/\delta)}{\eps n}+A_+)/A_+}}\in (0,A_+),
\end{equation}
as well as
\begin{equation}\label{eq:c2}
c_2 : = -A_-W_{-1}\del[1]{-e^{-(\frac{\log(6/\delta)}{\eps n}+A_-)/A_-}}\in(A_-,\infty),
\end{equation}
with~$W_0$ and~$W_{-1}$ being the principal and lower branch of Lambert's~$W$ function (cf., e.g., \cite{Corless1996}), respectively.

In fixed~$n$ results of Lemmas~\ref{lem:wins_to_arithm_mean} and~\ref{lem:productwins}, we shall impose that~$\eps$ in~\eqref{eq:epsApp} satisfies
\begin{equation}\label{eq:epscond}
	2\eps +\frac{\log(6/\delta)}{n}+\sqrt{\del[2]{\frac{\log(6/\delta)}{n}}^2+4\frac{\log(6/\delta)}{n}\eps}<1.
\end{equation}
This condition is typically satisfied for~$n\to\infty$.
\begin{lemma}\label{lem:wins_to_arithm_mean}
Fix~$n\in\N$,~$\delta\in(0,1)$,~$\eta\in[0,1]$, and~$M\geq 1$. Let~$S_1,\hdots,S_n$ be i.i.d.~random variables taking values in~$\R$ with~$\E |S_1|^m<\infty$ for some~$m\in[1,\infty)$,~$\mu:=\E S_1$, and~$\sigma_m^m:=\E|S_1-\mu|^m$. Furthermore, let~$\tilde{S}_1,\hdots,\tilde{S}_n$ be random variables in~$\R$ satisfying  
\begin{equation*}
	\envert[1]{\cbr[1]{i\in\cbr[0]{1,\hdots,n}:\tilde{S}_i\neq S_i}}\leq \eta n.
\end{equation*}	
Then for~$\hat{\mu}_{n}(\eps)$ as defined in~\eqref{eqn:winsmean} with~$\eps$ as in~\eqref{eq:epsApp} satisfying~\eqref{eq:epscond} it holds with probability at least~$1-\frac{4}{6}\delta-\frac{1}{M}$ that
\begin{equation}\label{eq:FDcontrol}
	\envert[3]{\hat{\mu}_{n}(\eps)-\frac{1}{n}\sum_{i=1}^n S_i}
	\leq
	CM\sigma_m\eps^{1-\frac{1}{m}}, 
\end{equation}
for a constant~$C=C(\lambda_1,\lambda_2,m)$.	
\end{lemma}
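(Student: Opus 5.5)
We would prove Lemma~\ref{lem:wins_to_arithm_mean} by splitting the difference into a contamination part and a truncation part. Write $\phi=\phi_{\hat\alpha,\hat\beta}$. Then
\begin{equation*}
\hat\mu_n(\eps)-\frac1n\sum_{i=1}^n S_i
=\frac1n\sum_{i=1}^n\del[1]{\phi(\tilde S_i)-\phi(S_i)}+\frac1n\sum_{i=1}^n\del[1]{\phi(S_i)-S_i}.
\end{equation*}
The first sum has at most $\eta n$ nonzero terms, and each is bounded by $\hat\beta-\hat\alpha$. So it is at most $\eta(|\hat\alpha-\mu|+|\hat\beta-\mu|)$. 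The second sum is the total amount by which clean observations are pulled in, namely $\frac1n\sum_i (S_i-\hat\beta)_+ + \frac1n\sum_i(\hat\alpha-S_i)_+$. Centering at $\mu$ reduces everything to bounds on these two quantities, which we state for the upper side; the lower side is symmetric.

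\textbf{Step 1: localize the empirical quantiles.} Since $\eps\ge\lambda_1\eta$ with $\lambda_1>1$, at most $\eta n<\eps n$ points are changed. Hence $\hat\beta$ lies between the clean order statistics $S^*_{\lfloor(1-\eps)n\rfloor+1-\eta n}$ and $S^*_{\lfloor(1-\eps)n\rfloor+1+\eta n}$, and similarly for $\hat\alpha$. The clean order statistics are then compared with population quantiles. Using binomial (Chernoff/Bennett) bounds on $\sum_i\mathds 1\{S_i>Q_{1-c\eps}(S_1)\}$, we get, with probability $1-\frac{4}{6}\delta$ (four one-sided events, each at level $\delta/6$),
\begin{equation*}
Q_{1-c_2\eps}(S_1)\le\hat\beta\le Q_{1-c_1\eps}(S_1),
\end{equation*}
and the mirror statement for $\hat\alpha$. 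The constants $c_1,c_2$ of \eqref{eq:c1}--\eqref{eq:c2} are exactly the solutions of the Chernoff exponent equation $\eps n\, h(c/A_\pm)=\log(6/\delta)$, which is why Lambert $W$ appears. Condition \eqref{eq:epscond} guarantees that all quantile levels involved lie in $(0,1)$. Because $\log(6/\delta)/(\eps n)\le 1/\lambda_2$, the constants $c_1,c_2$ depend only on $(\lambda_1,\lambda_2)$.

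\textbf{Step 2: moment bounds at the quantiles.} By Markov, $|Q_{1-c\eps}(S_1)-\mu|\le \sigma_m (c\eps)^{-1/m}$. Therefore
\begin{equation*}
\eta(|\hat\alpha-\mu|+|\hat\beta-\mu|)\lesssim \eps\cdot\sigma_m\eps^{-1/m}=\sigma_m\eps^{1-1/m}.
\end{equation*}
For the truncation term, on the event of Step 1 we have
\begin{equation*}
\frac1n\sum_i(S_i-\hat\beta)_+\le \frac1n\sum_i (S_i-q)_+, \qquad q:=Q_{1-c_2\eps}(S_1).
\end{equation*}
Its expectation satisfies $\E(S_1-q)_+\le \E|S_1-\mu|\mathds 1\{|S_1-\mu|\ge |q-\mu|\}$. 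By Hölder and Markov this is at most $\sigma_m\,\P(S_1\ge q)^{1-1/m}\lesssim \sigma_m\eps^{1-1/m}$, provided $q\ge\mu$, which holds once $c_2\eps$ is small. If $q<\mu$, we bound it directly through $\P(S_1>q)\le c_2\eps$ together with Hölder. Markov's inequality at level $1/M$ then turns the expectation bound into a bound holding with probability $1-1/M$, at the cost of a factor $M$. This is where the $CM$ in \eqref{eq:FDcontrol} comes from.

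\textbf{Main obstacle.} The delicate part is Step 1: obtaining the two-sided quantile sandwich with the explicit constants $c_1,c_2$, uniformly over all contamination patterns, including the possibility that the adversary places points adjacent to the quantile. It also requires handling atoms, since $Q_p$ is only a generalized inverse. Here we would use the convention $\P(S_1\ge Q_p)\ge 1-p$ and bound counts of $\{S_i\ge Q_p\}$ and $\{S_i>Q_p\}$ separately. The remaining steps are routine Hölder and Markov computations once the event in Step 1 is in place.
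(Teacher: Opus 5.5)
Your plan follows the paper's proof essentially step by step. You use the same split into a contamination term and a truncation term, and the same quantile sandwich $Q_{c_1\eps}(S_1)\le\hat\alpha\le Q_{c_2\eps}(S_1)$, $Q_{1-c_2\eps}(S_1)\le\hat\beta\le Q_{1-c_1\eps}(S_1)$ on an event of probability at least $1-\frac46\delta$. The paper imports that sandwich from Lemma B.5 of \cite{Wins1}, with the bounds on $c_1,c_2$ from Lemma B.3 there, rather than re-deriving it by Chernoff bounds as you propose. The remaining ingredients also match: Markov-type quantile bounds (Lemma C.1 there), H\"older for the expected truncation, and Markov's inequality at level $1/M$. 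One step of your truncation bound, however, is not right as written, and it is exactly where \eqref{eq:epscond} does its real work.

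You assert that $q=Q_{1-c_2\eps}(S_1)\ge\mu$ once $c_2\eps$ is small. This is false: if $S_1=-1$ with probability $1-p$ and $S_1=(1-p)/p$ with probability $p<c_2\eps$, then $\mu=0$ but $q=-1$. So $q<\mu$ is a generic case. There, $\P(S_1>q)\le c_2\eps$ plus H\"older is not enough, because $\E(S_1-q)_+$ contains the term $(\mu-q)\P(S_1>q)$, which needs a bound on $\mu-q$.

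Your bound $|Q_{1-c\eps}(S_1)-\mu|\le\sigma_m(c\eps)^{-1/m}$ is only guaranteed on this inner side when $c\eps\le 1/2$. In general Markov gives only $\mu-Q_{1-c_2\eps}(S_1)\le\sigma_m(1-c_2\eps)^{-1/m}$, and \eqref{eq:epscond} does not force $c_2\eps\le 1/2$. The resulting term $\sigma_m c_2\eps(1-c_2\eps)^{-1/m}$ is $\lesssim\sigma_m\eps^{1-1/m}$, with a constant depending only on $(\lambda_1,\lambda_2,m)$, precisely when $1-c_2\eps\gtrsim\eps$.

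That is what \eqref{eq:epscond} provides, and it is more than ``all quantile levels lie in $(0,1)$''. Combine three facts: $c_1<A_+$; the upper bound on $c_2$ obtained from $A_-h(c_2/A_-)=\log(6/\delta)/(\eps n)$ with $h(y)=y-1-\log y\ge (y-1)^2/(2y)$ for $y\ge 1$; and $A_++A_-=2$. Together with \eqref{eq:epscond} these yield $c_1\eps+c_2\eps<1$, i.e.\ $1-c_2\eps>c_1\eps$; the paper gets this from Lemma B.3 of \cite{Wins1}. The paper avoids your case split altogether by writing, for any $q$, $\E(S_1-q)_+\le\E|S_1-\mu|\mathds{1}\{S_1>q\}+(\mu-q)\P(S_1>q)$ and then using $c_2\eps/(1-c_2\eps)<c_2/c_1$.

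Two smaller repairs are needed. In the H\"older step, keep the one-sided indicator $\mathds{1}\{S_1>q\}$: neither $\P(S_1\ge q)$ (an atom at $q$) nor $\P(|S_1-\mu|\ge|q-\mu|)$ (the lower tail) is controlled by $c_2\eps$. In the contamination term, use $Q_{c_1\eps}(S_1)\le\hat\alpha\le\hat\beta\le Q_{1-c_1\eps}(S_1)$, so that only the outer levels, where $c_1\eps<1/2$, enter.
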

\begin{proof}
Fix all quantities as in the statement of the lemma and observe that
\begin{equation}\label{eq:lemdecomp}
	\envert[3]{\hat{\mu}_{n}(\eps)-\frac{1}{n}\sum_{i=1}^n S_i}
	\leq
	\envert[3]{\frac{1}{n}\sum_{i=1}^n\sbr[1]{\phi_{\hat\alpha,\hat\beta}(\tilde{S}_i)-\phi_{\hat\alpha,\hat\beta}(S_i)} }
	+
	\envert[3]{\frac{1}{n}\sum_{i=1}^n\sbr[1]{\phi_{\hat\alpha,\hat\beta}(S_i)-S_i}}.
\end{equation}
By Lemma B.5 in~\cite{Wins1} with~$\epsilon$ there being~$\eps$ (cf.~also Remark B.2 just after that lemma) and the union bound it holds on a set~$\mc{G}_n$ of probability at least~$1-\frac{4}{6}\delta $ that	
\begin{equation}\label{eq:orderstatbounds}
	Q_{c_1\eps}(S_1)\leq \hat \alpha \leq Q_{c_2\eps}(S_1)\qquad 
	\text{and}
	\qquad 
	Q_{1-c_2\eps}(S_1)\leq \hat \beta \leq Q_{1-c_1\eps}(S_1).
\end{equation}
Therefore, since~$\envert[1]{\cbr[1]{i\in\cbr[0]{1,\hdots,n}:\tilde{S}_i\neq S_i}}\leq \eta n$, it holds on~$\mc{G}_n$ that
\begin{equation*}
\envert[3]{\frac{1}{n}\sum_{i=1}^n\sbr[1]{\phi_{\hat\alpha,\hat\beta}(\tilde{S}_i)-\phi_{\hat\alpha,\hat\beta}(S_i)} }
\leq
\eta\del[1]{\hat{\beta}-\hat\alpha}	
\leq
\eta\del[1]{Q_{1-c_1\eps}(S_1)-Q_{c_1\eps}(S_1)}	
\leq
\frac{2\sigma_m\eta}{(c_1\eps)^{\frac{1}{m}}},
\end{equation*}
the last inequality following from Lemma C.1 in \cite{Wins1}. Furthermore, by Lemma B.3 in~\cite{Wins1} it follows that~$c_1$ (recall its definition in~\eqref{eq:c1}) is bounded from below by a positive constant depending only on~$\lambda_1$ and~$\lambda_2$. Using this, together with~$\eta<\eps $ (as~$\lambda_1>1$) in the previous display implies that 
\begin{equation}\label{eq:firsterm}
	\envert[3]{\frac{1}{n}\sum_{i=1}^n\sbr[1]{\phi_{\hat\alpha,\hat\beta}(\tilde{S}_i)-\phi_{\hat\alpha,\hat\beta}(S_i)} }
	\leq
	C_1\sigma_m\eps^{1-\frac{1}{m}},
\end{equation}
for a constant~$C_1$ depending only on~$\lambda_1,\lambda_2$, and~$m$.

Next, in order to bound the second term on the right-hand side of~\eqref{eq:lemdecomp}, note that on~$\mc{G}_n$ (cf.~\eqref{eq:orderstatbounds}) it also holds that for~$i=1,\hdots,n$ that
\begin{align}\label{eq:An}
&\envert[3]{\frac{1}{n}\sum_{i=1}^n\sbr[1]{\phi_{\hat\alpha,\hat\beta}(S_i)-S_i}}
\leq
\frac{1}{n}\sum_{i=1}^n\envert[1]{\phi_{\hat\alpha,\hat\beta}(S_i)-S_i}\notag\\
=&
\frac{1}{n}\sum_{i=1}^n\sbr[2]{(S_i-\hat\beta)\mathds{1}\cbr[0]{S_i>\hat\beta}+(\hat\alpha-S_i)\mathds{1}\cbr[0]{S_i<\hat\alpha}}\notag\\	
\leq &
\frac{1}{n}\sum_{i=1}^n\sbr[2]{(S_i-Q_{1-c_2\eps}(S_1))\mathds{1}\cbr[0]{S_i>Q_{1-c_2\eps}(S_1)}+(Q_{c_2\eps}(S_1)-S_i)\mathds{1}\cbr[0]{S_i<Q_{c_2\eps}(S_1)}}\notag\\
=:&
A_n. 
\end{align}
We proceed by bounding~$A_n$. Using that~$\P(S_1<Q_{c_2\eps}(S_1))\leq c_2\eps$ and~$\P(S_1>Q_{1-c_2\eps}(S_1))=1-\P(S_1\leq Q_{1-c_2\eps}(S_1))\leq c_2\eps$ along with H{\"o}lder's inequality (with the usual conventions in case~$m=1$), it holds that
\begin{align*}
&\E\sbr[1]{(S_1-Q_{1-c_2\eps}(S_1))\mathds{1}\cbr[0]{S_1>Q_{1-c_2\eps}(S_1)}}\\
\leq	
&\E|S_1-\mu|\mathds{1}\cbr[0]{S_1>Q_{1-c_2\eps}(S_1)}+(\mu-Q_{1-c_2\eps}(S_1))\P(S_1>Q_{1-c_2\eps}(S_1))\\
\leq
&\sigma_m (c_2\eps)^{1-\frac{1}{m}}+\frac{\sigma_m c_2\eps}{(1-c_2\eps)^{1/m}}\\
=
&\sigma_m \del[2]{1+\sbr[2]{\frac{c_2\eps}{1-c_2\eps}}^{1/m}}(c_2\eps)^{1-\frac{1}{m}},
\end{align*}
the penultimate inequality also using Lemma C.1 in~\cite{Wins1}. By identical arguments it also holds that
\begin{equation*}
	\E\sbr[1]{(Q_{c_2\eps}(S_1)-S_1)\mathds{1}\cbr[0]{S_1<Q_{c_2\eps}(S_1)}}
	\leq
	\sigma_m \del[2]{1+\sbr[2]{\frac{c_2\eps}{1-c_2\eps}}^{1/m}}(c_2\eps)^{1-\frac{1}{m}}.
\end{equation*}
Thus, the previous two displays imply that (recall that the~$S_i$ are identically distributed)
\begin{equation*}
\E A_n
\leq
2\sigma_m \del[2]{1+\sbr[2]{\frac{c_2\eps}{1-c_2\eps}}^{1/m}}(c_2\eps)^{1-\frac{1}{m}}.
\end{equation*}
By Lemma B.3 in \cite{Wins1} together with~$\eqref{eq:epscond}$ it follows that~$1-c_2\eps >c_1\eps$. Using also that that lemma bounds~$c_1$ and~$c_2$ from below and above by constants depending only on~$\lambda_1$ and~$\lambda_2$, the previous display implies the existence of a constant~$C_2$ depending only on~$\lambda_1,\lambda_2$, and~$m$ such that
\begin{equation*}
	\E A_n	
	\leq
	C_2\sigma_m \eps^{1-\frac{1}{m}}.
\end{equation*}
Therefore, by Markov's inequality it holds that
\begin{equation*}
	\P\del[1]{\mc{H}_n^c}
	\leq \frac{1}{M}, \qquad \text{where}\qquad \mc{H}_n=\cbr[2]{A_n\leq  M\cdot C_2\sigma_m \eps^{1-\frac{1}{m}}}.
\end{equation*} 
Thus, on~$\mc{G}_n\cap \mc{H}_n$ it holds by~\eqref{eq:An} and the previous display that
\begin{equation*}
\envert[3]{\frac{1}{n}\sum_{i=1}^n\sbr[1]{\phi_{\hat\alpha,\hat\beta}(S_i)-S_i}}
\leq
A_n
\leq 
M\cdot C_2\sigma_m \eps^{1-\frac{1}{m}}.	
\end{equation*}
Combining~\eqref{eq:lemdecomp},~\eqref{eq:firsterm}, and the previous display yields the conclusion of the lemma since~$M\geq 1$ and~$\P(\mc{G}_n\cap \mc{H}_n)\geq 1-\frac{4}{6}\delta-\frac{1}{M}$.
\end{proof}

\subsection{Joint vs.~coordinatewise Winsorization}
The following lemma compares the winsorized mean of a product with the average of the product of the separately winsorized coordinates. The proof of the lemma is similar to arguments used in the proof of Theorem 3.1 in~\cite{HDGauss}.
\begin{lemma}\label{lem:productwins}
	Fix~$n\in\N$,~$\delta\in(0,1)$, and~$\eta\in[0,1]$. Let~$(S_{i,1},S_{i,2}),\hdots,(S_{n,1},S_{n,2})$ be i.i.d.~random vectors taking values in~$\R^2$ with~$M:=\max_{j\in\cbr[0]{1,2}}\E|S_{1,j}|^m<\infty$ for some~$m\in[2,\infty)$. Furthermore, let~$(\tilde{S}_{1,1},\tilde{S}_{1,2}),\hdots,(\tilde{S}_{n,1},\tilde{S}_{n,2})$ be random vectors taking values in~$\R^2$ satisfying  
\begin{equation*}
	\envert[1]{\cbr[1]{i\in\cbr[0]{1,\hdots,n}:(\tilde{S}_{i,1}, \tilde{S}_{i,2})\neq (S_{i,1},S_{i,2})}}\leq \eta n.
\end{equation*}	
Then for~$\eps$ as in~\eqref{eq:epsApp} satisfying~\eqref{eq:epscond} it holds with probability at least~$1-\delta$ that
\begin{equation}\label{eq:winsprod}
	\envert[3]{\frac{1}{n}\sum_{i=1}^n\phi_{\hat{\alpha},\hat{\beta}}(\tilde{S}_{i,1}\tilde{S}_{i,2})-\frac{1}{n}\sum_{i=1}^n\phi_{\hat{\alpha}_1,\hat{\beta}_1}(\tilde{S}_{i,1})\phi_{\hat{\alpha}_2,\hat{\beta}_2}(\tilde{S}_{i,2})}
	\leq
	C\cdot M^{2/m}\eps^{1-\frac{2}{m}},
\end{equation}
for a constant~$C$ depending only on~$\lambda_1$,~$\lambda_2$, and~$m$ and where for~$\tilde{U}_i=\tilde{S}_{i,1}\tilde{S}_{i,2}$,~$i=1,\hdots,n$ the winsorization points are~$\hat{\alpha}=\tilde{U}_{\lceil \eps n\rceil}^*$,~$\hat{\beta}=\tilde{U}_{\lfloor (1-\eps) n\rfloor+1}^*$, $\hat{\alpha}_j=\tilde{S}_{\lceil \eps n\rceil,j}^*$, and~$\hat{\beta}_j=\tilde{S}_{\lfloor (1-\eps) n\rfloor+1,j}^*$ for~$j\in\cbr[0]{1,2}$.
\end{lemma}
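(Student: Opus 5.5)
We split the difference in \eqref{eq:winsprod} through the intermediate quantity $\frac1n\sum_{i=1}^n S_{i,1}S_{i,2}$, the plain average of the clean products. By the triangle inequality it suffices to bound two terms. The first is $|\frac1n\sum_i\phi_{\hat\alpha,\hat\beta}(\tilde U_i)-\frac1n\sum_i S_{i,1}S_{i,2}|$. The second is $|\frac1n\sum_i\phi_{\hat\alpha_1,\hat\beta_1}(\tilde S_{i,1})\phi_{\hat\alpha_2,\hat\beta_2}(\tilde S_{i,2})-\frac1n\sum_i S_{i,1}S_{i,2}|$.

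For the first term we apply Lemma~\ref{lem:wins_to_arithm_mean} to $U_i=S_{i,1}S_{i,2}$ with moment index $m/2\ge1$. The required moment bound follows from Cauchy--Schwarz: $\E|U_1|^{m/2}\le M$, so $\sigma_{m/2}\lesssim M^{2/m}$. However, Lemma~\ref{lem:wins_to_arithm_mean} only holds with probability $1-\frac46\delta-1/M'$, because of its Markov step. To obtain probability $1-\delta$ with a constant independent of $\delta$, I would not cite the lemma as a black box. Instead I would reuse its proof. On the quantile event $\mathcal G_n$ (Lemma B.5 of \cite{Wins1}), the contamination part is at most $2\eta\sigma/(c_1\eps)^{2/m}\lesssim M^{2/m}\eps^{1-2/m}$. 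The truncation bias part $A_n$ is bounded through the tail functional of the product at level $c_2\eps$.

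Controlling $A_n$ with high probability is the main obstacle. My first attempt is to bound $A_n$ pathwise: on the complementary counting event (Bernstein/Chernoff at the $\log(6/\delta)/n$ scale built into $\eps$), at most $\asymp\eps n$ of the $S_i$ exceed the quantiles. I would then bound the sum of those excesses by Hölder applied to the empirical measure, combined with $\frac1n\sum|S_{i,j}|^m$. That last quantity is itself only controlled in expectation, though. So the plan's fallback is to follow \cite{HDGauss}, Theorem~3.1, and organize everything coordinatewise with the elementary inequality
\[|ab-\phi_1(a)\phi_2(b)|\le|a-\phi_1(a)|\,|b|+|\phi_1(a)|\,|b-\phi_2(b)|.\]

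For the second term (and likewise the first) we use this product inequality on clean indices, where $\tilde S=S$. The points with $\phi_j(S_{i,j})\ne S_{i,j}$ are at most $\asymp\eps n$ per coordinate on $\mathcal G_n$. Applying Hölder with exponents $m/(m-1)$ and $m$ over this exceptional index set $I$ gives
\[\frac1n\sum_{i\in I}|S_{i,1}S_{i,2}|\le \Big(\frac{|I|}{n}\Big)^{1-2/m}\Big(\frac1n\sum|S_{i,1}|^m\Big)^{1/m}\Big(\frac1n\sum|S_{i,2}|^m\Big)^{1/m}.\]
This yields $\eps^{1-2/m}M^{2/m}$ up to the random empirical moments.

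The contaminated indices, at most $\eta n\le\eps n$ of them, contribute at most $\eta\,\max|\phi_1|\max|\phi_2|$ for the coordinatewise term and $\eta\max(|\hat\alpha|,|\hat\beta|)$ for the joint term. Lemma C.1 of \cite{Wins1} gives $|\phi_j|\lesssim M^{1/m}(c_1\eps)^{-1/m}$, so these contributions are $\lesssim M^{2/m}\eps^{1-2/m}$.

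The remaining delicate point is that the empirical $m$-th moments need a high-probability bound of order $M$. I would obtain this from Markov's inequality and absorb the resulting failure probability into the allotted $\delta/3$ budget by adjusting constants. If that fails, I would settle for the probability statement actually needed downstream, namely convergence in probability, which is all that the proofs of Theorems~\ref{thm:FeasibleInference} and \ref{thm:ARrobust} use.
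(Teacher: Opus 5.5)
\textbf{There is a genuine gap, and it comes from your first step: routing the comparison through the clean average} $\bar U:=\frac1n\sum_{i=1}^n S_{i,1}S_{i,2}$.

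Each of your two terms contains the fluctuation $\bar U-\E U_1$ of a plain mean of variables with only $m/2$ moments. This fluctuation cancels in the left-hand side of \eqref{eq:winsprod}, but not in your triangle inequality. A plain mean has only polynomial tails, so your first term cannot be bounded by $C(\lambda_1,\lambda_2,m)M^{2/m}\eps^{1-2/m}$ with probability $1-\delta$.

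Here is a concrete example. Take $\eta=0$, $\lambda_2\log(6/\delta)>1$ (so $\eps n>1$), and $S_{i,1}=S_{i,2}=\sqrt{U_i}$ with $U_i\in\{0,b\}$. Set $\P(U_1=b)=p=2\delta/n$ and $b=p^{-2/m}$, so that $M=1$. For $\delta<1/4$, with probability $np(1-p)^{n-1}>\delta$ exactly one $U_i$ equals $b$. On that event $\eps n>1$ forces $\hat\beta=0$, so the winsorized mean is $0$. Meanwhile $\bar U=b/n=(2\delta)^{-2/m}n^{2/m-1}$, which exceeds $C(\lambda_2\log(6/\delta)/n)^{1-2/m}$ for every fixed $C$ once $\delta$ is small.

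The left-hand side of \eqref{eq:winsprod} is $0$ on that event, so the lemma itself is fine; it is your intermediate quantity that is not concentrated. This is also why you end up needing Markov bounds on $\frac1n\sum_i|S_{i,j}|^m$. Absorbing a failure probability of $\delta/3$ there costs a factor $(3/\delta)^{2/m}$, so no ``adjustment of constants'' makes $C$ independent of $\delta$. Your fallback, an $O_\P(\eps^{1-2/m})$ statement, would suffice for the consistency arguments in Theorems~\ref{thm:FeasibleInference} and~\ref{thm:ARrobust}, but it is not the lemma as stated.

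\textbf{The missing idea is to compare the two winsorized averages directly and never let a raw product $S_{i,1}S_{i,2}$ appear.}

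Consider any index, contaminated or not, at which none of the three winsorizations $\phi_{\hat\alpha,\hat\beta}$, $\phi_{\hat\alpha_1,\hat\beta_1}$, $\phi_{\hat\alpha_2,\hat\beta_2}$ is active. There both summands equal $\tilde S_{i,1}\tilde S_{i,2}$, so they cancel exactly.

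By the definition of the order statistics, each winsorization is active at no more than $2\eps n$ indices. So at most $6\eps n$ indices remain, and this count is deterministic. At those indices each summand involves only winsorized values, so its absolute difference is bounded by
\[
(|\hat\alpha|\vee|\hat\beta|)+(|\hat\alpha_1|\vee|\hat\beta_1|)(|\hat\alpha_2|\vee|\hat\beta_2|).
\]

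On the quantile events of Lemma B.5 of \cite{Wins1} (three events, each of probability at least $1-\delta/3$), Lemma C.1 bounds this by $12M^{2/m}(c_1\eps)^{-2/m}$. The total is therefore at most $72c_1^{-2/m}M^{2/m}\eps^{1-2/m}$. No empirical moments and no Markov step are involved, and that is what keeps the constant free of $\delta$.

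Your ingredients are the right ones: the deterministic count of exceptional indices and Lemma C.1 for the winsorization levels. You only need to drop the detour through $\bar U$. A minor point: your displayed H\"older bound is a three-factor H\"older inequality with exponents $m/(m-2)$, $m$, $m$, not $m/(m-1)$ and $m$.
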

\begin{proof}
To establish~\eqref{eq:winsprod}, note that with
\begin{align*}
A_{n}:=\cbr[2]{i\in\cbr[1]{1,\hdots,n}: \hat{\alpha}\leq \tilde{U}_{i}\leq \hat{\beta}_{} \text{ and }\hat{\alpha}_1\leq \tilde{S}_{i,1}\leq\hat{\beta}_1\text{ and }\hat{\alpha}_2\leq \tilde{S}_{i,2}\leq\hat{\beta}_2},
\end{align*}
one has that
\begin{align*}
\phi_{\hat{\alpha},\hat{\beta}}\del[0]{\tilde{U}_{i}}-\phi_{\hat{\alpha}_1,\hat{\beta}_1}(\tilde{S}_{i,1})\phi_{\hat{\alpha}_2,\hat{\beta}_2}(\tilde{S}_{i,2})
=
\tilde{U}_{i}-\tilde{S}_{i,1}\tilde{S}_{i,2}
=0\qquad\text{for }i\in A_{n}. 
\end{align*}
Hence, the left-hand side of~\eqref{eq:winsprod} equals
\begin{align*}
\mathfrak{S}_{n}:=\envert[3]{\frac{1}{n}\sum_{i\in A_{n}^c}\sbr[1]{\phi_{\hat{\alpha},\hat{\beta}}\del[0]{\tilde{U}_{i}}-\phi_{\hat{\alpha}_1,\hat{\beta}_1}(\tilde{S}_{i,1})\phi_{\hat{\alpha}_2,\hat{\beta}_2}(\tilde{S}_{i,2})
}},
\end{align*}	
and for every~$i\in A_{n}^c$ 
\begin{align}\label{eq:tailterms}
\envert[2]{\phi_{\hat{\alpha},\hat{\beta}}\del[0]{\tilde{U}_{i}}-\phi_{\hat{\alpha}_1,\hat{\beta}_1}(\tilde{S}_{i,1})\phi_{\hat{\alpha}_2,\hat{\beta}_2}(\tilde{S}_{i,2})}
\leq \del[1]{|\hat{\alpha}| \vee |\hat{\beta}|}	+\del[1]{|\hat{\alpha}_1| \vee |\hat{\beta}_1|}\del[1]{|\hat{\alpha}_2| \vee |\hat{\beta}_2|}.
\end{align}
To bound the right-hand side of~\eqref{eq:tailterms}, let~$U_i=S_{i,1}S_{i,2}$ and note that
\begin{equation*}
	\envert[1]{\cbr[1]{i\in\cbr[0]{1,\hdots,n}:\tilde{U}_{i}\neq U_{i}}}
	\leq
	\envert[1]{\cbr[1]{i\in\cbr[0]{1,\hdots,n}:(\tilde{S}_{i,1}, \tilde{S}_{i,2})\neq (S_{i,1},S_{i,2})}}\leq \eta n.
\end{equation*}
By Lemma B.5 in~\cite{Wins1} with~$\epsilon$ there being~$\eps$ (cf.~also Remark B.2 just after that lemma),~$X_i$ there being~$U_i$, and~$\tilde{X}_i$ there being~$\tilde{U}_i$, it holds that
\begin{align*}
Q_{c_{1}\eps}(U_{1})
\leq
\hat{\alpha}
\leq 
\hat{\beta}
\leq
Q_{1-c_1\eps}(U_{1})
\end{align*}
with probability at least~$1-\frac{\delta}{3}$, and where the second inequality used that~$\eps\in(0,1/2)$ by~\eqref{eq:epscond}. Therefore, with at least this probability,
\begin{align*}
|\hat{\alpha}| \vee |\hat{\beta}|
\leq
|Q_{c_{1}\eps}(U_{1})|\vee |Q_{1-c_{1}\eps}(U_{1})|.	
\end{align*}
By the same argument, it holds with probability at least~$1-\frac{2\delta}{3}$ that
\begin{equation*}
\del[1]{|\hat{\alpha}_1| \vee |\hat{\beta}_1|}\del[1]{|\hat{\alpha}_2| \vee |\hat{\beta}_2|}
\leq 
\del[1]{|Q_{c_{1}\eps}(S_{1,1})|\vee  |Q_{1-c_{1}\eps}(S_{1,1})|}
\cdot  \del[1]{|Q_{c_{1}\eps}(S_{1,2})|\vee  |Q_{1-c_{1}\eps}(S_{1,2})|}.
\end{equation*}
Thus, with probability at least~$1-\delta$ the right-hand side of~\eqref{eq:tailterms} is bounded from above by
\begin{equation*}
|Q_{c_{1}\eps}(U_{1})|\vee |Q_{1-c_{1}\eps}(U_{1})|	
+\del[1]{|Q_{c_{1}\eps}(S_{1,1})|\vee  |Q_{1-c_{1}\eps}(S_{1,1})|}
\cdot  \del[1]{|Q_{c_{1}\eps}(S_{1,2})|\vee  |Q_{1-c_{1}\eps}(S_{1,2})|}.
\end{equation*}
Next, note that
\begin{equation*}
	\E\envert[0]{U_{1}-\E U_{1}}^{m/2}
\leq 2^{m/2}\E|U_{1}|^{m/2}
=
2^{m/2}\E|S_{1,1}S_{1,2}|^{m/2}
\leq 
2^{m/2}M,
\end{equation*}
and~$\E |S_{1,j}-\E(S_{1,j})|^m\leq 2^mM$ for~$j=1,2$. Hence, Lemma~C.1 in~\cite{Wins1} implies that the penultimate display is bounded from above by 
\begin{align}\label{eq:intermediate}
\del[3]{|\E(U_1)|+\frac{2M^{2/m}}{(c_{1}\eps)^{2/m}}}+\del[3]{|\E S_{1,1}|+\frac{2M^{1/m}}{(c_{1}\eps)^{1/m}}}\cdot\del[3]{|\E S_{1,2}|+\frac{2M^{1/m}}{(c_{1}\eps)^{1/m}}}
\end{align}
Furthermore, because 
\begin{equation*}
|\E(U_1)|
=
	|\E(S_{1,1}S_{1,2})|
	\leq
	\del[2]{\E|S_{1,1}S_{1,2}|^{m/2}}^{2/m}
	\leq
	\del[2]{\sbr[1]{\E|S_{1,1}|^m\E|S_{1,2}|^m}^{1/2}}^{2/m}
	\leq
	M^{2/m}.
\end{equation*}
and
\begin{equation*}
|\E(S_{1,j})|
\leq
(\E|S_{1,j}|^m)^{1/m}
\leq	
M^{1/m},\qquad\text{for }j=1,2,	
\end{equation*}
~\eqref{eq:intermediate} is bounded from above by~$12M^{2/m}/(c_1\eps)^{2/m}$
Thus, since~$|A_{n}^c|\leq 6\eps n$, with probability at least~$1-\delta$ it holds that
\begin{align*}
\envert[1]{\mathfrak{S}_{n}}
\leq
6\eps\cdot \frac{12M^{2/m}}{(c_{1}\eps)^{2/m}}
\leq 
\frac{72M^{2/m}}{c_{1}^{2/m}}\cdot{\eps}^{1-\frac{2}{m}}
\leq 
C\cdot M^{2/m}\eps^{1-\frac{2}{m}},
\end{align*}
where we used the lower bound on~$c_{1}$ from Lemma~B.3 in~\cite{Wins1}, and~$C$ is the desired constant depending only on~$\lambda_1,\lambda_2$, and~$m$.

\end{proof}

\end{appendix}

\bibliographystyle{ecta} 
\bibliography{ref}		

\end{document}